\documentclass[acmsmall,nonacm]{acmart}

\usepackage{microtype}
\usepackage{enumitem}
\usepackage{amsthm}
\usepackage{caption}
\usepackage{booktabs}
\usepackage{tikz}
\usepackage{xcolor}

\newcommand{\Hist}{\mathcal{H}}
\newcommand{\Spec}{\mathsf{Spec}}
\newcommand{\drule}{\mathrel{:\!-}}
\newcommand{\snap}{\mathsf{snap}}
\newcommand{\seal}{\mathsf{seal}}
\newcommand{\seq}{\mathbin{\triangleright}}
\newcommand{\Ord}{\preceq}
\newtheorem{definition}{Definition}
\newtheorem{theorem}{Theorem}
\newtheorem{lemma}{Lemma}
\newtheorem{corollary}{Corollary}
\theoremstyle{remark}
\newtheorem{remark}{Remark}
\newtheorem{example}{Example}
\newtheorem{observation}{Observation}

\numberwithin{definition}{section}
\numberwithin{theorem}{section}
\numberwithin{lemma}{section}
\numberwithin{corollary}{section}
\numberwithin{remark}{section}
\numberwithin{example}{section}
\numberwithin{observation}{section}
\numberwithin{assumption}{section}

\title{Determination Provenance: From Ambiguity to Algebra}
\author{Joseph M. Hellerstein}
\affiliation{%
  \institution{University of California, Berkeley and Amazon Web Services}
  \country{USA}
}
\date{\today}

\begin{document}

\begin{abstract}
  Many data systems admit multiple admissible outcomes for the same
  input: concurrent transactions may serialize in one of many orders; a logic
  program may have multiple stable models.
  Classical data provenance cannot even pose its question in such
  settings---it explains how a result was derived, but only after
  something has chosen which result to produce.
  We introduce \emph{determination provenance} to track the
  commitments that resolve this ambiguity.
  A tuple's \emph{support} is the set of resolutions under which it
  holds.
  Supports form a commutative semiring, and layered commitments induce
  a \emph{filtration} measuring each tuple's \emph{query-relative
  depth}---how many layers of semantic resolution it depends on.
  Positive relational algebra respects the filtration, enabling
  compositional robustness analysis and quantitative diagnosis of
  resolution cost.
  We instantiate the framework for transactional isolation and for
  $\mbox{Datalog}^\neg$; in both, classical semantic variants
  (isolation levels; negation semantics) correspond to different views
  of a single shared filtration.
\end{abstract}

\maketitle

\section{Introduction}
\label{sec:intro}
Algebraic data provenance explains the outputs of a deterministic query---a \emph{function}
from an input database to an output relation~\cite{green2007provenance,suciu2011probabilistic}. 
But production data systems are often based on a more ambiguous
specification---a \emph{relation} that maps an input scenario to multiple admissible outputs.
Examples include transactional databases (multiple acceptable schedules),
distributed systems (multiple consensus decisions), and even 
Datalog with negation (multiple stable models). 
Classical data provenance cannot account for this ambiguity; it cannot even
formulate a question about a particular outcome (e.g., why vs.\ why not)
because the choice of outcome is undetermined.

This paper lays the algebraic foundation for provenance over
relational specifications.
We introduce
\emph{determination provenance}, a structure that captures the
commitments that resolve semantic ambiguity.
A \emph{determination} is a layered sequence of irrevocable,
history-indexed
commitment events that constrains a relational specification until
it determines a single outcome for each input---the semantic
counterpart of what a correct implementation achieves on each run.
Classical provenance applies within each determination; determination
provenance tracks which commitments were needed and how query results
depend on them.

\begin{example}[Provenance under transactional ambiguity]
  \label{ex:example}
  Consider a relation $S(k,y)$ with initial contents $\{S(0,b)\}$.
  Three transactions execute concurrently with a query:
  \[
    T_1\!: \textsf{ins } S(1,b);\qquad
    T_2\!: \textsf{ins } S(2,d);\qquad
    T_3\!: \textsf{del } S(2,d);\qquad
    T_Q\!: Q(y) \drule S(k,y).
  \]
  All four are concurrent; any serialization is possible.
  Value $b$ appears in $Q$'s result regardless of ordering
  (the initial tuple $S(0,b)$ is never deleted): $b$ is \emph{robust}.
  Value $d$ is \emph{contingent}: it appears in serializations where
  $T_Q$ observes $T_2$'s insert before any effective delete---e.g.,
  $T_2 \prec T_Q \prec T_3$, or $T_3 \prec T_2 \prec T_Q$ (where the
  delete precedes the insert and is a no-op).
  It is absent when the query precedes the insert or follows an
  effective delete.
  Before resolving this ambiguity, the
  provenance question for $d$ is undetermined: under one determination
  the question is \emph{why}; under another it is \emph{why-not}.
  Section~\ref{sec:transactions} formalizes this example.
\end{example}

\noindent
Each concrete execution corresponds to one determination (the
commitments that the system actually made).
But $b$'s robustness and $d$'s contingency are properties of the
\emph{space} of all possible determinations, not of any single one.
To reason about such properties algebraically, we track each tuple's
\emph{support}: the set of determinations under which it holds.
Value $b$ has full support (every determination); $d$ has partial
support (the determinations in which $T_Q$
observes $T_2$'s insert).
Supports form a Boolean semiring (join intersects supports; union
merges them), and the familiar power of algebraic provenance transfers:
robustness reduces to checking whether support is full;
counterfactual questions are read from the support; quantitative
diagnosis from its measure.

Determination provenance provides richer structure than this flat
semiring, however.
When commitments are layered---some depending on others' outcomes---the
semiring carries a \emph{filtration}: a chain of sub-semirings
reflecting how much resolution each tuple requires.
A robust or impossible tuple has depth~$0$; a contingent tuple has
positive depth reflecting which layer it first depends on.

The key property on filtrations: positive RA cannot increase a result tuple's depth
beyond the maximum of its inputs.
This enables \emph{depth bounds} (queries over shallow tuples are
unaffected by deeper layers), \emph{semantic-change analysis}
(comparing specifications tuple-by-tuple), and a \emph{complexity
connection} (depth-$1$ coincides with PDB lineage; higher depths
correspond to layered PDB evaluation).
We instantiate the framework for transactional isolation and
$\mbox{Datalog}^\neg$; in both, classical semantic variants
(isolation levels; negation semantics) are different views
of a single filtration.

\paragraph{Contributions.}
Determination provenance provides new layered algebraic structure,
expressivity, and applicability:
\begin{enumerate}[label=(\roman*),nosep,leftmargin=*]
  \item The \emph{determination semiring} and its \emph{filtration}:
        supports over resolving determinations form a commutative
        semiring whose layered structure is reflected as a chain of
        sub-semirings that query evaluation respects
        (Sections~\ref{sec:det-semiring}--\ref{sec:filtration}).
  \item \emph{Instantiation for transactional systems}: isolation levels
        (RC, SI, SER) are different resolving subsets of a shared
        filtration; determination depth is $\Theta(n)$ worst-case
        (in the number of transactions) with scheduling discretion, and SER/SI are incomparable in
        per-tuple query-relative depth
        (Section~\ref{sec:transactions}).
  \item \emph{Instantiation for $\mbox{Datalog}^\neg$}: for programs
        with stable models, negation semantics (stratified,
        well-founded, stable) are different reading depths of a shared
        filtration; monus elimination shows that sealing is
        support-equivalent to semirings with monus for stratified
        programs; the determination framework also handles unstratified
        negation, where monus does not provide a single resolved
        semantics
        (Section~\ref{sec:negation-preview},
        Appendix~\ref{sec:appendix-datalog}).
  \item \emph{Quantitative measures and semantic-change analysis}:
        specifications over the same basis can be compared
        tuple-by-tuple, quantifying \emph{work regret}
        vs.\ genuine \emph{semantic shift}
        (Section~\ref{sec:discussion}).
  \item \emph{Complexity and attribution} (appendices):
        robustness is coNP-complete; a
        Shapley-value measure attributes contingency to individual
        commitments (Appendices~\ref{app:robustness-proofs},~\ref{app:responsibility-details}).
\end{enumerate}

\section{Preliminaries: Ambiguous Semantics and Refinement}
\label{sec:prelim}

We now formalize the setting: specifications with multiple admissible
outcomes and the irrevocable choice structures---called
\emph{determinations}---that restrict ambiguity until meaning is
determinate.

\subsection{Histories}
Histories are used to model ambiguity in distributed systems~\cite{lamport};
we use them here as well.
\begin{definition}[History]
  A \emph{history} is a finite partially ordered set
  $
    H = (E, \rightarrow),
  $
  where $E$ is a set of events and $\rightarrow \subseteq E \times E$ is a strict
  (irreflexive and transitive) partial order representing known precedence constraints among events.
\end{definition}

\noindent
A history records which events have occurred and their precedence.
The class $\Hist$ of all histories represents all possible states across
all possible runs.
When the partial order is total, a history is a sequence (e.g., a
transaction schedule).

To model how a history evolves, we define an extension relation. An extension may add new events and precedence edges involving them, but
may not revise the past of any existing event.

\begin{definition}[History Extension]
  For histories $H_1=(E_1,\rightarrow_1)$ and $H_2=(E_2,\rightarrow_2)$, we write
  $H_1 \sqsubseteq H_2$ if:
  \begin{itemize}[nosep,leftmargin=*,label={}]
    \item (1)~$E_1 \subseteq E_2$, 
          \quad and
          \quad (2)~$\rightarrow_1 = \rightarrow_2 \cap (E_1 \times E_1)$, 
          \quad and
          \quad (3)~no event in $E_1$ has a predecessor in $E_2 \setminus E_1$.
  \end{itemize}
\end{definition}

\subsection{Outcomes and Specifications}
An \emph{outcome} is an element of a set $O$ of externally
observable results (e.g., a query result, a database state, or a model of a
logic program).
The chosen outcome domain must include a partial order
$\Ord$~\cite{hellerstein2025coordinationcriterion}.
When $o_1 \Ord o_2$, replacing outcome $o_1$ with $o_2$ is not considered
to rule out $o_1$ but rather to \emph{refine} it---$o_2$ adds detail
without contradicting $o_1$.
The order $\Ord$ is part of the specification---a modeling choice, not
derived from execution structure. Changing $\Ord$ changes the
specification (enlarging $\Ord$ makes more outcomes compatible,
reducing the commitments needed to resolve ambiguity).

\begin{definition}[Specification]
  A \emph{specification} is a triple $(\Hist, O, \Ord)$ together with a
  relation $\Spec \subseteq \Hist \times O$
  associating each history with a set of \emph{admissible outcomes}.
  We write $\Spec(H) \triangleq \{o \in O \mid (H,o) \in \Spec\}$ for the
  admissible set at~$H$.
\end{definition}

\begin{example}[Specification for Example~\ref{ex:example}]
  \label{ex:spec}
  $O$ is the set of decision traces (commit/abort sequences for each
  transaction); $\Ord$ is trace-prefix extension (a longer trace
  refines a shorter one); $\Spec(H)$ is the set of traces consistent
  with some serialization of the concurrent transactions in $H$.
\end{example}

\noindent
As noted in the introduction, a specification is in general a
relation; the goal of determination is to turn it into a function
(a unique outcome per history).
Since provenance analysis is retrospective---the full history is
available---we define the \emph{resolved outcome} at a history $H$ as
the maximum element of $\Spec(H)$ under $\Ord$.

\begin{definition}[Determined specification]
  \label{def:determined}
  A specification is \emph{determined} at $H$ if $\Spec(H)$ forms a
  chain under $\Ord$ with a maximum element (all outcomes are pairwise
  comparable and a greatest one exists).
  For a determined history, we write $\Spec(H)$ as shorthand
  for $\max_{\Ord} \Spec(H)$ (the resolved outcome).
  A specification is \emph{determined} if it is determined at every
  $H \in \Hist$.
\end{definition}

\noindent
A determined specification defines a function from histories to
outcomes; classical provenance applies directly.
Ambiguity requiring determination arises only when $\Spec(H)$ contains
$\Ord$-incomparable outcomes (as in Example~\ref{ex:spec}, where
multiple serializations yield incomparable decision traces).

\subsection{Commitments}

Ambiguity is constrained via additional semantic commitments.
Commitments are operators that narrow the admissible set;
applying a commitment at a specific history produces a
\mbox{\emph{commitment event}.}

\begin{definition}[Commitment and commitment event]
  \label{def:commitment}%
  A \emph{commitment} $\varphi$ is an operator that can be applied at a
  history $H$ to produce an extended history $H \cdot \varphi$ with
  $\Spec(H \cdot \varphi) \subseteq \Spec(H) \wedge \Spec(H \cdot \varphi) \neq \emptyset$.
  The effect of $\varphi$---which outcomes it excludes---depends on the
  history at which it is applied; a commitment has no
  history-independent semantics.
  The resulting event in $H \cdot \varphi$ is a \emph{commitment event}:
  the irrevocable record that $\varphi$ was applied at this point.
\end{definition}

\noindent
Here $H \cdot \varphi$ denotes the history obtained by appending
the commitment event after all maximal events of $H$ (giving
$H \sqsubseteq H \cdot \varphi$).
Most commitments strictly shrink the admissible set; a
\emph{sealing commitment} $\varphi_{\seal(S)}$ may leave it unchanged,
merely declaring a set of events $S$ complete and enabling subsequent
commitments that depend on that completeness.

Because a commitment event succeeds all maximal events, it induces a
\emph{cut} in the history poset, ensuring that commitment events form a
total order within any history.
Applied sequentially, commitments monotonically narrow the admissible
set:
$\Spec(H) \supseteq \Spec(H \cdot \varphi_1) \supseteq
\Spec(H \cdot \varphi_1 \cdot \varphi_2) \supseteq \cdots$.
In general, the effect of $\varphi$ may depend on the
full admissible set at its point of application, not only on individual
outcomes---this is what produces non-commutative commitments.

\begin{definition}[Commitment basis]
  \label{def:commitment-basis}
  A \emph{commitment basis} $\Phi$ is a set of commitments (operators)
  from which determinations are formed.
  Different bases for the same specification yield different determination
  structures and different provenance.
\end{definition}

\noindent
The choice of commitment basis is a modeling decision as fundamental as
the choice of outcome space ($O$, $\Ord$).
Given an ambiguous specification (a relation), implementations select a
single outcome on each run; the commitment basis lifts the strategies
that such implementations use into a semantic domain, enabling
reasoning across different implementations and their nondeterminism.
(In Example~\ref{ex:example}, $\Phi = \{\varphi_{T_i \prec T_j}\}$:
the set of all pairwise ordering operators, capturing shared
aspects of 2PL, OCC and MVTO.)

\paragraph{Three forces on a history.}
\label{sec:three-forces}%
A history evolves under three
forces~\cite{hellerstein2025complexity}:
\emph{environment events} extend the history but are not chosen;
\emph{commitments} replace the admissible set with a subset
(irrevocably constraining which outcomes remain possible);
\emph{entailments} refine the observed outcome upward in $\Ord$
without excluding any alternative (they add detail that is compatible
with every remaining admissible outcome).
In transactional systems, operation interleavings are environment and
commit/abort decisions are commitments; in $\mbox{Datalog}^\neg$, EDB
facts are environment, sealing and choice predicates are commitments,
and the alternating fixpoint's classifications are entailments.

\subsection{Determinations}

Individual commitments typically resolve only part of the ambiguity.
A full resolution requires a collection of them:

\begin{definition}[Determination]
  \label{def:determination}%
  The \emph{determination} of a history $H$ over a commitment basis
  $\Phi$ is the subsequence of commitment events in $H$, listed in the
  order induced by $\rightarrow$:
  $D(H) = \varphi_1 \cdot \varphi_2 \cdot \cdots \cdot \varphi_m$.
  Non-commitment events may occur between successive commitments; the
  effect of each $\varphi_i$ depends on the full history at its point
  of application.
\end{definition}

\noindent
A history $H$ is \emph{resolved} if $\Spec(H)$ is determined
(Definition~\ref{def:determined}).
Note that $\Spec$ itself is a fixed relation---it does not change as
history grows.
Commitment events can only narrow: $\Spec(H \cdot \varphi) \subseteq
\Spec(H)$ by definition.
Non-commitment events may change the admissible set in either direction;
they provide the context in which commitments take effect (and hence
why determinations are history-indexed).
When the history is clear from context we write $D$ for $D(H)$.

\begin{example}[Determinations for Example~\ref{ex:example}]
  \label{ex:det-intro}
  The ordering decisions $\varphi_{T_2 \prec T_Q}$ and
  $\varphi_{T_Q \prec T_3}$ are commitments: each excludes
  serializations incompatible with the chosen order.
  The determination $D_{\mathsf{in}} = \varphi_{T_2 \prec T_Q} \cdot
  \varphi_{T_Q \prec T_3}$ resolves the specification so that $T_Q$
  observes $T_2$'s insert; $D_{\mathsf{out}} = \varphi_{T_2 \prec T_3}
  \cdot \varphi_{T_3 \prec T_Q}$ resolves it so that $d$ is absent.
\end{example}

\paragraph{Layer-sequencing notation.}
Commitments in a determination are sequenced for two reasons.
First, non-commitment events may intervene: in
$H \cdot \varphi \cdot E \cdot \psi$, the commitment $\psi$ may depend
on event $E$, so $\varphi$ and $\psi$ cannot be reordered.
Second, even contiguous commitment events may not commute:
$\Spec(H \cdot \varphi \cdot \psi) \neq \Spec(H \cdot \psi \cdot \varphi)$
when one commitment's effect depends on the other's exclusions.
A \emph{layer} is a maximal set of contiguous commitment events (no
intervening non-commitment events) that commute pairwise.
We use $\seq$ to denote a sequence of such layers:

\begin{definition}[Sequencing of layers (notation)]
  \label{def:layer-seq}
  Let $L_1,\ldots,L_k$ be multisets of commitments.
  We write $D = L_1 \seq L_2 \seq \cdots \seq L_k$ to indicate that
  determination $D$'s commitments are organized into $k$ layers, applied
  in order with non-commitment events possibly intervening between
  layers.
  Within each $L_i$, commitments commute; across layers, order
  matters.
\end{definition}

\noindent
The \emph{depth} of a determination $L_1 \seq \cdots \seq L_k$ is $k$
(the number of layers).

\begin{definition}[Resolving determination]
  A determination $D(H)$ is \emph{resolving} if $H$ is resolved
  (Definition~\ref{def:determined}).
\end{definition}

\begin{definition}[Minimal resolving determination]
  \label{def:minimal-resolving}
  A resolving determination $D(H)$ is \emph{minimal} if removing any
  single commitment event from $H$ (while retaining all other events
  in their original order) yields a history that is not resolved.
\end{definition}

\subsection{Resolution Enables Provenance}
\label{sec:mono-theorem}

Classical semiring provenance annotates base tuples and propagates
annotations through query evaluation~\cite{green2007provenance}.
This applies directly to any resolved history: the determination
selects a concrete outcome, and standard evaluation traces its
derivation.
The question is what happens when the specification is \emph{not}
resolved---when incompatible outcomes coexist.

\begin{theorem}[Classical provenance is pointwise in determinations]
  \label{thm:resolution-implies-monotone}
  Classical semiring provenance (a single $K$-relation) correctly
  represents a specification $\Spec$ iff all resolving determinations
  produce the same tuple membership and derivational annotation.
  When two resolving determinations $D_1, D_2$ disagree on some tuple
  $t$ ($P_{D_1}(t) \neq P_{D_2}(t)$), any sound representation must
  distinguish $D_1$ from $D_2$.
\end{theorem}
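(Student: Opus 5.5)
The plan is to make precise what ``correctly represents'' means and then prove the two directions of the biconditional separately. I take a single $K$-relation $R$ to \emph{correctly represent} $\Spec$ if, for every resolving determination $D$ (equivalently, every resolved history $H$ with $D = D(H)$), the annotation $R(t)$ agrees with the classical provenance $P_D(t)$ computed over the resolved outcome $\Spec(H)$. Here $P_D(t)$ is the annotation obtained by standard semiring evaluation over the database selected by the resolved outcome. Its support, namely whether $P_D(t) \neq 0$, gives tuple membership. This reading matches the remark before the theorem that classical provenance ``applies directly to any resolved history''. It is legitimate because, by Definition~\ref{def:determined}, a resolved history yields a unique outcome $\max_{\Ord}\Spec(H)$, so $P_D$ is well defined.

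For the ($\Leftarrow$) direction, assume every resolving determination yields the same map $t \mapsto P_D(t)$. Define $R(t)$ to be this common value. Then $R$ is a single $K$-relation, and it agrees with $P_D$ for every resolving $D$ by construction, so it correctly represents $\Spec$.

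For the ($\Rightarrow$) direction, suppose $R$ correctly represents $\Spec$. Then $R(t) = P_{D_1}(t)$ and $R(t) = P_{D_2}(t)$ for any two resolving $D_1, D_2$. Hence $P_{D_1}(t) = P_{D_2}(t)$ for all $t$, which covers both membership and annotation.

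The second sentence of the theorem is the contrapositive, generalized. A ``sound representation'' is any map $\rho$ from the available information to annotations that agrees with $P_D$ on each resolving $D$. If $\rho$ did not distinguish $D_1$ from $D_2$, meaning it assigned both the same annotation for $t$, then $P_{D_1}(t) = \rho(t) = P_{D_2}(t)$, contradicting the hypothesis. So $\rho$ must be a function of (at least) the equivalence class of $D$ under $D \sim D' \iff P_D = P_{D'}$. This motivates indexing annotations by determinations, i.e.\ supports.

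I expect the main obstacle to be definitional rather than technical: pinning down ``correctly represents'' and ``sound'' so that the statement is neither vacuous nor false. In particular, I must ensure that $P_D$ depends only on the resolved outcome, and not on the particular history realizing $D$. Otherwise two histories with the same $D$ could disagree. I would handle this by indexing $P$ by resolved histories and noting that, by Definition~\ref{def:determination}, $D(H)$ together with the non-commitment events determines $H$. The argument is then the same pigeonhole-style step: a single value cannot equal two distinct values.
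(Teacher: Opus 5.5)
Your proposal is correct and matches the paper's argument: necessity holds because a single annotation cannot agree with two determinations when $P_{D_1}(t) \neq P_{D_2}(t)$, and sufficiency holds because the common $K$-relation is sound and complete for all resolving determinations. The difference is in presentation. The paper justifies the impossibility by noting that semiring provenance expresses only positive dependence on base facts. You instead derive it as a pigeonhole consequence of your explicit exact-agreement definition. Your well-definedness concern is handled in the paper by taking elements of $\mathcal{D}$ to be resolved histories identified up to outcome and provenance equivalence, so indexing by resolved histories suffices.
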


\begin{proof}
  Suppose two resolving determinations $D_1, D_2$ disagree: some tuple
  $t$ satisfies $P_{D_1}(t) \neq P_{D_2}(t)$ (where inequality
  includes one being $0$).
  Since semiring provenance expresses only positive dependence on base
  facts~\cite{green2007provenance}, no single annotation can
  explain both presence under $D_1$ and absence under $D_2$.
  Hence any sound representation must distinguish $D_1$ from $D_2$.
  Conversely, if all resolving determinations agree, the common
  $K$-relation is sound and complete for all of them.
\end{proof}

\section{The Algebra of Determinations}
\label{sec:conditional-semiring}

Classical provenance owes its power to algebraic structure: derivations
compose through a commutative semiring, and query evaluation preserves
that structure~\cite{green2007provenance}.
One might expect that layered determinations---sequences of non-commuting
commitments---resist such clean algebraic treatment.
In fact, supports over resolving determinations form a commutative
semiring, and the layering within each determination is reflected as a
filtration of sub-semirings that query evaluation respects.

Throughout, we fix a specification $\Spec$ over a history class $\Hist$ and
a commitment basis $\Phi$.
Elements of $\mathcal{D}_{\Spec,\Phi}$ (abbreviated $\mathcal{D}$) are
resolved histories with minimal determinations, identified up to
outcome equivalence: $H_1 \sim H_2$ when $D(H_1)$ and $D(H_2)$ produce
the same resolved outcome and the same conditioned provenance for every
tuple.
Concretely, each element of $\mathcal{D}$ is a representative resolved
history $H$ whose determination $D(H)$ is minimal
(Definition~\ref{def:minimal-resolving}).
Unlike abstract possible worlds, elements of $\mathcal{D}$ are
grounded in event structure: the same commitment may have different
effects at different histories, so determinations are history-indexed
records, not symbolic labels.

\subsection{The Determination Semiring}
\label{sec:det-semiring}
\label{sec:algebra}

Fix a commutative semiring $(K, +, \cdot, 0, 1)$ for derivational provenance
(e.g., the polynomial semiring $\mathbb{N}[X]$ of~\cite{green2007provenance}).
A \emph{$K$-relation} assigns each tuple a value in $K$: alternative
derivations combine by $+$, joint use of facts by $\cdot$, absence by $0$.
For a determination $D \in \mathcal{D}$ extracted from a resolved
history $H$, the \emph{conditioned provenance} $P_D(t) \in K$ is the
classical semiring provenance of tuple $t$ computed over the resolved
outcome of $H$.

Conditioned provenance suffices for explaining a single execution.
But some questions require reasoning \emph{across} determinations:
Does this result hold under every admissible commit ordering, or is it
contingent on a particular one?
Answering such questions requires comparing provenance across multiple
determinations.

\paragraph{Observables and provenance queries.}
To connect outcomes to provenance, we fix a relational schema
$\mathbf{S}$ and an \emph{observation function}
$\mathsf{obs}: O \to \mathrm{Inst}(\mathbf{S})$ mapping each outcome
to a database instance.
In our $\mbox{Datalog}^\neg$ examples, $\mathsf{obs}$ is the identity
(outcomes are models, i.e., database instances); for transactions,
$\mathsf{obs}$ maps a decision trace to the database state that
results from applying committed transactions' writes.
A \emph{provenance query} is a positive relational algebra expression
over $\mathbf{S}$; conditioned provenance is the familiar
$K$-relational evaluation of Green et
al.~\cite{green2007provenance} applied to $\mathsf{obs}(\Spec(H))$.
This is where the framework gains expressive power beyond
per-commitment analysis: provenance queries can select individual
tuples or join across transactions --- enabling fine-grained and
cross-transaction sensitivity analysis.
Throughout, ``tuple'' refers to a tuple in $\mathsf{obs}(o)$ or in a
query result over it.

\begin{definition}[Determination Provenance]
  \label{def:det-provenance}
  Fix a specification $\Spec$ and a provenance query $Q$ over
  $\mathbf{S}$.
  Each determination $D \in \mathcal{D}$ corresponds to a resolved
  history $H$; evaluating $Q$ over $\mathsf{obs}(\Spec(H))$ yields a
  $K$-annotation for each tuple $t$.
  The \emph{determination provenance} of $t$ is the function
  \[
    P_{\Spec}(t) : \mathcal{D} \to K
  \]
  mapping each determination to the conditioned provenance $P_D(t)$
  ($= 0_K$ when $t$ is absent).
  When $\Spec$ is clear from context, we write $P(t)$.
\end{definition}

\noindent
Determination provenance is an element of the product semiring
$K^{\mathcal{D}}$ (pointwise addition and multiplication):
it assigns each $D \in \mathcal{D}$ a semiring value
recording how $t$ is derived under $D$.
Conditioned provenance is the special case of evaluating at a single determination:
$P(t)(D) = P_D(t)$.
Classical provenance is the special case where $|\mathcal{D}| = 1$
(the specification is already determined; no commitments needed).

\begin{definition}[Support and the determination semiring]
  The \emph{support} of a tuple $t$ is
  $\mathrm{supp}(P(t))
    \triangleq
    \{\, D \in \mathcal{D} \mid P(t)(D) \neq 0 \,\}$:
  the set of determinations under which $t$ holds.
  Supports form a Boolean algebra
  $(2^{\mathcal{D}}, \cup, \cap, \emptyset, \mathcal{D})$
  under the operations induced by query evaluation (join intersects
  supports; union merges them).
\end{definition}

\noindent
This algebra is elementary; the non-trivial structure is the
\emph{filtration} (Section~\ref{sec:filtration}), which reflects the
layered commitment process.


\begin{example}[Determination provenance for Example~\ref{ex:example}]
  \label{ex:det-semiring}
  Two resolved histories arise from Example~\ref{ex:example}: one in
  which $T_2$'s insert is visible to $T_Q$ (determination
  $D_{\mathsf{in}}$) and one in which $T_3$'s delete precedes $T_Q$
  (determination $D_{\mathsf{out}}$).
  With $\mathcal{D} = \{D_{\mathsf{in}}, D_{\mathsf{out}}\}$:
  \[
    P(b) = \{(D_{\mathsf{in}},\, x_0),\; (D_{\mathsf{out}},\, x_0)\}
         = \mathcal{D} \times \{x_0\},
    \qquad
    P(d) = \{(D_{\mathsf{in}},\, x_2)\}.
  \]
  Here $x_0$ annotates the initial tuple $S(0,b)$ and $x_2$ annotates
  $T_2$'s insert of $S(2,d)$.
  $P(b)$ has full support ($\mathcal{D}$): $b$ is robust.
  $P(d)$ has partial support ($\{D_{\mathsf{in}}\}$): $d$ is contingent.
  Composition operates on both components: a join intersects supports
  (multiplies $K$-values pointwise); a union takes their union (adds
  $K$-values pointwise).
\end{example}

\paragraph{Computational cost.}
When $\mathcal{D}$ is finite, $|\mathcal{D}|$ may be exponential in the
number of commitments (each binary commitment doubles the space).
Naively enumerating supports is therefore impractical.
For single-layer determinations, supports admit compact representation
as positive Boolean formulas over commitment variables
(Proposition~\ref{prop:collapse-depth-1}).

\subsection{Filtration of the Determination Semiring}
\label{sec:filtration}

The Boolean algebra on supports tells us \emph{whether} a tuple is
contingent, but not \emph{why}---which commitment layer it depends on.
Every determination $D$ is a sequence of commitment events
(Definition~\ref{def:determination}); within that sequence, a
\emph{layer} is a maximal contiguous set of commitments that commute
pairwise in the history at which they are applied.
The layers of $D$ are its Foata normal
form~\cite{mazurkiewicz1977concurrent}: the coarsest grouping into
commuting stages.
(In general, commutativity may be dynamic---dependent on the current
admissible set---making canonical layering subtle; see
Appendix~\ref{app:open-questions}.
In both instantiations below, commutativity is history-independent,
so the layer structure is unambiguous.)
Write $L_1(D), L_2(D), \ldots$ for the successive layers of $D$;
the \emph{depth} of $D$ is its number of layers.

\begin{definition}[Level-$k$ agreement]
  \label{def:level-k-agreement}
  Two determinations $D, D' \in \mathcal{D}$ \emph{agree at level $k$},
  written $D \equiv_k D'$, if they have the same history through
  layer~$k$ (same events, same commitments, same order, including all
  intervening non-commitment events).
\end{definition}

\noindent
Each $\equiv_k$ is an equivalence relation; classes refine as $k$
increases.

\begin{definition}[Filtration]
  \label{def:filtration}
  A set $S \subseteq \mathcal{D}$ is a \emph{level-$k$ support} if it
  is a union of $\equiv_k$ classes (closed under level-$k$ agreement).
  Define $\mathcal{F}_k \triangleq \{S \subseteq \mathcal{D} \mid
  S \text{ is a level-}k\text{ support}\}$.
  This yields a filtration:
  $\mathcal{F}_0 \subseteq \mathcal{F}_1 \subseteq \cdots$, with
  $\mathcal{F}_0 = \{\emptyset, \mathcal{D}\}$ and
  $\mathcal{F}_d = 2^{\mathcal{D}}$ once all determinations are
  distinguished.
\end{definition}

\noindent
A tuple with qdepth~$0$ is robust (full support) or impossible (empty
support); positive qdepth reflects which layer first differentiates
the tuple's membership across determinations.

\begin{observation}[Shared filtration across specifications]
  \label{obs:basis-entails-filtration}
  When a domain's conflict structure determines a natural commitment
  basis $\Phi$ (ordering and abort operators for transactions;
  sealing and choice operators for $\mbox{Datalog}^\neg$), different
  specifications over that basis share a single filtration.
  For transactions, isolation levels (RC, SI, SER) are different
  resolving subsets within one filtration.
  For $\mbox{Datalog}^\neg$, negation semantics (stratified,
  well-founded, stable) are different reading depths of one
  filtration.
  In both cases, the filtration is a property of the workload's
  conflict or dependency structure, not of the policy imposed on it.
\end{observation}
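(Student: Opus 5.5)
The plan is to make the observation precise by showing that the filtration $\mathcal{F}_0 \subseteq \mathcal{F}_1 \subseteq \cdots$ of Definition~\ref{def:filtration} is built only from data fixed by the basis $\Phi$ and the event structure. Once that is established, specifications over the same basis can differ only in \emph{which} determinations count as resolving, or in \emph{how deep} we read. Concretely, I would fix a domain and its natural basis $\Phi$ and form the space $\mathcal{D}_\Phi$ of all histories whose commitment events come from $\Phi$, ordered by $\sqsubseteq$, before imposing any particular $\Spec$. The first step is to check that the layer decomposition $L_1(D), L_2(D),\ldots$ depends only on $\Phi$ and the history. The paper grants that commutativity is history-independent in both instantiations, so the Foata normal form is determined by the conflict/dependency relation among commitments. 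It follows that $\equiv_k$ of Definition~\ref{def:level-k-agreement}, which compares histories through layer~$k$, is defined on $\mathcal{D}_\Phi$ without reference to any $\Spec$. Hence the chain $\mathcal{F}_k$ is defined once for the basis.

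The second step is the restriction lemma. Suppose a specification $\Spec'$ over $\Phi$ has resolving determinations $\mathcal{D}' \subseteq \mathcal{D}_\Phi$. I would show that $\mathcal{F}'_k = \{S \cap \mathcal{D}' : S \in \mathcal{F}_k\}$, that is, the induced filtration on a subset is the trace of the shared one. This is immediate, because $\equiv_k$ restricted to $\mathcal{D}'$ is again ``same history through layer $k$''. Supports and qdepths under $\Spec'$ are therefore obtained by intersecting with $\mathcal{D}'$, and qdepth is computed against the same layer indices.

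The third step instantiates this. For transactions, I would take $\Phi$ to consist of the ordering operators $\varphi_{T_i\prec T_j}$ plus abort operators. I would then argue that RC, SI and SER each admit a set of resolving histories over exactly this $\Phi$: SER requires an acyclic total order, SI tolerates write-skew orders, and RC resolves per statement. By the restriction lemma, these are distinct subsets $\mathcal{D}'$ of one $\mathcal{D}_\Phi$. For $\mbox{Datalog}^\neg$, I would take $\Phi$ to consist of sealing commitments per stratum/SCC plus choice commitments. Stratified semantics reads the layers of seals. Well-founded semantics reads the same seals together with the entailments of the alternating fixpoint, which are not commitments and so add no layers. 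Stable semantics additionally reads the choice layer. In each case the semantics is a prefix depth $k$ of the same chain, so the corresponding support algebra is a subfield of $\mathcal{F}_k$ for that $k$.

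The main obstacle is the second instantiation's claim that well-founded semantics is a ``reading depth'' rather than a different subset. The unknown atoms there are not resolved by any commitment, so I would first try to identify the well-founded model with what holds on every extension of the sealed prefix, i.e.\ with the intersection over an $\equiv_k$ class. A second, smaller worry is the transactional case, where I must verify that RC's per-statement snapshots really are expressible with the same ordering operators and do not need a finer basis. If they are not, the observation holds only after refining $\Phi$ to a common basis, and I would state it in that form.
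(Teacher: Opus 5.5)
You have a genuine gap in each instantiation. The first is an unsupported step; the second is a false identification.

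\textbf{Spec-independence of the layering.} Your first step does not follow. In this paper, commutation is defined through the specification: $\varphi$ and $\psi$ commute when $\Spec(H\cdot\varphi\cdot\psi)=\Spec(H\cdot\psi\cdot\varphi)$. Remark~\ref{rem:uniform-layering} grants only that, for a fixed specification, this relation does not vary with the history. It says nothing about invariance across specifications.

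The paper's own transactional examples in fact use different commitments and layerings per isolation level: none under RC, abort layers under SER, and snapshot and FCW layers under SI (Figure~\ref{fig:det-tables}). Minimality and outcome-equivalence, which define the elements of each $\mathcal{D}_I$, are spec-dependent as well. So you have established neither the spec-independence of $\equiv_k$ nor the containment $\mathcal{D}_I\subseteq\mathcal{D}_\Phi$ that your restriction lemma presupposes.

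The paper does not derive these either. It stipulates an ambient space $\mathcal{D}^\star_\Phi$ whose layering is fixed by the basis, lets each specification select $\mathcal{D}_I\subseteq\mathcal{D}^\star_\Phi$, and extends supports by zero. You should state that stipulation explicitly rather than derive it from history-independence.

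Your trace filtration is also not the paper's convention. Under zero-extension, a tuple robust under $I$ has support $\mathcal{D}_I$, and hence positive ambient qdepth whenever $\mathcal{D}_I\neq\mathcal{D}^\star_\Phi$. Under your trace its qdepth is $0$. Yours is the convention that matches how $\mathrm{qdepth}_{\mathrm{SER}}$ and $\mathrm{qdepth}_{\mathrm{SI}}$ are actually computed in Proposition~\ref{prop:ser-si-incomparable}. The paper's convention buys a single carrier on which supports from different specifications can be compared as sets.

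\textbf{Well-founded semantics.} The obstacle you flag cannot be removed as planned, because the identification you intend to prove is false. All determinations share the sealing prefix and correspond bijectively to stable models. So ``what holds on every extension of the sealed prefix'' means truth in every stable model, and WFS is sound but not complete for that.

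For example, take $a\leftarrow\neg b$, $b\leftarrow\neg a$, $c\leftarrow a$, $c\leftarrow b$. The atom $c$ has full support, yet it is $\mathbf{u}$ in the well-founded model. The same happens in the paper's worked example once $d(c)\leftarrow r(c)$ and $d(c)\leftarrow s(c)$ are added: $d(c)$ has full support, but the alternating fixpoint leaves $r(c)$, $s(c)$ and $d(c)$ all undefined.

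Theorem~\ref{thm:negation-filtration}(b) and its appendix generalization assert exactly the identification you want, so the paper offers no way around this. What you can prove is one direction. Every stable model extends the well-founded model, so WFS-$\mathbf{t}$ atoms have full support and WFS-$\mathbf{f}$ atoms have empty support. WFS is therefore a sound but incomplete test for a support lying in $\mathcal{F}_k$. To keep the ``reading depth'' formulation, you must either weaken the WFS clause to this approximation or restrict to programs on which the two classifications coincide.

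Your RC worry, by contrast, does not arise in the paper's model, where RC forbids no cycle and requires no commitments. The only question there is again how its empty determinations sit inside $\mathcal{D}^\star_\Phi$.
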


\noindent
For cross-specification comparison, we fix an ambient determination
space $\mathcal{D}^\star_\Phi$: the set of resolved-history classes
generated by the basis before imposing any particular specification's
admissibility predicate.
Each specification $I$ selects a resolving subset
$\mathcal{D}_I \subseteq \mathcal{D}^\star_\Phi$; we extend each
support by zero outside $\mathcal{D}_I$, so all supports are subsets
of the common carrier $\mathcal{D}^\star_\Phi$ and qdepth is compared
over this shared space.

\begin{remark}[Uniform layering in the instantiations]
  \label{rem:uniform-layering}
  In both instantiations of this paper, commutativity is
  history-independent (it holds universally or not at all), so all
  determinations share the same layer structure: for transactions,
  independent ordering decisions form a single layer while
  cycle-breaking decisions form subsequent layers; for
  $\mbox{Datalog}^\neg$, stratum sealing forms layers $1, \ldots, k$
  and choice predicates form layer $k{+}1$.
  We exploit this uniform structure in the instantiation sections
  without further comment.
\end{remark}

\noindent
$\mathcal{F}_0$ captures the coarsest distinction: $\mathcal{D}$ is the
support of a robust outcome (holds under every determination) and
$\emptyset$ is the support of an impossible one (holds under none).
No layer information is needed to make this distinction.

\begin{proposition}[Filtration respects the determination semiring]
  \label{prop:filtration-semiring}
  Each $\mathcal{F}_k$ is closed under $\cup$ and $\cap$, hence
  $(\mathcal{F}_k, \cup, \cap, \emptyset, \mathcal{D})$ is a
  sub-semiring of the determination semiring for each $k$.
\end{proposition}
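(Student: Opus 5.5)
The plan is to use the fact that $\mathcal{F}_k$ consists of the unions of equivalence classes of the single equivalence relation $\equiv_k$ on $\mathcal{D}$. Any family of sets closed under arbitrary unions of blocks of a fixed partition is a complete Boolean subalgebra of $2^{\mathcal{D}}$, and closure under $\cup$ and $\cap$ follows immediately. First I will recall from Definition~\ref{def:level-k-agreement} that $\equiv_k$ is an equivalence relation; it is reflexive, symmetric and transitive because ``same history through layer~$k$'' is equality of a well-defined prefix object. Its classes therefore partition $\mathcal{D}$. Next I will restate membership in $\mathcal{F}_k$ in its saturation form: $S \in \mathcal{F}_k$ iff for all $D \equiv_k D'$, $D \in S \Rightarrow D' \in S$.

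The closure steps are then direct. Take $S, T \in \mathcal{F}_k$ and $D \equiv_k D'$. If $D \in S \cup T$, then $D$ lies in $S$ or in $T$, and saturation of that set puts $D'$ in the same set, hence in $S \cup T$. If $D \in S \cap T$, saturation of both sets gives $D' \in S \cap T$. The constants $\emptyset$ and $\mathcal{D}$ are trivially saturated: $\emptyset$ is the empty union of classes and $\mathcal{D}$ is the union of all of them. I will also note, though the statement does not need it, that complements are saturated as well, so $\mathcal{F}_k$ is in fact a Boolean subalgebra. Since $\cup$ and $\cap$ are the semiring operations of $(2^{\mathcal{D}}, \cup, \cap, \emptyset, \mathcal{D})$, the axioms (associativity, commutativity, distributivity, identities, and annihilation by $\emptyset$) are inherited from the ambient algebra. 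So $\mathcal{F}_k$, with the same operations and constants, is a sub-semiring.

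For the cross-specification setting I will also check that nothing changes when supports are regarded as subsets of the common carrier $\mathcal{D}^\star_\Phi$. The same argument applies verbatim with $\equiv_k$ taken on $\mathcal{D}^\star_\Phi$, so no separate case is needed.

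The only point needing care is whether $\equiv_k$ is genuinely an equivalence relation when the layer structure is history-dependent, because ``through layer~$k$'' must be well defined for each determination. I will rely on the Foata normal form, which is canonical, together with Remark~\ref{rem:uniform-layering} in the instantiations. In the general case I will simply take the paper's assertion that $\equiv_k$ is an equivalence relation as given, since closure under $\cup$ and $\cap$ requires only that the classes partition $\mathcal{D}$.
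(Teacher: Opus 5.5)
Your proof is correct and follows the paper's argument: the paper notes that unions and intersections of unions of $\equiv_k$ classes are again such unions, and its appendix checks intersection with exactly your saturation step ($D \in S_1 \cap S_2$ and $D' \equiv_k D$ imply $D' \in S_1 \cap S_2$). Your further checks, that $\emptyset$ and $\mathcal{D}$ lie in $\mathcal{F}_k$ and that the semiring axioms are inherited from $2^{\mathcal{D}}$, make explicit what the paper leaves implicit.
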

\begin{proof}
  Unions and intersections of unions-of-equivalence-classes are again
  unions-of-equivalence-classes.
\end{proof}

\noindent
Concretely, the filtration gives each support the structure of a
\emph{trie} (layer~$k$ branches conditioned on layers $1,\ldots,k{-}1$).
We discuss the analogy to $\mathbb{N}[X]$ universality---and its
limits---in Appendix~\ref{app:open-questions}.

\begin{proposition}[Single-layer case]
  \label{prop:collapse-depth-1}
  When all commitments commute (a single layer), the filtration has
  exactly two levels ($\mathcal{F}_0 = \{\emptyset, \mathcal{D}\}$ and
  $\mathcal{F}_1 = 2^{\mathcal{D}}$), and supports coincide with
  $\mathrm{PosBool}(\Phi)$---positive Boolean formulas over commitment
  variables, interpreted over valid minimal determinations.
  This connects to PDB lineage in Section~\ref{sec:discussion}.
\end{proposition}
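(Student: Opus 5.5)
The argument has two parts. First we pin down the two levels of the filtration. Then we build an explicit isomorphism between supports and positive Boolean formulas.

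\emph{The filtration has two levels.} By Definition~\ref{def:filtration}, $\mathcal{F}_0=\{\emptyset,\mathcal{D}\}$ holds by construction. When all commitments commute, Remark~\ref{rem:uniform-layering} says every $D\in\mathcal{D}$ has depth $1$. Its whole history is therefore "through layer~$1$", so $D\equiv_1 D'$ means $D$ and $D'$ are the same resolved-history class. The $\equiv_1$ classes are singletons, which gives $\mathcal{F}_1=2^{\mathcal{D}}$. For $k\ge 1$ the classes stay singletons, so the chain stabilizes at level~$1$.

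\emph{Setting up the map to formulas.} Because the layer commutes, the admissible set after applying a determination depends only on the \emph{set} of commitments applied, not on their order. So we identify each $D$ with its commitment set $\sigma(D)\subseteq\Phi$. Minimality (Definition~\ref{def:minimal-resolving}) makes $\sigma$ injective up to $\sim$. The map sends a positive Boolean formula $\beta$ over variables $\Phi$ to the set
\[
\llbracket\beta\rrbracket=\{D\mid \sigma(D)\models\beta\},
\]
where $\sigma(D)$ is read as a truth assignment. This is a semiring homomorphism $(\mathrm{PosBool}(\Phi),\vee,\wedge)\to(2^{\mathcal{D}},\cup,\cap)$.

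\emph{Surjectivity.} Take a support $S$ and set
\[
\beta_S=\bigvee_{D\in S}\ \bigwedge_{\varphi\in\sigma(D)}\varphi .
\]
Clearly $S\subseteq\llbracket\beta_S\rrbracket$. For the reverse inclusion, suppose $D'\models\beta_S$. Then $\sigma(D)\subseteq\sigma(D')$ for some $D\in S$. Since $D$ is already resolving and $D'$ is minimal, this forces $\sigma(D)=\sigma(D')$. The reason is that the commitments in $\sigma(D')\setminus\sigma(D)$ could be dropped while the history stays resolved, which contradicts minimality. So $D'=D\in S$.

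\emph{Injectivity.} Two formulas are identified exactly when they agree on all valid minimal determinations. This is the phrase "interpreted over valid minimal determinations" in the statement, so on that quotient the map is injective by definition.

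\emph{Main obstacle.} The hard step is the minimality argument in the surjectivity part. Showing that adding commitments to a resolving set keeps it resolving needs monotone narrowing, $\Spec(H\cdot\varphi)\subseteq\Spec(H)$. It also needs the fact that a nonempty subset of a chain with a maximum is again a chain with a maximum. Here a subtlety appears: a subset of a chain need not contain the original maximum. To handle it, I would first try the retrospective view, in which the resolved outcome at the full history is fixed. If that fails, I would weaken the claim to an order embedding of supports into up-closed sets of commitment assignments.

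Finally, because supports correspond to positive Boolean formulas over commitment variables, they have the same form as PDB lineage. That connection is what Section~\ref{sec:discussion} uses.
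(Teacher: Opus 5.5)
Your proposal is correct and follows the same outline as the paper's proof. Singleton $\equiv_1$ classes give $\mathcal{F}_1=2^{\mathcal{D}}$, and a support $S$ is read as the disjunction, over $D\in S$, of the conjunctions $\bigwedge\sigma(D)$.

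The real difference is your surjectivity step. The paper only asserts that supports ``correspond to disjunctions of such conjunctions.'' It does not address that a positive conjunction $\bigwedge\sigma(D)$ is also satisfied by every $D'$ with $\sigma(D)\subseteq\sigma(D')$. Your observation closes that gap: minimality makes the commitment sets $\sigma(D)$, $D\in\mathcal{D}$, pairwise incomparable, so $\beta_S$ is satisfied by exactly the determinations in $S$. This is also what gives the phrase ``interpreted over valid minimal determinations'' its content.

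The obstacle you flag does not arise, so no fallback is needed. Minimality only asks that the shortened history be \emph{resolved}. Its admissible set need not contain the old maximum; it only needs some maximum. By commutativity, that admissible set is a subset of $\Spec(H\cdot\sigma(D))$. It is nonempty because it contains $\Spec(H\cdot\sigma(D'))$. Being a subset of a chain, it is a chain, and it has a maximum whenever it is finite, as in both instantiations.

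One small correction: injectivity of $\sigma$ does not come from minimality. It comes from commutativity over a shared non-commitment prefix $H$: the same commitment set gives the same admissible set and hence the same resolved outcome. The paper makes the same shared-prefix assumption when it equates level-$1$ agreement with applying the same set of commitments.
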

\begin{proof}
  With a single layer, two determinations agree at level $1$ iff they
  apply the same set of commitments---that is, iff they are the same
  determination.
  The level-$1$ equivalence classes are therefore singletons $\{D\}$.
  A level-$1$ support is any union of such singletons (by definition),
  which is an arbitrary subset of $\mathcal{D}$; hence
  $\mathcal{F}_1 = 2^{\mathcal{D}}$.
  Each determination corresponds to a conjunction of commitment variables
  (which commitments were applied); supports correspond to disjunctions
  of such conjunctions---exactly $\mathrm{PosBool}(\Phi)$.
\end{proof}

\begin{definition}[Query-relative depth]
  \label{def:query-depth}
  The \emph{query-relative depth} of a tuple $t$ is
  $\mathrm{qdepth}(t) \triangleq
  \min\{k \mid \mathrm{supp}(P(t)) \in \mathcal{F}_k\}$.
\end{definition}

\begin{proposition}[Characterization]
  \label{prop:qdepth-characterization}
  \leavevmode
  \begin{enumerate}[nosep,label=(\alph*)]
    \item $\mathrm{qdepth}(t) = 0$ iff $t$ is robust
          ($\mathrm{supp} = \mathcal{D}$) or impossible
          ($\mathrm{supp} = \emptyset$).
    \item $\mathrm{qdepth}(t) = k$ iff all determinations agreeing on
          layers $1, \ldots, k$ agree on whether $t$ holds, but some
          pair agreeing on layers $1, \ldots, k{-}1$ disagrees.
    \item $\mathrm{qdepth}(t) = d$ iff $t$'s presence depends on the
          full determination.
  \end{enumerate}
\end{proposition}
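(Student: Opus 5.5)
The three parts are unwindings of Definitions~\ref{def:level-k-agreement}, \ref{def:filtration} and~\ref{def:query-depth}. The one lemma I would isolate first is a reformulation of membership in $\mathcal{F}_k$: for $S\subseteq\mathcal{D}$, $S\in\mathcal{F}_k$ iff for all $D,D'$ with $D\equiv_k D'$ we have $D\in S \Leftrightarrow D'\in S$. This is immediate, since a union of $\equiv_k$ classes is the same thing as a set saturated under $\equiv_k$. Applied to $S=\mathrm{supp}(P(t))$, it says $\mathrm{supp}(P(t))\in\mathcal{F}_k$ iff any two determinations agreeing through layer~$k$ agree on whether $t$ holds. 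I would also note monotonicity, $\mathcal{F}_0\subseteq\mathcal{F}_1\subseteq\cdots$: if $D\equiv_{k+1}D'$ then $D\equiv_k D'$, so $\equiv_k$ classes are unions of $\equiv_{k+1}$ classes. Hence $\{k \mid \mathrm{supp}(P(t))\in\mathcal{F}_k\}$ is upward closed, and qdepth is its least element. This set is nonempty because $\mathcal{F}_d=2^{\mathcal{D}}$.

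For (a), I would use $\mathcal{F}_0=\{\emptyset,\mathcal{D}\}$ from Definition~\ref{def:filtration}, which holds because $\equiv_0$ (agreement on an empty prefix of layers) relates all determinations. Then $\mathrm{qdepth}(t)=0$ iff the support is $\emptyset$ or $\mathcal{D}$, i.e.\ $t$ is impossible or robust.

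For (b), I would take $k\ge1$. By upward closure, $\mathrm{qdepth}(t)=k$ iff the support lies in $\mathcal{F}_k$ but not in $\mathcal{F}_{k-1}$. The lemma translates the first condition into ``all $\equiv_k$-agreeing pairs agree on $t$''. It translates the negation of the second into ``some $\equiv_{k-1}$-agreeing pair disagrees on $t$''. This is exactly the statement of (b).

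For (c), I read $d$ as the least level with $\mathcal{F}_d=2^{\mathcal{D}}$, the level at which $\equiv_d$ is the identity relation. I would interpret ``depends on the full determination'' as (b) with $k=d$: some pair of distinct determinations agreeing through layer $d-1$, hence differing only at the final level, disagrees on $t$. With that reading, (c) is the instance $k=d$ of (b). The main obstacle is this interpretive step rather than any calculation. The phrase in (c) is informal, and $d$ must be pinned to the minimal distinguishing level rather than to any level with $\mathcal{F}_d=2^{\mathcal{D}}$; otherwise the ``only if'' direction fails. I would state this convention explicitly and also remark that the histories in $\mathcal{D}$ are finite, so $d$ exists.
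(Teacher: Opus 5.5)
Your proposal is correct and follows the paper's route. The paper's proof is just ``immediate from the definitions of $\mathcal{F}_k$ and query-relative depth,'' and your three steps (the saturation reformulation of $\mathcal{F}_k$, the monotonicity that reduces ``not in $\mathcal{F}_j$ for any $j<k$'' to ``not in $\mathcal{F}_{k-1}$,'' and reading (c) as the instance $k=d$ of (b)) are exactly the implicit steps. One small slip in your aside: if $d$ were not the minimal distinguishing level, then $\mathrm{qdepth}(t)=d$ would never hold, so it is the ``if'' direction of (c) that would be at risk, not the ``only if'' direction, which becomes vacuous.
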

\begin{proof}
  Immediate from the definitions of $\mathcal{F}_k$ and query-relative depth.
\end{proof}

\begin{example}[Query-relative depth for transactions]
  \label{ex:qdepth-txn}
  In Example~\ref{ex:example}, the conflict graph is acyclic, so all
  transaction-commit ordering commitments commute and the determination
  has a single layer.
  Value $b$ has $\mathrm{qdepth}(b) = 0$: it holds under every
  determination (robust).
  Value $d$ has $\mathrm{qdepth}(d) = 1$: its support is a proper subset
  of $\mathcal{D}$ (determinations where $T_Q$ observes $T_2$'s insert),
  which is not in $\mathcal{F}_0 = \{\emptyset, \mathcal{D}\}$.
  Classical provenance works for $b$; determination provenance is needed
  for $d$.
\end{example}

\begin{corollary}[Query evaluation respects the filtration]
  \label{cor:propagation-filtration}
  Under positive relational algebra:
  $\mathrm{qdepth}(t_1 \bowtie t_2) \le
  \max(\mathrm{qdepth}(t_1), \mathrm{qdepth}(t_2))$
  and
  $\mathrm{qdepth}(t_1 \cup t_2) \le
  \max(\mathrm{qdepth}(t_1), \mathrm{qdepth}(t_2))$.
  Query evaluation cannot increase query-relative depth beyond the maximum
  depth of its inputs.
\end{corollary}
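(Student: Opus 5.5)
The plan is to reduce the corollary to Proposition~\ref{prop:filtration-semiring} together with the monotonicity of the filtration $\mathcal{F}_0 \subseteq \mathcal{F}_1 \subseteq \cdots$ from Definition~\ref{def:filtration}. The first step is to pin down how supports behave under positive relational algebra. Fix a determination $D$. The annotation $P_D(t_1 \bowtie t_2)$ is the product $P_D(t_1)\cdot P_D(t_2)$ in $K$, and $P_D(t_1 \cup t_2)$ is the sum $P_D(t_1)+P_D(t_2)$; this is just the $K$-relational evaluation of Green et al.\ applied pointwise in $K^{\mathcal{D}}$. From this I will derive two support identities:
\[
\mathrm{supp}(P(t_1\bowtie t_2)) = \mathrm{supp}(P(t_1))\cap\mathrm{supp}(P(t_2)),
\qquad
\mathrm{supp}(P(t_1\cup t_2)) = \mathrm{supp}(P(t_1))\cup\mathrm{supp}(P(t_2)).
\]

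The second step verifies these identities, and I expect it to be the only real obstacle. They require $K$ to have no zero divisors ($a\cdot b=0 \Rightarrow a=0$ or $b=0$) and to be zero-sum-free ($a+b=0 \Rightarrow a=b=0$). Both properties hold for $\mathbb{N}[X]$, for $\mathrm{PosBool}$, and for the Boolean semiring. For a general $K$ one only gets the inclusions $\mathrm{supp}(P(t_1\bowtie t_2))\subseteq \mathrm{supp}(P(t_1))\cap\mathrm{supp}(P(t_2))$ and $\mathrm{supp}(P(t_1\cup t_2))\subseteq \mathrm{supp}(P(t_1))\cup\mathrm{supp}(P(t_2))$, and an inclusion alone does not give membership in $\mathcal{F}_k$. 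I will therefore state this hypothesis explicitly. It matches the paper's convention in the support definition that ``join intersects supports; union merges them''; equivalently, supports can be computed in the Boolean image of $K$.

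The third step is the depth bound. Let $k=\max(\mathrm{qdepth}(t_1),\mathrm{qdepth}(t_2))$. By Definition~\ref{def:query-depth}, $\mathrm{supp}(P(t_i))\in\mathcal{F}_{\mathrm{qdepth}(t_i)}$. Because the filtration is increasing, both supports also lie in $\mathcal{F}_k$. Proposition~\ref{prop:filtration-semiring} says $\mathcal{F}_k$ is closed under $\cap$ and $\cup$, so by the identities of the first step both result supports lie in $\mathcal{F}_k$. Minimality in the definition of qdepth then gives $\mathrm{qdepth}\le k$ in each case.

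Finally, I will extend the argument to projection and selection, and to whole queries by structural induction. Projection sums annotations over the tuples that map to the same output, so its support is a finite union and remains in $\mathcal{F}_k$. Selection either keeps a support unchanged or replaces it with $\emptyset\in\mathcal{F}_0$. These two cases, combined with the join and union cases above, give the closing sentence of the corollary by induction on the query expression.
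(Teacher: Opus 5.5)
Your proof is correct and follows the paper's argument. Join and union act on supports as $\cap$ and $\cup$, Proposition~\ref{prop:filtration-semiring} keeps both results in $\mathcal{F}_{\max(k_1,k_2)}$, and projection and selection are handled by closure under union and the keep-or-remove observation; your explicit hypotheses (no zero divisors for join, zero-sum-freeness for union and projection) and your appeal to $\mathcal{F}_j \subseteq \mathcal{F}_k$ for $j \le k$ spell out what the paper leaves implicit, since its appendix states only zero-divisor-freeness.
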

\begin{proof}
  Join computes $\mathrm{supp}(t_1) \cap \mathrm{supp}(t_2)$;
  union computes $\mathrm{supp}(t_1) \cup \mathrm{supp}(t_2)$.
  By Proposition~\ref{prop:filtration-semiring}, both remain in
  $\mathcal{F}_{\max(k_1, k_2)}$.
  Selection by a determination-independent predicate either preserves
  a tuple's support or removes the tuple entirely, hence preserves
  filtration membership.
  Projection unions supports over matching tuples; since
  $\mathcal{F}_k$ is closed under union, projection preserves
  filtration membership.
\end{proof}

\section{Transactions as Semantic Ambiguity}
\label{sec:transactions}

In transactional systems, semantic ambiguity arises from
\emph{conflict resolution}: when concurrent transactions conflict,
the system must decide which effects persist.
We show that any isolation level forbidding a conflict cycle has
worst-case determination depth $\Theta(n)$ (in the number of
transactions) given scheduling
discretion---a result that holds regardless of concurrency control
protocol.
The filtration refines the classical incomparability of SER and
SI~\cite{adya1999weak} to a per-tuple depth comparison, and extends
per-transaction portability analysis~\cite{vandevoort2025mixed} to
arbitrary provenance queries over the resulting database.

\subsection{Transactional Histories}

\begin{definition}[Transactional history]
  \label{def:txn-history}
  A \emph{transactional history} is a history $H = (E, \rightarrow)$ whose
  event set $E$ contains:
  \begin{itemize}[nosep]
    \item $\mathsf{begin}(T_i)$, $\mathsf{commit}(T_i)$, $\mathsf{abort}(T_i)$
          --- lifecycle events for transaction $T_i$;
    \item $\mathsf{r}(T_i, x, v)$, $\mathsf{w}(T_i, x, v)$ --- read and write
          operations by $T_i$ on object $x$ with value $v$.
  \end{itemize}
  The partial order $\rightarrow$ records precedence: within a
  transaction, operations are totally ordered; across transactions,
  $e_1 \rightarrow e_2$ when $e_1$ is known to precede $e_2$ (e.g., a
  write that a subsequent read observes).
  Under protocols with scheduling discretion, ordering decisions among
  conflicting operations are the commitment events
  (Section~\ref{sec:scheduling-basis}); commit and abort outcomes
  follow from the chosen orderings.
  All other events (arrivals, reads, writes) are non-commitment events;
  they extend the history and may introduce conflict edges but do not
  themselves exclude outcomes.
  Aborted transactions remain in $E$: an abort records that a conflict
  was resolved against that transaction.
\end{definition}

\begin{definition}[Conflict]
  \label{def:conflict}
  Two operations from distinct transactions $T_i \neq T_j$ \emph{conflict} if
  they access the same object $x$ and at least one is a write.
  Following Adya~\cite{adya1999weak}, we distinguish three conflict types:
  \emph{write-write} ($\mathsf{ww}$: both write $x$),
  \emph{write-read} ($\mathsf{wr}$: $T_i$ writes $x$, $T_j$ reads $x$), and
  \emph{read-write} ($\mathsf{rw}$: $T_i$ reads $x$, $T_j$ writes $x$;
  also called an \emph{anti-dependency}).
\end{definition}

\begin{definition}[Conflict graph]
  \label{def:conflict-graph}
  The \emph{conflict graph} $G(H)$ derived from a transactional
  history $H$ is a directed graph whose vertices are the \emph{active}
  (uncommitted, non-aborted) transactions in $H$.
  An edge $T_i \xrightarrow{\ell} T_j$ (labeled
  $\ell \in \{\mathsf{ww}, \mathsf{wr}, \mathsf{rw}\}$) exists
  whenever $T_i$ and $T_j$ conflict on some object with conflict type
  $\ell$ and $T_i$'s conflicting operation precedes $T_j$'s in
  $\rightarrow$.
  For an extension $H \sqsubseteq H'$, the graph $G(H')$ omits from
  $G(H)$ transactions that committed or aborted in $H'$.
\end{definition}

\subsection{Commitment Basis and Specifications}
\label{sec:scheduling-basis}

Standard concurrency-control protocols all effectively order
conflicting transactions: OCC
via validation order, MVTO via timestamp assignment, 2PL via the time of first lock release
or deadlock-driven abort.
We capture this shared structure with an \emph{ordering commitment
basis}:

\begin{definition}[Ordering commitments]
  \label{def:commit-commitment}
  The ordering basis is
  $\Phi = \{\varphi_{T_i \prec T_j} \mid T_i, T_j \text{ conflict}\}$,
  where $\varphi_{T_i \prec T_j}$ irrevocably records that $T_i$ is
  serialized before $T_j$.
  \label{def:ordering-commitment}%
\end{definition}

\noindent
The transactional specification appeared in Example~\ref{ex:spec}:
$O$ is the set of decision traces (commit/abort
events, partially ordered by conflict edges), and $\Ord$ is
sub-trace inclusion.
Following Adya~\cite{adya1999weak}, each isolation level $L$
constrains which labeled cycle patterns are admissible;
$\Spec_L(H)$ admits extensions consistent with $L$'s constraint.
We consider:
\emph{read committed} (RC; no constraint),
\emph{serializability} (SER; forbids all directed cycles regardless of
edge labels), and \emph{snapshot isolation} (SI; permits rw-only
cycles but enforces first-committer-wins on $\mathsf{ww}$ edges;
Appendix~\ref{sec:appendix-SI}).

\begin{example}[Running example: depth~$1$ under SER]
  \label{ex:running-revisited}
  Returning to Example~\ref{ex:example}: the conflict graph is
  acyclic, so all ordering commitments commute and form a single layer.
  Two query-outcome equivalence classes of determinations arise
  (we write representative members):
  $D_{\mathsf{in}} \triangleq
    \varphi_{T_2 \prec T_Q} \cdot \varphi_{T_Q \prec T_3}$
  (all orderings where $T_Q$ observes $T_2$'s insert)
  and
  $D_{\mathsf{out}} \triangleq
    \varphi_{T_2 \prec T_3} \cdot \varphi_{T_3 \prec T_Q}$
  (orderings where $d$ is absent).
  The determination provenance is:
  \[
    P(b) = \mathcal{D} \times \{x_0\}, \qquad
    P(d) = \{(D_{\mathsf{in}},\, x_2)\}.
  \]
  $b$ is robust ($\mathrm{qdepth} = 0$); $d$ is contingent
  ($\mathrm{qdepth} = 1$).
  Figure~\ref{fig:det-tables} summarizes the determination structure
  across isolation levels.
\end{example}

\subsection{Determination Depth by Isolation Level}

\begin{theorem}[Determination depth for transactions]
  \label{thm:txn-depth}
  The following bounds depend only on the protocol's resolution class:
  \begin{enumerate}[nosep,label=(\alph*)]
    \item Under any isolation level that forbids no cycle type (e.g.,
          read committed): depth~$0$.
          No conflict resolution is required to satisfy the isolation
          specification.
    \item Under any isolation level $L$ that forbids some cycle type
          (serializability, snapshot isolation, repeatable read, etc.),
          with a fully reactive protocol (no scheduling discretion):
          depth~$0$.
          Every response is entailed by the current state and the
          arriving request.
    \item Under any such $L$ with scheduling discretion (batching,
          commit-order choice, or victim selection), whose forbidden
          cycle pattern can be generated repeatedly around a surviving
          discretionary transaction: worst-case
          depth~$\Theta(n)$, where $n$ is the number of transactions.
          Per-batch depth is~$2$ (seal the batch; choose a processing
          order within it).
  \end{enumerate}
\end{theorem}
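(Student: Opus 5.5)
I would prove the three parts separately, each time checking the definition of depth (the number of layers in the Foata normal form of a minimal resolving determination) against the structure of $\Spec_L$.

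\emph{Part (a).} For read committed, $\Spec_{\mathrm{RC}}(H)$ imposes no cycle constraint, so every history is already resolved. The decision trace is the one obtained by committing every active transaction, and it is the $\Ord$-maximum. By Definition~\ref{def:minimal-resolving}, the minimal resolving determination is therefore empty, so the depth is~$0$.

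\emph{Part (b).} Under a fully reactive protocol, each commit or abort is a function of the current history and the arriving request. I would classify these responses as \emph{entailments} in the sense of the three-forces discussion: they refine the observed outcome without excluding any admissible alternative. The reason is that, relative to the protocol-restricted history class, $\Spec_L(H)$ at each point is a chain with a maximum. Hence no commitment event from $\Phi$ occurs, and the depth is again~$0$.

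\emph{Part (c), upper bound.} I would show that any discretionary execution decomposes into batches. Each batch takes two layers: a sealing commitment, which declares the batch's arrivals complete, followed by a single commuting layer of ordering commitments $\varphi_{T_i\prec T_j}$ over the sealed conflict graph. The ordering commitments commute because, once the batch is sealed, choosing a linear extension is a set of independent pairwise decisions. This gives per-batch depth~$2$. Each batch must resolve at least one transaction, since a batch that resolves nothing is not needed in a minimal determination. There are therefore at most $n$ batches, and the total depth is $O(n)$.

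\emph{Part (c), lower bound.} I would build an adversarial family. Start with a surviving discretionary transaction $T_0$. At stage $i$, introduce a new transaction $T_i$ whose operations close a forbidden cycle through $T_0$ and $T_{i-1}$, using for SER an rw/wr triangle and for SI a ww pair. Arrange the arrival of $T_i$ to occur only after the stage-$(i-1)$ commitment has been applied, since its conflict edges depend on which transaction was chosen as victim or which order was committed. The commitment resolving stage $i$ is then separated from the stage-$(i-1)$ commitment by a non-commitment event, so the two cannot share a layer (Definition~\ref{def:layer-seq}). Minimality also requires each of these commitments: removing the stage-$i$ commitment leaves incomparable traces, one aborting $T_i$ and one aborting $T_0$, so the history is not resolved. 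This yields $\Omega(n)$ layers.

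The main obstacle is this lower bound. The difficulty is guaranteeing that the layers cannot be merged in the Foata normal form: the adversary must make each new conflict edge genuinely contingent on the previous commitment, not merely later in time. I would first try to make $T_i$'s read set depend on the value committed by the stage-$(i-1)$ decision, so that the two commitments fail to commute. The hypothesis that the forbidden pattern ``can be generated repeatedly around a surviving transaction'' is precisely what this construction needs.
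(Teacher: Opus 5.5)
Your handling of (a) and (b), the counting upper bound, and the overall shape of the lower bound (a long-lived survivor repeatedly caught in a forbidden cycle by fresh arrivals) match the paper's proof sketch. However, the lower-bound construction as you wrote it breaks. You route the stage-$i$ cycle through both $T_0$ and $T_{i-1}$, while your minimality step says the stage-$i$ choice is between aborting $T_i$ and aborting $T_0$. On the adversarial branch that keeps $T_0$ alive, $T_i$ is aborted and leaves the conflict graph (Definition~\ref{def:conflict-graph}), so no stage-$(i{+}1)$ cycle through $T_0$ and $T_i$ can form. Already at $i=1$ the cycle involves only $T_0$ and $T_1$, one of them must go, and the chain cannot continue past the first stage.

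The paper never involves the previous arrival. $T_\infty$ reads $x$ and writes $y$, and each fresh $T_i$ writes $x$ and reads $y$, giving the two-edge cycle $T_\infty \xrightarrow{\mathsf{rw}} T_i \xrightarrow{\mathsf{rw}} T_\infty$ with the survivor alone. This also dissolves what you call the main obstacle. The stage-$i$ decision arises only because $T_\infty$ survived stage $i{-}1$ (had it been aborted, no further conflict would occur), so each round's commitment depends on the previous round's outcome and cannot be merged into an earlier layer. You already state this dependence (``its conflict edges depend on which transaction was chosen as victim''), so making $T_i$'s read set depend on committed values is unnecessary.

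For the per-batch claim, your reason that the ordering commitments in a sealed batch commute is not right: pairwise orderings do not form ``a set of independent pairwise decisions.'' They are jointly constrained by transitivity and acyclicity. On a cyclic batch, which transaction must abort is fixed by the whole order, not by any single pair. Compare Example~\ref{ex:multi-layer-resp}, where separate victim decisions over a fixed set of three transactions can occupy two layers. The paper instead takes the choice of the processing order $\pi$ as a single commitment from which all conflict edges, aborts, and commits are entailed. The post-seal layer is then trivially one layer, and per-batch depth is~$2$.

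Similarly, your upper bound presupposes that every discretionary execution decomposes into sealed batches, which fails for unbatched victim selection. The paper's direct count needs no such decomposition: each commitment resolves at least one transaction's fate, so at most $n$ sequential commitments suffice.
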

\begin{proof}[Proof sketch]
  (a)~If no cycle type is forbidden, validation always passes and no
  transaction need abort; all outcomes are $\Ord$-comparable.

  (b)~A fully reactive protocol processes each event deterministically
  given the current state.
  Witness: textbook OCC (validate on commit request; abort iff cycle exists) or
  textbook MVTO (assign timestamp on arrival; force abort on violation).
  No system choice arises; depth~$0$.

  (c)~\emph{Upper bound}: each commitment resolves at least one
  transaction's fate; with $n$ transactions, at most $n$ sequential
  commitments suffice.
  \emph{Lower bound}: a long-running transaction $T_\infty$ (reading
  object $x$, writing object $y$) forms a two-edge rw-rw cycle with
  each fresh arrival $T_i$ (which writes $x$ and reads $y$):
  $T_\infty \xrightarrow{\mathsf{rw}} T_i$ and
  $T_i \xrightarrow{\mathsf{rw}} T_\infty$.
  Each round forces a binary decision: abort $T_\infty$ (resolving all
  future conflicts) or abort $T_i$ (letting $T_\infty$ survive to the
  next round).
  If the system keeps $T_\infty$ alive for $n{-}1$ rounds, each
  round's decision depends on the previous (had $T_\infty$ been
  aborted earlier, no further conflict would arise), giving
  depth~$n{-}1$.
  \emph{Per-batch}: within a sealed batch, the system makes a single
  commitment: the choice of processing order $\pi$ (a total order on
  pending transactions).
  All consequences (conflict edges, aborts, commits) are entailed by
  $\pi$.
  Hence per-batch depth is~$2$: seal + order selection.
  (Under 2PL the second layer is victim selection rather than order
  selection, but the seal is still needed to prevent new cycles
  from forming after resolution.  Details in
  Appendix~\ref{app:protocols}.)
\end{proof}

\begin{example}[Per-batch depth under OCC with batch validation]
  \label{ex:overlapping-cycles}
  Three transactions $T_1, T_2, T_3$ execute concurrently and complete
  their read/write phases.
  The system seals the batch (layer~1): no further transactions will
  join this validation round.
  It then chooses a validation order $\pi$ (layer~2) --- say
  $T_1 \prec T_2 \prec T_3$.
  Under $\pi$, $T_2$'s write set is checked against $T_1$'s; if they
  conflict, $T_2$ aborts (entailed by $\pi$, not a separate
  commitment).
  Choosing $T_2 \prec T_1 \prec T_3$ instead might let $T_2$ commit
  and abort $T_1$.
  The two layers do not commute: the seal determines which transactions
  are in the batch; the order determines which survive validation.
\end{example}

\subsection{Per-Tuple Isolation Sensitivity}

Vandevoort et al.~\cite{vandevoort2025mixed} certify individual
transactions as isolation-insensitive --- a per-transaction result.
Provenance queries over the resulting database enable a richer cut:
both finer grain (a query selecting a single tuple) and
cross-transaction questions (a join combining outputs of multiple
transactions).
The filtration's compositionality
(Corollary~\ref{cor:propagation-filtration}) guarantees that
insensitivity propagates through any such query.

\begin{proposition}[SER/SI qdepth incomparability]
  \label{prop:ser-si-incomparable}
  In the Adya conflict-graph model, the following two patterns witness
  the two directions of SER/SI qdepth incomparability for
  transaction-local output tuples:
  \begin{enumerate}[nosep,label=(\alph*)]
    \item \emph{Write skew} ($\mathrm{qdepth}_{\mathrm{SER}} >
          \mathrm{qdepth}_{\mathrm{SI}}$):
          $T_i$ is on an rw-rw cycle.
          SER must break the cycle (making $t$ contingent); SI
          allows it ($t$ is robust).
    \item \emph{FCW-forced abort} ($\mathrm{qdepth}_{\mathrm{SI}} >
          \mathrm{qdepth}_{\mathrm{SER}}$):
          $T_i$ has a write-write conflict with no enclosing cycle.
          SER commits both (ordering them; $t$ is robust); SI aborts
          one via FCW (making $t$ contingent).
  \end{enumerate}
\end{proposition}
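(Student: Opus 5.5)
The plan is to prove each direction by an explicit witness: a small workload for which we compute $\mathrm{supp}(P(t))$ under $\mathcal{D}_{\mathrm{SER}}$ and under $\mathcal{D}_{\mathrm{SI}}$. Both are embedded in the ambient space $\mathcal{D}^\star_\Phi$, with supports extended by zero as described after Observation~\ref{obs:basis-entails-filtration}. We then read off $\mathrm{qdepth}$ using Proposition~\ref{prop:qdepth-characterization}(a). For a transaction-local output tuple $t$ (one written only by $T_i$ and visible iff $T_i$ commits), $\mathrm{supp}(P(t))$ is exactly the set of determinations in which $T_i$ commits. So in each case it suffices to show two things: under one level $T_i$ commits in every resolving determination (so $t$ is robust and has $\mathrm{qdepth}=0$); under the other, $T_i$ commits in some resolving determinations but not in all (so the support is a proper nonempty subset, hence not in $\mathcal{F}_0$, and $\mathrm{qdepth}\ge 1$).

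For (a), take the standard write-skew pair. $T_1$ reads $x$ and writes $y$; $T_2$ reads $y$ and writes $x$; both snapshots are taken before either commits. This yields the rw-rw cycle $T_1\xrightarrow{\mathsf{rw}}T_2\xrightarrow{\mathsf{rw}}T_1$, and there are no $\mathsf{ww}$ edges. Under SI, rw-only cycles are admissible and first-committer-wins never fires, so both transactions commit in every resolving determination. Hence $\mathrm{supp}=\mathcal{D}_{\mathrm{SI}}$ and $\mathrm{qdepth}_{\mathrm{SI}}(t)=0$. Under SER the cycle is forbidden, so every resolving determination aborts one of the two. Both choices are admissible: the ordering commitments $\varphi_{T_1\prec T_2}$ and $\varphi_{T_2\prec T_1}$ each yield a nonempty admissible set, consistent with Definition~\ref{def:commitment}. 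Therefore $T_i$ commits in some but not all resolving determinations, and $\mathrm{qdepth}_{\mathrm{SER}}(t)\ge1$. Since the conflict graph has one cycle resolved in a single layer, the depth is exactly $1$ by Proposition~\ref{prop:collapse-depth-1}.

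For (b), take two concurrent blind writers $T_1,T_2$ of the same object $x$, with no reads and overlapping snapshots. The conflict graph has only $\mathsf{ww}$ edges and no cycle. Under SER, the acyclic graph admits a serialization under either ordering commitment, so both transactions commit in every resolving determination. Hence $t$ is robust and $\mathrm{qdepth}_{\mathrm{SER}}(t)=0$. Under SI, first-committer-wins requires aborting whichever concurrent $\mathsf{ww}$ writer commits second. Which one that is depends on the commit-order commitment, so $T_i$ survives in exactly the determinations that order it first. The support is again a proper nonempty subset, and $\mathrm{qdepth}_{\mathrm{SI}}(t)=1$.

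The main obstacle is making the comparison meaningful across levels. $\mathcal{D}_{\mathrm{SER}}$ and $\mathcal{D}_{\mathrm{SI}}$ are different resolving subsets, so I must check that the ambient-space convention does not accidentally turn full support in $\mathcal{D}_I$ into a proper subset of $\mathcal{D}^\star_\Phi$. I would handle this by defining qdepth relative to $\mathcal{D}_I$ as the statement intends, with $\mathcal{F}_0=\{\emptyset,\mathcal{D}_I\}$. I would also verify that in each witness both branches of the contingent level are genuinely admissible and minimal (Definition~\ref{def:minimal-resolving}), so that neither support collapses to $\emptyset$. For SI this relies on the appendix formalization of first-committer-wins as a commit-order commitment, rather than as an entailment of a reactive protocol; under Theorem~\ref{thm:txn-depth}(b) that distinction would make the depth vanish.
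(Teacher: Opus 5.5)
Your argument follows the paper's proof. It uses the same write-skew witness for (a), the same pair of concurrent writers on one object for (b), and the same test: $\mathrm{qdepth}=0$ versus $>0$ according to whether $T_i$ commits in every resolving determination of the given level. Your two explicit choices match what the paper's proof (and Appendix~\ref{app:ser-si-separation}) assume implicitly: measuring qdepth relative to $\mathcal{D}_L$ rather than the ambient carrier, and assuming commit-order discretion.

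\textbf{Missing piece in witness (b).} As stated, $T_1$ and $T_2$ are blind writers of $x$ and nothing else, so $T_i$ has no transaction-local output tuple; its only write is to the contested object. The natural candidate, $T_i$'s version of $x$, is not ``visible iff $T_i$ commits'' under SER. Both transactions commit, and by last-writer-wins the final value of $x$ is fixed by the ordering commitment, $\varphi_{T_1\prec T_2}$ versus $\varphi_{T_2\prec T_1}$. This is exactly the mechanism the paper's coNP-hardness construction in Appendix~\ref{app:robustness-proofs} relies on. That tuple therefore has a proper nonempty support under SER as well as under SI, and the separation disappears. The paper avoids this by having $T_1$ also insert $t$ into a separate relation $R$ that no other transaction touches. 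You need the same addition, and it is compatible with your ``no reads'' condition. Witness (a) has no such problem, since $T_1$'s write of $y$ is already private.

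\textbf{Unneeded exact-depth claims.} Your claims that the contingent side has qdepth exactly $1$ are not needed and depend on the layering model. Under the per-batch structure of Theorem~\ref{thm:txn-depth}(c), and in the SI worked example of Appendix~\ref{sec:appendix-SI}, a seal shared by all determinations comes before the order or FCW choice. Supports therefore first split at layer~$2$, not layer~$1$. The statement only requires $\mathrm{qdepth}\ge 1$ on the contingent side and $\mathrm{qdepth}=0$ on the robust side. State exactly that, and drop the appeal to Proposition~\ref{prop:collapse-depth-1}.
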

\begin{proof}
  (a)~$T_A$ reads $x$, writes $y$; $T_B$ reads $y$, writes $x$
  (write skew).
  Under SI: no write-write conflict, both commit;
  $\mathrm{qdepth}_{\mathrm{SI}}(t) = 0$.
  Under SER: the rw-rw cycle is forbidden; one must abort;
  $\mathrm{qdepth}_{\mathrm{SER}}(t) > 0$.

  (b)~$T_1$ writes $x$ and inserts $t$ into a separate relation $R$;
  $T_2$ writes $x$; no other conflicts.
  Under SER: both commit (no cycle);
  $\mathrm{qdepth}_{\mathrm{SER}}(t) = 0$.
  Under SI: FCW forces one to abort;
  $\mathrm{qdepth}_{\mathrm{SI}}(t) > 0$.
\end{proof}

\begin{corollary}[Compositional isolation insensitivity]
  \label{cor:isolation-migration}
  If every transaction contributing to a provenance query $Q$ is
  isolation-insensitive, then every output tuple of $Q$ is also
  isolation-insensitive: same qdepth under SER and SI.
\end{corollary}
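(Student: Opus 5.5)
The plan is to reduce the statement to Corollary~\ref{cor:propagation-filtration} applied in two ambient spaces, together with the shared-carrier convention (Observation~\ref{obs:basis-entails-filtration}): SER and SI are resolving subsets $\mathcal{D}_{\mathrm{SER}},\mathcal{D}_{\mathrm{SI}}\subseteq\mathcal{D}^\star_\Phi$ of one filtration. First I will fix what ``isolation-insensitive'' means for a contributing transaction $T$. I take it to mean that every base tuple $T$ contributes to $Q$'s input has the \emph{same support} in $\mathcal{D}^\star_\Phi$ under SER and under SI, after extending each by zero outside the respective resolving subset. This is stronger than merely equal qdepth, and I will say explicitly why it is needed below. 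In the Vandevoort et al.\ setting this is the natural reading: the transaction's observable effect does not depend on which isolation level admitted the run.

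Second, I will show by structural induction on the positive RA expression $Q$ that each output tuple's support is determined by the supports of its inputs, using the set operations from the proof of Corollary~\ref{cor:propagation-filtration}:
\begin{itemize}[nosep,leftmargin=*]
  \item join is $\cap$;
  \item union is $\cup$;
  \item projection is a finite $\cup$;
  \item selection by a determination-independent predicate is the identity or removal.
\end{itemize}
These operations are the same functions of the input supports under either specification, because the carrier is shared. Hence equal input supports give equal output supports, and therefore $\mathrm{supp}_{\mathrm{SER}}(t)=\mathrm{supp}_{\mathrm{SI}}(t)$ for every output tuple $t$.

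Third, the conclusion follows from Definition~\ref{def:query-depth}. The filtration $\mathcal{F}_k$ is defined from $\equiv_k$ on the common carrier, which is the same for both specifications by Observation~\ref{obs:basis-entails-filtration}. Equal supports then lie in the same $\mathcal{F}_k$, so $\mathrm{qdepth}_{\mathrm{SER}}(t)=\mathrm{qdepth}_{\mathrm{SI}}(t)$. As a sanity check, Corollary~\ref{cor:propagation-filtration} also bounds this common value by the maximum input depth.

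The main obstacle is the hypothesis. If ``isolation-insensitive'' only means equal qdepth per transaction, the result can fail. Two inputs of equal depth but different supports can intersect to $\emptyset$ under one level and to a nonempty set under the other, so the Corollary gives only an upper bound, not equality. I will therefore state the support-equality reading explicitly. I will also check that it is satisfied by the robust case, where both supports are all of $\mathcal{D}^\star_\Phi$ restricted to the resolving set. That requires care: the restrictions $\mathcal{D}_{\mathrm{SER}}$ and $\mathcal{D}_{\mathrm{SI}}$ differ, so ``full support'' must be read relative to each resolving subset. I will handle this by comparing supports intersected with $\mathcal{D}_{\mathrm{SER}}\cap\mathcal{D}_{\mathrm{SI}}$, or by defining insensitivity via support equality on that common part.
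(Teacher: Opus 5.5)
Your diagnosis of the paper's route is correct, and your route differs from it. The paper's proof is only the appeal to Corollary~\ref{cor:propagation-filtration}. That gives, for each level $L$ separately, $\mathrm{qdepth}_L(t)\le\max_i\mathrm{qdepth}_L(t_i)$. This forces equality across SER and SI only when every input has depth $0$ under both levels: closure of $\mathcal{F}^L_0=\{\emptyset,\mathcal{D}_L\}$ under $\cup,\cap$ then puts every output at depth $0$ under both. For positive depths the bound is not enough, as your join example shows. So propagating supports rather than depths is the right idea, and your induction (join as $\cap$, union and projection as $\cup$, selection as identity or removal) is sound under its hypothesis.

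The gap is that hypothesis. With zero-extension to $\mathcal{D}^\star_\Phi$, a tuple robust under both levels has support $\mathcal{D}_{\mathrm{SER}}$ under SER and $\mathcal{D}_{\mathrm{SI}}$ under SI. These differ as soon as the workload contains a write-skew or FCW pattern anywhere, even among transactions that do not contribute to $Q$. So your hypothesis fails for the tuples the corollary is about: robust contributors as in Example~\ref{ex:complement-union}, in any workload that also contains an isolation-sensitive transaction. That robust case is exactly the one the paper's one-liner does handle. Moreover, on the shared carrier such a support is not even in $\mathcal{F}_0=\{\emptyset,\mathcal{D}^\star_\Phi\}$, so shared-carrier qdepth does not register robustness at all.

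Your proposed repair does not work. Requiring agreement only on $\mathcal{D}_{\mathrm{SER}}\cap\mathcal{D}_{\mathrm{SI}}$ breaks your third step, because membership in $\mathcal{F}_k$ depends on the whole support, not on its trace on a subset. Worse, the intersection can be empty. Suppose $T_5,T_6$ have a write-write conflict with no enclosing cycle. Then every SER determination commits both and every SI determination aborts one, so $\mathcal{D}_{\mathrm{SER}}\cap\mathcal{D}_{\mathrm{SI}}=\emptyset$. The restricted hypothesis then holds vacuously for every tuple, including one written by $T_5$, whose depth is $0$ under SER and positive under SI.

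To close the gap, compute qdepth on each level's own carrier $\mathcal{D}_L$ with its own filtration, and split into two regimes.
\begin{itemize}
\item If all inputs have depth $0$ under both levels, the upper bound suffices; this is the paper's argument.
\item Otherwise, assume a bijection $\tau\colon\mathcal{D}_{\mathrm{SER}}\to\mathcal{D}_{\mathrm{SI}}$ such that $D\equiv_k D'$ iff $\tau D\equiv_k\tau D'$ for every $k$, and $\tau(\mathrm{supp}_{\mathrm{SER}}(t))=\mathrm{supp}_{\mathrm{SI}}(t)$ for every input tuple $t$. Since $\tau$ commutes with $\cup$ and $\cap$, your induction transports every output support. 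Since $\tau$ induces a bijection on $\equiv_k$-classes, $S\in\mathcal{F}^{\mathrm{SER}}_k$ iff $\tau(S)\in\mathcal{F}^{\mathrm{SI}}_k$, which gives equal qdepth.
\end{itemize}
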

\begin{proof}
  Immediate from Corollary~\ref{cor:propagation-filtration}: positive
  RA cannot increase qdepth beyond its inputs.
\end{proof}

\begin{example}[Compositional insensitivity]
  \label{ex:complement-union}
  Transactions $T_1, T_2, T_3$ each insert into relations
  $R_1, R_2, R_3$.
  None participates in a write-skew or FCW pattern, so each is
  isolation-insensitive.
  A provenance query $Q = R_1 \bowtie R_2 \bowtie R_3$ joins their
  outputs.
  The corollary guarantees that every tuple in $Q$'s result is also
  insensitive---without analyzing $Q$'s interaction with the conflict
  graph.
  This is the compositional extension of Vandevoort et
  al.'s~\cite{vandevoort2025mixed} per-transaction certification:
  the filtration certifies \emph{query results}, not just transactions.
\end{example}

\begin{figure*}[t]
  \centering
  \footnotesize
  \setlength{\tabcolsep}{4pt}
  \renewcommand{\arraystretch}{1.15}

  \begin{tabular}{@{}lll@{}}
    \toprule
    \textbf{Isolation / Case} & \textbf{Resolving determination} & \textbf{Depth} \\
    \midrule
    Read committed &
    (none---no outcome contradicted) & $0$ \\
    SER, no cycles &
    $\{\varphi_{T_1 \prec T_2},\, \varphi_{T_2 \prec T_Q},\, \ldots\}$ (all commute) & $1$ \\
    SER, $d$ visible &
    $\varphi_{T_2 \prec T_Q} \cdot \varphi_{T_Q \prec T_3}$ & $1$ \\
    SER, $d$ absent &
    $\varphi_{T_2 \prec T_3} \cdot \varphi_{T_3 \prec T_Q}$ & $1$ \\
    SER, overlapping cycles &
    $\varphi_{\mathsf{a}}(T_1) \seq \varphi_{\mathsf{a}}(T_2)$
    \;\emph{(Thm.~\ref{thm:txn-depth}c)} & $O(n)$ \\
    Snapshot isolation &
    $\{\varphi_{\snap(T_6){=}\emptyset},\, \varphi_{\snap(T_7){=}\emptyset}\}
      \seq \varphi_{\mathrm{fcw}(S(5),T_6)}
      \seq \varphi_{\snap(T_Q){=}\{\cdots\}}$ & $O(n)$ \\
    \bottomrule
  \end{tabular}

  \smallskip
  {\footnotesize
  $\varphi_{T_i \prec T_j}$ = ordering;\;
  $\varphi_{\mathsf{a}}(T)$ = abort;\;
  $\{\cdots\}$ = commuting set;\;
  $\seq$ = non-commuting sequence.}

  \caption{
    Determination structure across isolation levels
    (Example~\ref{ex:example}), assuming scheduling discretion.
    Depth is $0$ under RC or fully reactive protocols;
    $O(n)$ worst-case under SER or SI with discretion
    (Theorem~\ref{thm:txn-depth}).
  }
  \Description{Table showing resolving determinations and their depth
    across isolation levels under scheduling discretion.}
  \label{fig:det-tables}
\end{figure*}

\section{Beyond Transactions: Negation and the Filtration}
\label{sec:negation-preview}

We now turn to our second instantiation.
In $\mbox{Datalog}^\neg$, the non-monotone operation is
\emph{negation}: concluding $\neg p$ requires establishing that
$p$ has no (further) derivation---a completeness guarantee
that classical approaches handle via monus~\cite{dannert2021semiring}.
Determination provenance decomposes this into a \emph{commitment}
(declaring a set of derivations complete, thereby licensing negation)
followed by monotone evaluation (the entailment that propagates
consequences of that commitment).
The payoff is twofold.
First, determination provenance explains \emph{why-not} under
ambiguity: an atom can be absent because of a semantic commitment, not
a blocked derivation.
Second, the filtration's compositionality lets one check whether a
query result holds under multiple semantics by inspecting supports
algebraically.
(Full details in Appendix~\ref{sec:appendix-datalog}.)

\paragraph{Setup.}
Consider the program
\begin{align*}
  p(c) & \leftarrow a(c).      & r(c) & \leftarrow \neg s(c). \\
  q(c) & \leftarrow \neg p(c). & s(c) & \leftarrow \neg r(c).
\end{align*}
with EDB $\{a(c)\}$.
The stratification depth is $1$: stratum~0 contains $\{a, p\}$, 
stratum~1 contains $\{q\}$ (which negates $p$).
Independently, the pair $(r, s)$ forms an unstratified cycle through negation.

\paragraph{The specification.}
Outcomes assign each atom a value in
$\{\mathbf{t}, \mathbf{f}, \mathbf{u}, \bot\}$
($\bot$ = not yet evaluated, $\mathbf{u}$ = genuinely ambiguous),
ordered $\bot \Ord \mathbf{u} \Ord \{\mathbf{t}, \mathbf{f}\}$
(with $\mathbf{t}$ and $\mathbf{f}$ incomparable); outcomes are total
assignments over the Herbrand base (unevaluated atoms receive $\bot$),
and $o_1 \Ord o_2$ iff $o_1(a) \Ord o_2(a)$ for every atom $a$.
The commitment basis has one \emph{sealing predicate}
$\varphi_{\seal}(p)$ (fixing $p$'s derivation, which
licenses $\neg p$ in $q$'s rule) and two \emph{choice predicates}
$\varphi_{a=v}$ (fixing a $\neg$-cycle atom to
$v \in \{\mathbf{t}, \mathbf{f}\}$).
The specification $\Spec$ admits all assignments consistent with the
program rules and EDB~\cite{gelfond1988stable}.
After all commitments are discharged, two resolved outcomes remain:
$o_r = \{a(c), p(c), r(c)\}$ and $o_s = \{a(c), p(c), s(c)\}$---the
two stable models, each the minimal model of the Gelfond-Lifschitz
reduct under the corresponding truth assignment.
(Our choice predicates enumerate exactly the minimal models of the
local GL reduct for each negative SCC;
Appendix~\ref{sec:appendix-datalog} proves the correspondence.)

The resolving determinations are $\mathcal{D} = \{D^{(r)}, D^{(s)}\}$:
\[
  D^{(r)} = \varphi_{\seal}(p) \seq
    \varphi_{r(c)=\mathbf{t}} \cdot \varphi_{s(c)=\mathbf{f}},
  \qquad
  D^{(s)} = \varphi_{\seal}(p) \seq
    \varphi_{s(c)=\mathbf{t}} \cdot \varphi_{r(c)=\mathbf{f}},
\]
sharing the sealing layer (layer~1, one seal) and branching
at the choice layer (layer~2).
The filtration index counts discharged layers:
$\mathcal{F}_0$ is the state before any commitment;
$\mathcal{F}_1$ reflects the sealing layer
(which all determinations share, so no new distinctions arise); and
$\mathcal{F}_2$ reflects the choice layer where the determinations
diverge.
Concretely:
$\mathcal{F}_0 = \mathcal{F}_1 = \{\emptyset, \mathcal{D}\}$
(all determinations agree through layer~1)
and $\mathcal{F}_2 = 2^{\mathcal{D}}$ (once the choice layer is
discharged, all determinations are distinguished so all subsets are possible).

\[
  \begin{array}{r@{\;=\;}l@{\qquad}r@{\;=\;}l}
    P(p(c)) & \mathcal{D} \times \{x_a\} \;\text{(robust)} &
    P(r(c)) & \{(D^{(r)},\, 1)\} \\
    P(q(c)) & \{\} \;\text{(robustly absent)} &
    P(s(c)) & \{(D^{(s)},\, 1)\}
  \end{array}
\]
Atoms $r(c)$ and $s(c)$ are contingent: each holds under exactly one
determination.
Under $D^{(r)}$, $s(c)$ is absent because of the commitment
$\varphi_{r(c)=\mathbf{t}}$ itself---not a blocked derivation but a
semantic choice.
Reading the filtration only through the sealing layer
($\mathcal{F}_1$ here), these atoms have partial support---they are
\emph{contingent} in our terminology, which is exactly the
well-founded classification $\mathbf{u}$.
Conversely, atoms with full or empty support are \emph{robust}
(WFS's $\mathbf{t}$ and $\mathbf{f}$ respectively): their truth
values are settled without discharging the choice layer.

\begin{theorem}[Negation semantics as filtration levels]
  \label{thm:negation-filtration}
  For a finite $\mbox{Datalog}^\neg$ program whose negative SCCs admit
  a layered choice basis sound and complete for stable models
  (Appendix~\ref{app:datalog-general}), with $k$ stratification
  layers and $d$ negative SCCs in the longest dependency chain, the
  determination semiring has depth $k{+}d$.
  The classical negation semantics correspond to filtration levels:
  \begin{enumerate}[nosep,label=(\alph*)]
    \item Stable-model semantics reads $\mathcal{F}_{k+d}$ (all choice
          layers discharged).
    \item Well-founded semantics reads $\mathcal{F}_k$; atoms with
          full or empty support are \emph{robust} (WFS's $\mathbf{t}$
          and $\mathbf{f}$); atoms with partial support are
          \emph{contingent} (WFS's $\mathbf{u}$).
    \item Stratified semantics reads $\mathcal{F}_k$ and is defined
          only when $d = 0$.
  \end{enumerate}
\end{theorem}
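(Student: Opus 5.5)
The plan is to prove the depth claim first and then read off (a)--(c) from Proposition~\ref{prop:qdepth-characterization} and the layer structure of Remark~\ref{rem:uniform-layering}.

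\textbf{Depth $k{+}d$.} I will build every resolving determination by walking the condensation of the predicate dependency graph in topological order. Each of the $k$ stratification layers contributes one layer of sealing commitments $\varphi_{\seal}(\cdot)$ for the predicates negated from the next stratum. Each negative SCC on a dependency chain contributes one layer of choice commitments drawn from the assumed layered choice basis.

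Commutation will be shown in two parts. Seals of predicates in the same stratum commute, since each only fixes a least fixpoint that is already entailed by lower strata. Choices inside one negative SCC commute once that SCC's inputs are sealed, because they jointly select a minimal model of the local GL reduct.

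Non-commutation across layers comes from history-dependence. A seal or choice for a higher SCC has no fixed effect until its negated inputs are fixed: its admissible outcomes change when a lower choice flips. So the Foata normal form has exactly one layer per stratum seal level and one per chained negative SCC.

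This gives an upper bound of $k{+}d$. For the lower bound I will exhibit, along the longest chain, two determinations that agree through layer $j{-}1$ but differ at layer $j$. This shows $\equiv_{j-1}\neq\equiv_j$, so the filtration does not stabilize early. The soundness and completeness of the choice basis (Appendix~\ref{app:datalog-general}) guarantees that every resolved history is a stable model and conversely. That gives $\mathcal{D}\cong$ the set of stable models and fixes $P(a)(D)$ to membership in the corresponding model.

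\textbf{(a)} follows at once. $\mathcal{F}_{k+d}=2^{\mathcal{D}}$, so reading every support exactly is reading membership in each stable model.

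\textbf{(c)} follows because when $d=0$, $\mathcal{D}$ is a singleton after sealing. The stratified semantics is then the unique resolved outcome reached after the $k$ seal layers, and for $d>0$ no single resolved outcome exists at level $k$.

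\textbf{(b)} is the main obstacle. I will argue it by induction on strata and SCCs, using the alternating fixpoint. Atoms that WFS makes $\mathbf{t}$ (resp.\ $\mathbf{f}$) are exactly those derivable (resp.\ unfounded) from the sealed lower layers alone. By the stable-model correspondence they hold in all (resp.\ no) stable models, so their support is $\mathcal{D}$ (resp.\ $\emptyset$), which is qdepth $0$ and lies in $\mathcal{F}_k$.

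For the converse, I must show that an atom WFS leaves $\mathbf{u}$ has partial support. This is false for general programs: an undefined atom can be false in every stable model, or there may be no stable models at all. So I will invoke the hypothesis that the negative SCCs admit a sound and complete layered choice basis, and read it as guaranteeing that each local GL reduct has minimal models realizing both truth values of every $\mathbf{u}$ atom.

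If that reading is too strong, I would weaken (b) to the direction WFS-$\mathbf{t}/\mathbf{f}$ $\Rightarrow$ robust, and state the converse under the appendix's assumption. I expect this step to be the delicate part of the argument.
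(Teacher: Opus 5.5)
\textbf{Depth count.} Your treatment of the depth count and of (a) and (c) follows the paper's route: a shared sealing prefix, one choice layer per negative SCC in topological order, and a bijection of $\mathcal{D}$ with the stable models. One step, however, fails as written. The lower bound via ``two determinations that agree through layer $j-1$ but differ at layer $j$'' cannot work for $j \le k$. Every resolving determination shares the sealing prefix, so $\equiv_0 = \cdots = \equiv_k$ and $\mathcal{F}_0 = \cdots = \mathcal{F}_k = \{\emptyset,\mathcal{D}\}$. This is exactly the fact the paper's proof uses, and your own (b) and (c) rely on it too. The same strategy also fails at any choice layer whose SCC is forced by its prefix.

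Refinement of the filtration is simply not the invariant that counts layers. Depth is the number of Foata layers, so the lower bound must come from non-commutation of adjacent layers along the longest chain, as in the paper: sealing $S_i$ is a prerequisite for evaluating rules that negate $S_i$-atoms, and the choices available for $C_i$ depend on the outcomes at $C_1,\ldots,C_{i-1}$. Your commutation paragraph already contains most of this. For the seal layers, though, the justification cannot be that ``a lower choice flips,'' since nothing branches there; it has to be the prerequisite ordering within a single history.

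\textbf{Part (b).} Your suspicion is correct, and the failure goes beyond what your proposed rescue can handle. The converse is false even under the theorem's hypothesis, so no reading of that hypothesis about local GL reducts will save it. The reason is that WFS's $\mathbf{u}$ propagates positively to atoms outside the negative SCCs.
\begin{itemize}
  \item In the paper's own running example extended by $d(c) \leftarrow r(c)$ and $d(c) \leftarrow s(c)$, the alternating fixpoint leaves the atom $d(c)$ undefined. Yet the paper itself computes $\mathrm{supp}(P(d(c))) = \mathcal{D}$.
  \item Adding instead $e(c) \leftarrow r(c), s(c)$ yields a $\mathbf{u}$ atom with empty support.
\end{itemize}
Both extended programs still have one stratum boundary and one negative SCC, with a sound and complete choice basis. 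The paper's proof does not close this gap. It asserts that WFS reads the sealing prefix and ``classifies residual ambiguity as $\mathbf{u}$,'' and its appendix version moves the reading to $\mathcal{F}_{k+d^*}$ without addressing the point.

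Your fallback is therefore the right outcome: prove only that WFS-$\mathbf{t}$ atoms have full support and WFS-$\mathbf{f}$ atoms have empty support. For this, cite the standard fact that WFS-true atoms lie in every stable model and WFS-false atoms in none. Do not rely on your characterization ``derivable or unfounded from the sealed lower layers alone,'' which does not match the alternating fixpoint in general, since WFS also settles atoms inside forced SCCs. Any converse needs a global hypothesis on all atoms, not a local one on the SCCs. In effect, it must assume that an atom is $\mathbf{t}$ (resp.\ $\mathbf{f}$) under WFS iff it is true (resp.\ false) in every stable model.
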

\begin{proof}[Proof sketch]
  The commitment basis has $k$ sealing predicates followed by $d$
  choice layers (one per negative SCC in topological order).
  The sealing prefix is shared by all determinations (stratified
  evaluation is unique), so $\mathcal{F}_k = \{\emptyset, \mathcal{D}\}$.
  Each choice layer $k{+}i$ resolves SCC $C_i$; when SCCs interact,
  later choices depend on earlier outcomes, giving $d$ non-commuting
  layers and depth $k{+}d$.
  WFS reads $\mathcal{F}_k$ (the sealing prefix) and classifies
  residual ambiguity as $\mathbf{u}$.
  Full proof in Appendix~\ref{sec:appendix-datalog}.
\end{proof}

\noindent
The provenance payoff is compositionality: supports compose through
positive RA (Corollary~\ref{cor:propagation-filtration}), so one can
check whether a query result holds under multiple semantics by
inspecting supports algebraically, not re-running evaluation.
Note the structural contrast: negation semantics form a chain within
the filtration, while isolation levels are incomparable resolving
subsets (Observation~\ref{obs:basis-entails-filtration}).

\begin{theorem}[Monus elimination]
  \label{thm:monus-elimination}
  For any finite stratified $\mbox{Datalog}^\neg$ program and any
  naturally ordered, zero-divisor-free, $\omega$-continuous commutative
  semiring $K$:
  the support of every atom computed via sealing commitments over
  $(K, +, \cdot)$ equals the support computed via least fixpoint with
  monus over $(K, \dot{-})$~\cite{dannert2021semiring}.
\end{theorem}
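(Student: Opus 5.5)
The plan is induction on strata, maintaining the invariant that for every atom the two computations agree on support, i.e.\ on whether the annotation is nonzero. Let $\mathrm{Sealed}(A)$ denote the annotation of atom $A$ under sealing over $(K,+,\cdot)$, and $\mathrm{Monus}(A)$ its annotation under least-fixpoint evaluation with $\dot{-}$ in the style of Dannert et al. The claim is $\mathrm{Sealed}(A)\neq 0 \iff \mathrm{Monus}(A)\neq 0$. Because the program is stratified, Theorem~\ref{thm:negation-filtration}(c) gives $d=0$. So $\mathcal{D}$ consists of a single determination $\varphi_{\seal}(S_0)\seq\cdots\seq\varphi_{\seal}(S_{k-1})$ (no choice layers), and supports are either $\emptyset$ or $\mathcal{D}$. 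The statement therefore reduces to this presence/absence agreement in the unique resolved outcome.

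\emph{Base case (stratum $0$).} There is no negation, so both methods compute the same least fixpoint of the same positive $K$-polynomial system. This uses $\omega$-continuity for existence and Kleene iteration. The annotations coincide outright, not only their supports.

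\emph{Inductive step.} Assume agreement of supports for all atoms in strata $<i$. In stratum $i$, the sealing semantics reads a negated literal $\neg p(\bar c)$, with $p$ in a lower, sealed stratum, as the indicator $1$ if $p(\bar c)$ is absent from the sealed outcome and $0$ otherwise. Monus semantics instead uses a term of the form $1\dot{-}\mathrm{Monus}(p(\bar c))$, or an $a\dot{-}b$ difference. The key lemma is that in a naturally ordered semiring $1\dot{-}x=0$ iff $1\le x$, and $1\dot{-}0=1$. I first need the support-level statement: $1\dot{-}x\neq 0 \iff x=0$. The direction $x=0\Rightarrow 1$ is immediate. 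The converse is the delicate point, since for general $K$, $x\neq 0$ need not imply $x\ge 1$ (e.g.\ fractions in $\mathbb{R}_{\ge0}$).

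This converse is the step I expect to be the main obstacle, and I will handle it in two stages. I would first try restricting to the reading of negation in Dannert et al.\ that is appropriate for absence, where negated literals are evaluated by the "zero test'' induced by the natural order. Failing that, I would instead prove support agreement only, which needs just $\mathrm{supp}(1\dot{-}x)=\mathrm{supp}$ of the sealing indicator. To get it, I would push the argument through the homomorphism $K\to\mathbb{B}$, $x\mapsto[x\neq0]$. This map is a semiring homomorphism exactly because $K$ is zero-divisor-free ($xy=0\Rightarrow x=0\vee y=0$) and positive, which is implied by natural order plus $x+y=0\Rightarrow x=y=0$. Under this homomorphism both computations map to Boolean stratified evaluation, where monus is set difference and sealing is negation-as-failure, and these agree.

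With the homomorphism in hand, stratum $i$ becomes a positive system whose coefficients (the negated literals) have supports equal in both methods by the inductive hypothesis. The supports of least fixpoints of positive systems are determined by the supports of the coefficients. This follows by Kleene iteration, using commutation of the support map with $+$ and $\cdot$ at each iterate, plus $\omega$-continuity to pass to the supremum: a supremum is nonzero iff some iterate is, by natural order. That closes the induction.
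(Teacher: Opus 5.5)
\textbf{There is a genuine gap, and it sits exactly at the step you call delicate.} You never establish $1 \mathbin{\dot{-}} x \neq 0 \iff x = 0$, and neither fallback supplies it.

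\emph{Fallback (1).} Reading $\neg p(\bar c)$ as a zero test is the sealing indicator by definition. It changes the statement instead of proving it.

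\emph{Fallback (2).} This one fails. The map $h(x) = [x \neq 0]$ is indeed a semiring homomorphism $K \to \mathbb{B}$. But semiring homomorphisms need not commute with $\dot{-}$, and this one does not. By your own correct characterization (standard monus, least $c$ with $a \le b + c$), $1 \mathbin{\dot{-}} x = 0$ iff $1 \le x$. So for every $x \neq 0$ with $1 \not\le x$ you get $h(1 \mathbin{\dot{-}} x) = 1$ while $\neg h(x) = 0$. Under $h$, the sealing computation maps to Boolean stratified evaluation but the monus computation does not. The inductive step is therefore left unproved.

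\textbf{The gap cannot be closed, because your observation shows the statement is false under its stated hypotheses.} Take $K = [0,\infty]$ with ordinary $+$ and $\cdot$ and $0 \cdot \infty = 0$. This is your $\mathbb{R}_{\ge 0}$ example completed so that it is $\omega$-continuous; it is also naturally ordered, zero-divisor-free and commutative. Consider the one-rule program $q \leftarrow \neg a$ with EDB atom $a$ annotated $\tfrac12$:
\begin{itemize}
\item sealing gives $q$ the annotation $0_K$;
\item monus gives $1 \mathbin{\dot{-}} \tfrac12 = \tfrac12 \neq 0_K$, under the standard definition and under the paper's $\max$-based one alike.
\end{itemize}
So the supports of $q$ differ. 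The Viterbi semiring ($1 \mathbin{\dot{-}} \tfrac12 = 1$) fails the same way. With the standard monus, so does $\mathbb{N}^\infty[[X]]$ ($1 \mathbin{\dot{-}} x_1 = 1$). The paper lists both as covered.

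\textbf{Comparison with the paper.} The paper's proof has the same overall structure as yours: a key lemma followed by an induction on strata. It tries to bridge precisely your gap with Lemma~\ref{lem:monus-negation-value}, but that lemma is false. Its proof breaks where it claims the descending chain $(vc)^{2^n}$ has a \emph{nonzero} infimum ``by $\omega$-continuity.'' $\omega$-continuity only concerns suprema of ascending chains, and in $[0,\infty]$ with $v = c = \tfrac12$ the chain tends to $0$.

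\textbf{The repair.} Add the hypothesis that every nonzero element of $K$ lies above $1_K$ in the natural order, as in $\mathbb{B}$, $\mathbb{N}$ and $\mathbb{N}^\infty$. The one-rule program shows this hypothesis cannot be dropped. With it, $h(1 \mathbin{\dot{-}} x) = \neg h(x)$ for the standard monus. Your Kleene-iteration argument then closes the induction as written: the support map commutes with $+$ and $\cdot$ on each iterate, and a supremum of an ascending chain is nonzero iff some iterate is.
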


\begin{proof}
  See Appendix~\ref{sec:appendix-datalog}.
\end{proof}

Sealing replaces monus: rather than subtracting provenance
annotations, one commits to a stratum's completeness (sealed-absent
atoms contribute $1_K$; sealed-present atoms contribute $0_K$).
Within each determination, derivational provenance uses ordinary
positive $K$-evaluation.
For unstratified programs, monus is undefined (no unique model to
subtract against), while determination provenance remains well-defined:
each stable model is a determination, and supports track which models
produce each atom.

\section{Consequences of the Filtration}
\label{sec:discussion}

Observation~\ref{obs:basis-entails-filtration} established that
different specifications over the same commitment basis share a single
filtration.
The preceding sections showed that isolation levels
(Section~\ref{sec:transactions}) and negation semantics
(Section~\ref{sec:negation-preview}) are both instances of this
principle.
We now develop two consequences: quantitative specification
comparison and a connection to PDB query complexity.

\subsection{Quantitative Measures}

The filtration assigns each tuple a depth relative to a provenance
query $Q$ (Definition~\ref{def:det-provenance}).
A tuple with $\mathrm{qdepth}(t) = j$ is settled once layers
$1, \ldots, j$ are discharged; layers $j{+}1, \ldots, d$ were
irrelevant.

\begin{proposition}[Compositional depth bound]
  \label{prop:depth-bound}
  If every input tuple of a positive relational provenance query $Q$
  has $\mathrm{qdepth} < k$, then so does every output tuple
  (Cor.~\ref{cor:propagation-filtration}): layers $k, \ldots, d$
  \mbox{are inert.}
\end{proposition}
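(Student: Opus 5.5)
The plan is a structural induction on the positive relational algebra expression $Q$, with Corollary~\ref{cor:propagation-filtration} doing the work at each operator. Since $\mathcal{F}_0 \subseteq \mathcal{F}_1 \subseteq \cdots$, the hypothesis that every input tuple has $\mathrm{qdepth} < k$ can be restated as: every input support lies in $\mathcal{F}_{k-1}$. Any support in $\mathcal{F}_j$ with $j \le k-1$ lies in $\mathcal{F}_{k-1}$ by monotonicity of the filtration, so qdepth strictly below $k$ is equivalent to membership in $\mathcal{F}_{k-1}$ (Definition~\ref{def:query-depth}). The invariant for the induction is that every tuple produced by a subexpression of $Q$ has support in $\mathcal{F}_{k-1}$.

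\emph{Base case.} A base relation reference returns the input tuples themselves, so the invariant holds by hypothesis.

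\emph{Inductive steps}, one per positive RA operator:
\begin{itemize}[nosep,leftmargin=*]
  \item \emph{Join and union.} Supports are combined by $\cap$ and $\cup$, and Proposition~\ref{prop:filtration-semiring} keeps the result in $\mathcal{F}_{k-1}$.
  \item \emph{Selection.} A determination-independent predicate either keeps a tuple with unchanged support or drops it entirely.
  \item \emph{Projection.} The output support is a finite union of supports of matching tuples, and $\mathcal{F}_{k-1}$ is closed under finite unions. Finiteness holds because instances are finite.
\end{itemize}
These are exactly the cases checked in the proof of Corollary~\ref{cor:propagation-filtration}. The induction therefore yields $\mathrm{qdepth}(t) \le k-1 < k$ for every output tuple $t$.

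For the inertness clause, I read ``layers $k,\ldots,d$ are inert'' as follows: whether an output tuple $t$ holds is constant on each $\equiv_{k-1}$ class. This is precisely what $\mathrm{supp}(P(t)) \in \mathcal{F}_{k-1}$ means, since the support is a union of $\equiv_{k-1}$ classes. So any two determinations that share their history through layer $k-1$ but differ arbitrarily in layers $k,\ldots,d$ agree on every output tuple of $Q$, which is the content of Proposition~\ref{prop:qdepth-characterization}(b).

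The main obstacle is a subtle point in the Boolean abstraction. The support of a join is the intersection of the input supports only if $P_D(t_1)\cdot P_D(t_2)\neq 0$ whenever both factors are nonzero. That requires $K$ to have no zero divisors. Similarly, the support of a sum is the union only if $K$ is positive (no nonzero elements sum to $0$). I would state this assumption explicitly, since it holds for $\mathbb{N}[X]$ and is already implicit in the paper's ``join intersects supports'' convention. With it, the operations on $K^{\mathcal{D}}$ descend to $\cap$ and $\cup$ on supports pointwise in $D$, and the rest of the argument is routine.
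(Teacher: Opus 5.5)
Your proposal is correct and follows the paper's route. The paper obtains this bound directly from Corollary~\ref{cor:propagation-filtration}, and your structural induction with the invariant $\mathrm{supp}(P(t))\in\mathcal{F}_{k-1}$ just spells out that corollary's per-operator cases (join, union, selection, projection); your explicit requirement that $K$ be zero-divisor-free and zero-sum-free, and your reading of ``inert'' as constancy of membership on $\equiv_{k-1}$-classes, are precisions the paper leaves partly implicit. One small fix: the projection step does not need finiteness, because $\mathcal{F}_{k-1}$, as the family of unions of $\equiv_{k-1}$-classes, is closed under arbitrary unions. This matters because ``instances are finite'' does not bound the matching tuples across infinitely many determinations.
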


\noindent
Certifying this is coNP-complete in the residual width
$W_{\geq k}$ (by reduction from robustness,
Theorem~\ref{thm:robustness-conp}), but both instantiations admit
polynomial sufficient checks:
in transactions, absence from $L$-forbidden
cycles~\cite{adya1999weak}; in $\mbox{Datalog}^\neg$,
absence of transitive dependency on any negative
SCC~\cite{gebser2012asp}.

Because different specifications over the same commitment basis share
a filtration (Observation~\ref{obs:basis-entails-filtration}), we can
compare per-tuple qdepth across specifications.
Given a workload executed under specification $I$ (e.g., SER) and a
candidate alternative $I'$ (e.g., SI), the difference decomposes into
two components:

\paragraph{Work regret.}
Tuples that are robust under both $I$ and $I'$ but have
$\mathrm{qdepth}_I(t) > \mathrm{qdepth}_{I'}(t)$ exhibit \emph{work
regret}: the extra commitment layers under $I$ did not change the
outcome.
For transactions this is bidirectional
(Proposition~\ref{prop:ser-si-incomparable}).

\paragraph{Semantic shift.}
Tuples whose support differs between $I$ and $I'$ --- robust under
one, contingent under the other --- represent a genuine change in
meaning, not wasted work.
For transactions, write-skew and FCW patterns
(Proposition~\ref{prop:ser-si-incomparable}) witness the shift in
each direction; for $\mbox{Datalog}^\neg$, shifting from stable-model
to well-founded semantics converts choice-layer atoms to $\mathbf{u}$.

\noindent
Together, work regret and semantic shift measure \emph{how much}
resolution each tuple costs and \emph{which} tuples genuinely change
semantics across specifications.

\subsection{Commitment Responsibility}

Work regret identifies wasted layers; \emph{determination
responsibility} identifies the single commitment most responsible
for a tuple's contingency.
We adapt the Shapley value from cooperative game
theory~\cite{livshits2021shapley}: each commitment in the
determination is a player, and the game measures how much each
commitment contributes to a tuple's contingency.
The Shapley value of commitment $\varphi$ for tuple $t$ is
$\varphi$'s average marginal contribution over all removal orderings
(formal definition in Appendix~\ref{app:responsibility-details}).

In Example~\ref{ex:running-revisited}, responsibility is concentrated
on the ordering commitments that determine whether $T_Q$ observes
$T_2$'s insert before an effective delete.
In deeper filtrations (e.g., the $T_\infty$ construction),
responsibility distributes across layers: each round's victim decision
contributes a fraction of the total contingency, and the Shapley value
quantifies how much.

This lifts the classical tuple-level responsibility of Meliou et
al.~\cite{meliou2010causality} from data tuples to semantic
commitments: where they ask ``which input tuple is most responsible
for an output,'' we ask ``which commitment is most responsible for the
output holding under this resolution.''
Unifying the two into a single joint game is an open question
in Appendix~\ref{app:open-questions}.
Full development, including hardness and approximation results,
appears in Appendix~\ref{app:responsibility-details}.

\subsection{Provenance Query Complexity}

The filtration also connects determination provenance to the
well-studied complexity landscape of probabilistic databases.

\begin{proposition}[Depth-$1$ = tuple-independent PDB]
  \label{prop:layered-pdb}
  A depth-$1$ determination semiring over $n$ binary commitments is
  isomorphic to a tuple-independent PDB over $n$ tuple-existence
  events: supports are positive Boolean formulas, and query evaluation
  has the same complexity.
\end{proposition}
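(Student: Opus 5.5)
The plan is to build an explicit correspondence between the two structures and then show it is preserved by query evaluation. Throughout, I use Proposition~\ref{prop:collapse-depth-1}.

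\textbf{Objects.} At depth~$1$ all commitments commute, so a minimal resolving determination $D$ is fixed by the set of binary commitments it applies. Concretely, for each of the $n$ binary choices I read off which alternative was taken. This gives an injective encoding $\iota:\mathcal{D}\to\{0,1\}^n$. On the PDB side, I take $n$ independent tuple-existence events $e_1,\ldots,e_n$, one per binary commitment; a possible world is then a vector in $\{0,1\}^n$.

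\textbf{Supports.} By Proposition~\ref{prop:collapse-depth-1}, $\mathcal{F}_1=2^{\mathcal{D}}$ and each support is a $\mathrm{PosBool}(\Phi)$ formula. I map such a formula to the same formula over the $e_i$, i.e.\ its PDB lineage. I would check that this map is a semiring homomorphism: $\cup\mapsto\vee$ and $\cap\mapsto\wedge$, with $\emptyset,\mathcal{D}$ sent to $\mathsf{false},\mathsf{true}$. This is immediate because both sides are evaluated pointwise on worlds.

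\textbf{Query evaluation.} Corollary~\ref{cor:propagation-filtration} shows that positive RA computes supports by exactly the lineage rules of a tuple-independent PDB. Join intersects (conjunction), and union and projection take unions (disjunction). Selection by a determination-independent predicate is the same on both sides. Hence, by structural induction on the query, the support of every output tuple is the image of its PDB lineage. Evaluating supports therefore has the same complexity as lineage evaluation, and the reduction runs both ways: any tuple-independent instance can be encoded as $n$ commuting commitments.

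\textbf{Main obstacle.} The hard step is making the world spaces actually match. There are two reasons they may not:
\begin{itemize}[nosep,leftmargin=*]
  \item Valid minimal determinations may not be all of $\{0,1\}^n$. For example, ordering commitments $\varphi_{T_i\prec T_j}$ and $\varphi_{T_j\prec T_i}$ are mutually exclusive, so each pair should be treated as a single binary variable.
  \item Some combinations may be non-resolving.
\end{itemize}
My first attempt is to restrict the statement to bases in which the $n$ binary commitments are independent choices with every combination resolving. The interpretation in Proposition~\ref{prop:collapse-depth-1} (``over valid minimal determinations'') suggests this is the intended reading. Under that restriction the encoding $\iota$ is a bijection onto $\{0,1\}^n$.

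For the complexity claim I would also fix a uniform measure, or any product measure, on $\mathcal{D}$. This makes support measure coincide with PDB query probability, so the dichotomy results for tuple-independent PDBs transfer verbatim. If the world spaces differ, I would fall back to a correlated (block-independent) PDB and note the weakening.
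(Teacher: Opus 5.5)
Your route is the one the paper intends. The paper gives no separate proof: the claim rests on Proposition~\ref{prop:collapse-depth-1} and Corollary~\ref{cor:propagation-filtration}, and that one-line justification glosses over the same point. The trouble is that the step making the target \emph{tuple-independent} fails, even under your restriction.

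\textbf{Polarity.} Once you collapse each complementary pair of commitments into one bit $e_i$, so that $\iota$ is a bijection onto $\{0,1\}^n$, a $\mathrm{PosBool}(\Phi)$ formula does not become ``the same formula over the $e_i$.'' One alternative of choice $i$ is $e_i$; the other is $\neg e_i$. Your homomorphism check then only shows that $2^{\mathcal{D}}$ is the full Boolean algebra of events on $\{0,1\}^n$. That is true, but it says nothing about TI lineage, because positive queries over a TI-PDB produce only monotone lineage.

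\textbf{Base case.} More seriously, your structural induction has no valid base case. The $\wedge/\vee$ rules from Corollary~\ref{cor:propagation-filtration} are the generic intensional rules, valid for any pc-table. What characterizes a TI-PDB is that each \emph{base tuple} is its own independent event. Your base tuples are tuples of $\mathsf{obs}(o)$, not the $e_i$. Their supports are whatever functions of the commitment bits the semantics produces. These can be literals of either polarity (last-writer-wins values) or compound conditions: in Example~\ref{ex:example}, whether $T_Q$ sees $S(2,d)$ depends on three orderings. Distinct base tuples are therefore correlated through shared bits.

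\textbf{A witness from the paper.} The paper's own DNF-validity gadget (Theorem~\ref{thm:robustness-conp}) satisfies your restriction exactly: $n$ independent pairs $T_i^{\pm}$, a single layer, and every combination valid and resolving. It exhibits the consequence directly.
\begin{itemize}
  \item The base tuples $o_i{=}\mathsf{true}$ and $o_i{=}\mathsf{false}$ have supports $e_i$ and $\neg e_i$.
  \item UCQ supports are therefore arbitrary DNFs, and deciding full support is coNP-complete.
  \item For a positive query over a TI-PDB, lineage is monotone. Full support is then decided by a single evaluation on the world of certain tuples, which is polynomial for queries like the gadget's.
\end{itemize}
Robustness alone therefore separates the two structures unless $\mathrm{P}=\mathrm{NP}$, so the TI dichotomy cannot transfer verbatim.

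\textbf{What your induction does establish.} A depth-$1$ determination semiring over $n$ independent binary choices is a pc-table over $n$ independent bits. When every base support is a single literal, it is a block-independent-disjoint PDB with exhaustive two-tuple blocks. TI-PDBs embed as the special case where each base tuple's support is a distinct positive bit. That gives the valid direction of your reduction (at least as hard as TI), but not equality. So the block-independent fallback you reserve for mismatched world spaces is needed even when $\iota$ is a bijection. Closing the gap requires either that extra base-tuple restriction or restating the proposition with BID PDBs or pc-tables as the target.
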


\noindent
At higher depths, the filtration structures the correlations: within
each layer, commitments commute (giving PDB-style lineage), while
cross-layer dependencies introduce conditioning.
Determination provenance thus sits between tuple-independent PDBs
(depth~$1$) and general correlated PDBs (intractable): each layer
admits the PDB complexity toolkit, and depth measures how many layers
must be composed.
Appendix~\ref{app:depth-reduction} develops transformations for
reducing depth.

\section{Related Work}
\label{sec:related}

\noindent\textbf{Algebraic provenance.}
Semiring provenance~\cite{green2007provenance} and its extensions to
probabilistic~\cite{suciu2011probabilistic},
factorized~\cite{olteanu2015factorized}, and
aggregate~\cite{senellart2018provenance} settings operate after the
semantic model is fixed---a single database instance with no
unresolved ambiguity (depth~$0$ in our framework).
Why-not provenance~\cite{buneman2001why}, causal
responsibility~\cite{meliou2010causality}, and counterfactual
explanations~\cite{halpern2001causes} likewise condition on a single
outcome.
Determination provenance generalizes these by making the
commitments that precede derivational provenance explicit and composable.
Determination responsibility (Section~\ref{sec:discussion}) lifts
Shapley-value attribution~\cite{meliou2010causality,livshits2021shapley}
from data tuples to semantic commitments; the two games are sequential,
not competing: determination
responsibility attributes the resolution; tuple-level responsibility
attributes the derivation under that resolution.

\smallskip
\noindent\textbf{Possible-worlds and probabilistic provenance.}
Probabilistic databases~\cite{suciu2011probabilistic} add flat
uncertainty over tuple-existence events (no non-commuting commitment
layers); lineage is a Boolean formula over these events.
A tuple-independent PDB is the depth-$1$ special case of our framework
(Proposition~\ref{prop:collapse-depth-1}): all commitments commute in
a single layer.
The filtration generalizes this to layered, non-commuting commitments;
PDB tractability dichotomies~\cite{dalvi2012probabilistic,suciu2011probabilistic} concern
formula complexity within a single layer---an orthogonal axis that
applies equally within each layer of a determination.

\smallskip
\noindent\textbf{Provenance beyond the monotone fragment.}
Dannert et al.~\cite{dannert2021semiring} extend semiring provenance to
stratified negation within a single model; for stratifiable programs,
our within-determination provenance coincides with theirs.
Deutch et al.~\cite{deutch2014circuits} and
Köhler et al.~\cite{kohler2012declarative} likewise assume a resolved
semantics.
Our contribution addresses unstratified negation, where multiple stable
models coexist and the determination semiring tracks which models
support each tuple.

\smallskip
\noindent\textbf{Provenance in systems settings.}
Database provenance systems (Perm~\cite{glavic2008perm},
GProM~\cite{arab2018gpromprovenance},
Smoke~\cite{psallidas2018smoke}) trace derivations under a fixed
execution.
Vandevoort et al.~\cite{vandevoort2025mixed} study which transactions can
run at weaker isolation while preserving serializable behavior---a
cross-specification support comparison in our framework.
Our framework separates semantic commitments from derivational explanation,
enabling per-execution provenance to be composed across alternative
resolutions (Appendix~\ref{app:systems-lineage}).

\smallskip
\noindent\textbf{Other related frameworks.}
Ameloot et al.~\cite{ameloot2013relational} characterize when
coordination-free evaluation is possible;
Hellerstein~\cite{hellerstein2025complexity} develops a related complexity theory for
the cost of semantic resolution.
We take determinations as given and explain how query results depend
on them algebraically.
Arenas et al.~\cite{arenas1999consistent} and
Wijsen~\cite{wijsen2019certain} define certain answers as
tuples holding in every repair---analogous to robustness,
but over flat repairs rather than layered determinations.

\section{Conclusion}
\label{sec:conclusion}

Determination provenance makes the commitments that resolve semantic
ambiguity explicit and composable.
Supports over resolving determinations form a filtered semiring;
the filtration measures query-relative depth, is respected by positive
relational algebra, and recovers classical provenance at depth~$0$.
Transactions and $\mbox{Datalog}^\neg$ instantiate the framework;
in both, classical semantic variants (isolation levels; negation
semantics) correspond to different views of a shared filtration.
The filtration enables quantitative diagnosis --- work regret,
semantic shift, and per-commitment responsibility --- that goes beyond
binary robustness to explain \emph{how much} and \emph{why} a tuple
depends on each layer of resolution.

This work extends algebraic data
provenance to settings like transactional databases and
distributed systems, where semantic ambiguity is the norm and where
provenance can pay off in full.
The practical implication is a recipe: given a system's event trace,
define an outcome space and refinement order that identifies the
commitments.
The full algebraic machinery then applies---robustness,
counterfactuals, responsibility attribution, cross-specification
comparison---over existing observability infrastructure.
Appendix~\ref{app:systems-lineage} develops this direction.

Looking further ahead, the determination semiring has the structure of
a probability space: $\mathcal{D}$ is a sample space, supports are
measurable sets, and the filtration---which emerged here from the
algebraic structure of commitments---is a filtration in the
measure-theoretic sense (level-$k$ supports are exactly the
$\mathcal{F}_k$-measurable sets).
Under a probability measure on $\mathcal{D}$, support ratios become
probabilities, work regret becomes expected regret, and responsibility
becomes conditional expectation.
That the algebraic structure of determination provenance lands so
close to the machinery of stochastic processes suggests a deeper
connection; developing it is a natural next step.

The remaining appendices provide:
algebraic details (Appendix~\ref{app:algebra}),
robustness and responsibility proofs
(Appendices~\ref{app:robustness-proofs}
and~\ref{app:responsibility-details}),
worked examples for transactions
(Appendices~\ref{sec:appendix-concurrency}
and~\ref{app:protocols}) and $\mbox{Datalog}^\neg$
(Appendix~\ref{sec:appendix-datalog}),
heredity canonicalization
(Appendix~\ref{app:persistence-canonicalization}),
depth reduction (Appendix~\ref{app:depth-reduction}),
and open theory directions
(Appendix~\ref{app:open-questions}).

\pagebreak
\appendix

\section{Algebraic Details}
\label{app:algebra}
\label{app:det-semiring-details}

This appendix develops the algebraic structures introduced in
Section~\ref{sec:algebra}: first the structure \emph{within} a single
determination, then the determination semiring and filtration that
operate \emph{across} determinations.

\subsection{Within a Determination}

A single determination of depth~$k$ is a sequence of conditional
commutative monoids connected by the sequencing operator $\seq$:
\[
  (M_1, \circ_1) \;\seq\; (M_2, \circ_2) \;\seq\; \cdots \;\seq\; (M_k, \circ_k)
  \;\seq\; (K, +, \cdot).
\]
Each $M_\ell$ is the commutative monoid of commitments at layer~$\ell$:
within a layer, commitments commute
($\Spec^{\ell}(H \cdot \varphi \cdot \psi) = \Spec^{\ell}(H \cdot \psi \cdot \varphi)$);
across layers they do not, because later commitments depend on which
outcomes earlier commitments eliminated.
The operator $\seq$ discharges one layer's monoid and conditions the
specification for the next; it is not an algebraic operation.

The terminal element is the derivational semiring $K$ of Green et
al.~\cite{green2007provenance}, which requires a \emph{fixed} database
instance.
Prior to resolution, $\mathsf{obs}$ maps multiple admissible outcomes
to different instances; after resolution, exactly one outcome remains
and $K$-relational evaluation applies.
Classical provenance is the degenerate case $k = 0$: no commitment
layers, just $K$.

\subsection{Support Propagation Under Relational Algebra}

The set $2^{\mathcal{D}}$ of all supports forms a commutative semiring
under $(\cup, \cap, \emptyset, \mathcal{D})$: union is addition,
intersection is multiplication, and distributivity holds because
$(2^{\mathcal{D}}, \cup, \cap)$ is a distributive lattice.
The following proposition shows that positive RA evaluation respects
this structure.

\begin{proposition}
  The supports $\mathrm{supp}(P(t)) \subseteq \mathcal{D}$ compose
  through positive relational algebra as follows
  (assuming $K$ is zero-divisor-free, as holds for $\mathbb{N}[X]$
  and all standard provenance semirings):
  \begin{enumerate}[nosep,label=(\alph*)]
    \item \emph{Join:}
          $\mathrm{supp}(P(t_1 \bowtie t_2))
          = \mathrm{supp}(P(t_1)) \cap \mathrm{supp}(P(t_2))$.
    \item \emph{Union:}
          $\mathrm{supp}(P(t_1) \cup P(t_2))
          = \mathrm{supp}(P(t_1)) \cup \mathrm{supp}(P(t_2))$.
    \item \emph{Projection/Selection:}
          $\mathrm{supp}(P(\pi(t))) \supseteq \mathrm{supp}(P(t))$
          and
          $\mathrm{supp}(P(\sigma(t))) \subseteq \mathrm{supp}(P(t))$.
  \end{enumerate}
\end{proposition}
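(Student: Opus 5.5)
The plan is to work pointwise in $\mathcal{D}$. By Definition~\ref{def:det-provenance}, $P(t)$ lives in the product semiring $K^{\mathcal{D}}$. Positive relational algebra on $K$-relations is evaluated by the Green et al.\ rules, which apply separately at each resolved history. So for every $D\in\mathcal{D}$ the value $P(\cdot)(D)$ of an output tuple is the $K$-relational evaluation over $\mathsf{obs}(\Spec(H_D))$. Concretely:
\begin{itemize}
  \item $P(t_1\bowtie t_2)(D)=P(t_1)(D)\cdot P(t_2)(D)$;
  \item $(P(t_1)\cup P(t_2))(D)=P(t_1)(D)+P(t_2)(D)$;
  \item projection sums $P(t')(D)$ over all $t'$ with $\pi(t')=\pi(t)$;
  \item selection multiplies $P(t)(D)$ by $1_K$ or $0_K$, according to a predicate on $t$.
\end{itemize}
Each claimed support identity is then a statement about when these $K$-values are nonzero, checked one $D$ at a time.

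For the join, $P(t_1)(D)\cdot P(t_2)(D)\neq 0$ iff both factors are nonzero. The backward direction is exactly zero-divisor-freeness. The forward direction holds because $0$ is absorbing. Hence the join support is the intersection.

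For the union, one direction is that $a+b\neq0$ implies $a\neq0$ or $b\neq 0$, since $0+0=0$. The other direction needs $a+b=0\Rightarrow a=b=0$, i.e.\ that $K$ is positive (no additive inverses). I would point out that this second hypothesis, positivity, is also required. It holds in $\mathbb{N}[X]$ and every standard provenance semiring; the zero-divisor-free assumption alone does not give it. I would state it alongside the stated assumption, since the union equality fails in, e.g., $\mathbb{Z}$.

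For projection, $P(\pi(t))(D)=\sum_{t'} P(t')(D)$ includes the summand $P(t)(D)$. By positivity, nonzero $P(t)(D)$ forces a nonzero sum, giving $\supseteq$.

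For selection, the predicate is determination-independent, as in the proof of Corollary~\ref{cor:propagation-filtration}. So $P(\sigma(t))(D)$ is either $P(t)(D)$ or $0$, uniformly in $D$, which gives $\subseteq$. I would remark that equality actually holds when the predicate is satisfied.

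The main obstacle is not computational. It is the implicit hypothesis that $K$ is positive, which (b) and (c) need. I would handle it by making positivity explicit, or by noting that naturally ordered semirings, as in Theorem~\ref{thm:monus-elimination}, are positive. A secondary point is justifying pointwise evaluation. This follows because each $D$ fixes a single instance, so $K$-relational evaluation is well defined there (Theorem~\ref{thm:resolution-implies-monotone}).
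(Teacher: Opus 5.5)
Your proposal is correct and is essentially the paper's own argument: evaluate pointwise at each $D$, use zero-divisor-freeness together with absorption of $0_K$ for the join, and read off union, projection, and selection from the fact that they add summands or multiply by $0_K$ or $1_K$. Your extra observation is also right. The paper's step ``a sum is nonzero iff some summand is nonzero'' (used for union and projection) silently requires positivity ($a+b=0_K \Rightarrow a=b=0_K$). Zero-divisor-freeness does not give this (e.g.\ $\mathbb{Z}$), but every naturally ordered semiring has it, hence so do all the standard provenance semirings the paper has in mind.
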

\begin{proof}
  (a)~A join requires both premises to hold simultaneously.
  Under determination $D$, $t_1 \bowtie t_2$ is nonempty iff both
  $P_D(t_1) \neq 0$ and $P_D(t_2) \neq 0$ (semiring multiplication
  has no zero divisors in $\mathbb{N}[X]$).
  Hence $D \in \mathrm{supp}(P(t_1 \bowtie t_2))$ iff
  $D \in \mathrm{supp}(P(t_1)) \cap \mathrm{supp}(P(t_2))$.

  (b)~A union provides alternative derivation paths.
  $P_D(t_1 \cup t_2) = P_D(t_1) + P_D(t_2)$; this is nonzero iff at
  least one summand is nonzero.

  (c)~Projection can only add derivation paths (combining tuples that
  agree on projected attributes); selection can only remove them.
\end{proof}

\subsection{Filtration Details}

The filtration $\mathcal{F}_0 \subseteq \mathcal{F}_1 \subseteq \cdots
\subseteq \mathcal{F}_d = 2^{\mathcal{D}}$ is defined in
Section~\ref{sec:filtration}.
Here we verify that it interacts correctly with the determination
semiring operations and with query evaluation.

\begin{proposition}[Filtration closure under semiring operations]
  For any $k$, if $S_1, S_2 \in \mathcal{F}_k$, then
  $S_1 \cup S_2 \in \mathcal{F}_k$ and
  $S_1 \cap S_2 \in \mathcal{F}_k$.
\end{proposition}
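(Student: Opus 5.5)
The argument works directly from Definition~\ref{def:filtration}, and it is essentially the same as the one behind Proposition~\ref{prop:filtration-semiring}. I will phrase it through closure under $\equiv_k$ rather than through decompositions into unions of classes. The first step is to record the equivalent characterization. A set $S \subseteq \mathcal{D}$ is a union of $\equiv_k$ classes iff it is \emph{saturated}: for all $D, D' \in \mathcal{D}$, if $D \in S$ and $D \equiv_k D'$, then $D' \in S$.

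For the forward direction, suppose $S = \bigcup_i [D_i]_k$ and $D \in S$. Then $D$ lies in some class $[D_i]_k$, and every $D' \equiv_k D$ lies in the same class, so $D' \in S$. For the converse, a saturated $S$ equals $\bigcup_{D \in S} [D]_k$. This equivalence uses only the fact, noted after Definition~\ref{def:level-k-agreement}, that $\equiv_k$ is an equivalence relation.

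The second step is closure under the two operations. Let $S_1, S_2 \in \mathcal{F}_k$ and take $D \equiv_k D'$.
\begin{itemize}
\item \textbf{Union.} If $D \in S_1 \cup S_2$, then $D$ lies in some $S_i$. Saturation of $S_i$ gives $D' \in S_i \subseteq S_1 \cup S_2$.
\item \textbf{Intersection.} If $D \in S_1 \cap S_2$, then saturation of each $S_i$ gives $D' \in S_1$ and $D' \in S_2$, so $D' \in S_1 \cap S_2$.
\end{itemize}
Hence both $S_1 \cup S_2$ and $S_1 \cap S_2$ are saturated, and so lie in $\mathcal{F}_k$. The same argument handles arbitrary unions and intersections, which covers projection summing over many tuples in Corollary~\ref{cor:propagation-filtration}. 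For completeness I will also note that $\emptyset$ and $\mathcal{D}$ are trivially saturated, so $\mathcal{F}_k$ contains the semiring constants.

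The main obstacle is not the set algebra but a point of well-definedness. Definition~\ref{def:level-k-agreement} compares histories ``through layer~$k$.'' For $\equiv_k$ to be a genuine equivalence relation, particularly transitive, the layer prefix of each $D$ must be canonical; otherwise the notion of saturation is ill-posed. I will handle this by appealing to the Foata normal form layering of Section~\ref{sec:filtration} and Remark~\ref{rem:uniform-layering}. These fix the layers of each determination uniquely, so that $D \equiv_k D'$ means literal equality of a well-defined prefix object, and equality is an equivalence relation. If $k$ exceeds a determination's depth, I will take its prefix to be the whole history, which keeps the relation total and still an equivalence.
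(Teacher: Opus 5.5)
Your proof is correct and takes essentially the paper's approach. The paper argues intersection exactly as you do (closure under $\equiv_k$: if $D \in S_1 \cap S_2$ and $D' \equiv_k D$, then $D'$ lies in both $S_i$), handles union directly as a union of $\equiv_k$ classes (which your saturation formulation subsumes), and simply assumes the well-definedness of $\equiv_k$ that you make explicit.
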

\begin{proof}
  A set is in $\mathcal{F}_k$ iff it is a union of $\equiv_k$ classes
  (Definition~\ref{def:level-k-agreement}).
  The union of two such sets is again a union of $\equiv_k$ classes.
  For intersection: if $D \in S_1 \cap S_2$ and $D' \equiv_k D$,
  then $D' \in S_1$ (since $S_1 \in \mathcal{F}_k$) and
  $D' \in S_2$ (since $S_2 \in \mathcal{F}_k$), so
  $D' \in S_1 \cap S_2$.
\end{proof}

\begin{corollary}[Query evaluation preserves filtration level]
  If $\mathrm{supp}(P(t_1)) \in \mathcal{F}_j$ and
  $\mathrm{supp}(P(t_2)) \in \mathcal{F}_k$, then
  $\mathrm{supp}(P(t_1 \bowtie t_2)) \in \mathcal{F}_{\max(j,k)}$ and
  $\mathrm{supp}(P(t_1 \cup t_2)) \in \mathcal{F}_{\max(j,k)}$.
  Positive RA cannot increase query-relative depth beyond the
  maximum depth of its inputs.
\end{corollary}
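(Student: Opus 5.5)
The plan is to combine the support-propagation proposition with the filtration-closure proposition just proved. First, the filtration is monotone: $\mathcal{F}_j \subseteq \mathcal{F}_{\max(j,k)}$ and $\mathcal{F}_k \subseteq \mathcal{F}_{\max(j,k)}$. This follows because $\equiv_{k'}$ classes refine $\equiv_k$ classes for $k' \ge k$, so every union of $\equiv_k$ classes is also a union of $\equiv_{k'}$ classes. Hence both $\mathrm{supp}(P(t_1))$ and $\mathrm{supp}(P(t_2))$ lie in the common level $m \triangleq \max(j,k)$.

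Next, apply the support-propagation proposition, parts (a) and (b), under the standing assumption that $K$ has no zero divisors. The join support is exactly $\mathrm{supp}(P(t_1)) \cap \mathrm{supp}(P(t_2))$, and the union support is exactly $\mathrm{supp}(P(t_1)) \cup \mathrm{supp}(P(t_2))$. The filtration-closure proposition applied at level $m$ then puts both in $\mathcal{F}_m$, which gives the two displayed memberships.

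For the depth statement, take $j = \mathrm{qdepth}(t_1)$ and $k = \mathrm{qdepth}(t_2)$, which are the minimal levels. The result tuple's support lies in $\mathcal{F}_{\max(j,k)}$, so its minimum level, namely its qdepth, is at most $\max(j,k)$. Extending this to whole positive RA expressions is an induction on expression structure, with two further cases. Selection by a determination-independent predicate leaves the support either unchanged or empty, and $\emptyset \in \mathcal{F}_0$. Projection yields the finite union of the supports of the contributing tuples, which closure under $\cup$ handles.

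The main obstacle is the join step. Equality of supports, rather than mere containment, needs $K$ to be zero-divisor-free; without it, $P_D(t_1)\cdot P_D(t_2)$ could vanish on a set outside $\mathcal{F}_m$. I would state that hypothesis explicitly, as the support-propagation proposition already does, and leave the rest as routine set algebra.
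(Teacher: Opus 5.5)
Your proposal is correct and is essentially the paper's argument: join and union supports are the intersection and union of the input supports, and closure of $\mathcal{F}_{\max(j,k)}$ under $\cap$ and $\cup$ gives both memberships. The monotonicity $\mathcal{F}_j \subseteq \mathcal{F}_{\max(j,k)}$ is needed for this; you verify it explicitly, whereas the paper uses it implicitly. Selection and projection are handled exactly as in the body's proof.

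One small correction to your diagnosis: zero-divisor-freeness is what the join step needs, but the union and projection steps need $K$ to be zero-sum-free ($a+b=0_K$ implies $a=b=0_K$). This holds in $\mathbb{N}[X]$ and in any naturally ordered semiring, but it does not follow from having no zero divisors (the integers are a counterexample). The cited support-propagation proposition glosses over this point in the same way.
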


\section{Robustness Proofs}
\label{app:robustness-proofs}

\begin{theorem}[Robustness is coNP-complete in width]
  \label{thm:robustness-conp}
  For any instantiation where the widest layer has $n$ independent
  binary commitments, validity checking is polynomial, and query
  evaluation is polynomial, deciding robustness is coNP-complete.
\end{theorem}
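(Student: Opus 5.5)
The plan has two parts: membership in coNP by exhibiting a short certificate of non-robustness, and coNP-hardness by reducing from the complement of SAT (equivalently, DNF tautology).

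\emph{Membership.} Robustness of a tuple $t$ means $\mathrm{supp}(P(t)) \in \{\emptyset, \mathcal{D}\}$, that is, $\mathrm{qdepth}(t)=0$ by Proposition~\ref{prop:qdepth-characterization}(a). A witness of non-robustness is a pair of determinations $D_1, D_2$ with $D_1 \in \mathrm{supp}(P(t))$ and $D_2 \notin \mathrm{supp}(P(t))$. Each determination is a choice of values for the polynomially many commitments, at most $n$ binary choices per layer and polynomially many layers. The verifier does three things:
\begin{enumerate}[nosep,label=(\roman*)]
  \item checks that $D_1$ and $D_2$ are valid resolving determinations, which takes polynomial time by the validity-checking hypothesis;
  \item evaluates the provenance query over $\mathsf{obs}(\Spec(H_i))$ for the corresponding resolved histories, which takes polynomial time by the query-evaluation hypothesis;
  \item confirms that $t$ is present in one resolved outcome and absent in the other.
\end{enumerate}
So non-robustness is in NP, and robustness is in coNP. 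For robust-presence alone, a single $D$ with $t$ absent suffices as the certificate.

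\emph{Hardness.} Hardness is shown already in the single-layer case, so the only structure needed is one layer of width $n$. By Proposition~\ref{prop:collapse-depth-1}, supports in this case are exactly positive Boolean formulas over commitment variables. By Proposition~\ref{prop:layered-pdb}, this is isomorphic to a tuple-independent PDB. The reduction goes through the SAT-complement problem. Given a CNF $\psi$ over $x_1,\dots,x_n$, introduce two commitments per variable, $\varphi_{x_i}$ and $\varphi_{\bar x_i}$, which are mutually exclusive and exhaustive. Concretely, use a transactional instance: for each $i$, a pair of conflicting transactions $T_i, T_i'$ whose ordering commitment $\varphi_{T_i\prec T_i'}$ versus $\varphi_{T_i'\prec T_i}$ encodes the truth value of $x_i$. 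These conflicts are on disjoint objects, so all ordering commitments commute and form one layer of width $n$. Each ordering writes a marker tuple into a relation $X(i,v)$. Then use a positive RA query whose output tuple $t$ holds exactly when some clause is falsified:
\[
  Q \;=\; \textstyle\bigcup_{C\in\psi}\;\bigotimes_{\ell\in C} X(\mathrm{var}(\ell),\neg\mathrm{sign}(\ell)).
\]
This is a union of joins, built from constant selections on $X$. Under this query, $t$ has full support iff every assignment falsifies $\psi$, that is, iff $\psi$ is unsatisfiable. The support is never empty unless $\psi$ is trivial, and that degenerate case can be handled separately. Hence robustness of $t$ is equivalent to the unsatisfiability of $\psi$, which is coNP-hard. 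Validity checking and query evaluation in this instance are plainly polynomial.

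\emph{Main obstacle.} The delicate step is ensuring that the constructed instance really lives inside the framework. The issue is $\mathcal{D}$'s identification up to outcome equivalence and minimal determinations (Definition~\ref{def:minimal-resolving}), together with the fact that the layer has exactly $n$ \emph{independent} binary commitments. In particular, one must make sure that minimality or outcome equivalence does not collapse distinct assignments into one element of $\mathcal{D}$, since that would spoil the bijection between assignments and determinations. The plan to handle this is to make each ordering commitment observable through a distinct marker tuple. Then different assignments give different outcomes, and each commitment is necessary for resolution, since the pair $T_i,T_i'$ otherwise leaves $\Ord$-incomparable traces. If the transactional encoding proves awkward, the fallback is to phrase the reduction abstractly: the hypotheses of the theorem already allow any instantiation with $n$ independent binary commitments. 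Under the isomorphism of Proposition~\ref{prop:layered-pdb}, robustness becomes validity of a positive DNF lineage over complementary event pairs, which is coNP-hard.
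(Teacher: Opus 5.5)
Your proposal is correct and follows essentially the paper's route. The paper reduces from \textsc{DNF-Validity} using, per variable, a pair $T_i^+, T_i^-$ with a write-write conflict on a private object $o_i$ (last-writer-wins fixes $x_i$) and a UCQ with one conjunctive query per term; this is exactly your CNF-unsatisfiability reduction applied to the DNF $\neg\psi$, and the paper's ``polynomial conditions'' (polynomial-size determinations, polynomial validity checking, polynomial UCQ evaluation) are the ingredients of the coNP-membership certificate you spell out. One small point: in the paper ``robust'' means full support ($\mathrm{supp} = \mathcal{D}$), not $\mathrm{qdepth} = 0$, but your argument covers both readings.
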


\subsection{Transactional Hardness (Full Proof)}

We reduce from \textsc{DNF-Validity}: given a DNF formula
$\psi = t_1 \vee \cdots \vee t_\ell$ over variables $x_1, \ldots, x_n$,
is $\psi$ true under every truth assignment?
This problem is coNP-complete.

\paragraph{Construction.}
Using the ordering basis (Section~\ref{sec:scheduling-basis}) with all
transactions committing:
for each variable $x_i$, create two transactions $T_i^+$ and $T_i^-$,
each writing to a distinct object $o_i$.
$T_i^+$ writes $\mathsf{true}$; $T_i^-$ writes $\mathsf{false}$.
Both are concurrent; their write-write conflict on $o_i$ creates an edge.
The ordering commitment $\varphi_{T_i^+ \prec T_i^-}$ or
$\varphi_{T_i^- \prec T_i^+}$ determines the final value of $o_i$
(last-writer-wins).
Since each pair conflicts only on its own object, no cycles arise and
every combination of orderings is a valid resolving determination.
The set $\mathcal{D}$ has $2^n$ elements, one per truth assignment.

\paragraph{Query.}
For each term $t_j = \ell_1 \wedge \cdots \wedge \ell_m$ of $\psi$,
define a conjunctive query $Q_j$ that selects a distinguished tuple
$t^*$ iff the objects corresponding to the literals of $t_j$ have the
matching values.
The overall query is $Q = Q_1 \cup \cdots \cup Q_\ell$ (a union of
conjunctive queries).

\paragraph{Correctness.}
Under any resolving determination $D$, the ordering of each pair
$(T_i^+, T_i^-)$ induces a truth assignment $\alpha_D$ to $x_i$.
The tuple $t^*$ is present in $Q$'s result under $D$ iff $\alpha_D$
satisfies at least one term of $\psi$---that is, iff
$\alpha_D \models \psi$.
Hence $t^*$ is robust (present under every $D \in \mathcal{D}$) iff
$\psi$ is valid (true under every assignment).

\paragraph{Polynomial conditions.}
Condition~(i): each determination is a sequence of $n$ ordering
commitments (polynomial-size).
Condition~(ii): checking validity amounts to verifying that the
committed conflict graph under the chosen ordering has no forbidden
cycles (trivially true here, since no cycles exist).
Condition~(iii): evaluating a UCQ over a polynomial-size database is in
polynomial time.

\subsection{Datalog Hardness}

The general theorem (Theorem~\ref{thm:robustness-conp}) applies directly
to the $\mbox{Datalog}^\neg$ instantiation, providing an independent proof of
coNP-hardness via the DNF-Validity reduction.

\paragraph{Instantiation.}
Given $n$ disjoint $\neg$-cycles of length~2, the choice layer has $n$
independent binary commitments (one per cycle).
The DNF-Validity gadget of the transactional proof applies verbatim:
encode a DNF $\psi$ as a UCQ over the choice-induced truth values, so
that the distinguished tuple is robust iff $\psi$ is valid.
This yields coNP-hardness for the $\mbox{Datalog}^\neg$ instantiation.

The result coincides with the known coNP-completeness of skeptical
reasoning under stable-model
semantics~\cite{marek1991autoepistemic}---the robustness question
``is $\mathrm{supp}(P(a)) = \mathcal{D}$?'' is exactly the skeptical
reasoning question ``does $a$ hold in every stable model?''
The match validates that the determination semiring's width-based
abstraction captures the right source of hardness.

\paragraph{Polynomial conditions.}
Condition~(i): a determination is a sequence of sealing and choice
commitments, one per stratum plus one per $\neg$-cycle atom
(polynomial-size).
Condition~(ii): checking that a determination produces a valid stable
model requires computing the Gelfond--Lifschitz
reduct~\cite{gelfond1988stable} and verifying that the proposed model
is its minimal model (polynomial time).
Condition~(iii): evaluating a $\mbox{Datalog}^\neg$ query over a fixed model is in
polynomial time.

\section{Responsibility: Additional Results}
\label{app:multi-layer-resp}
\label{app:responsibility-details}
\label{sec:responsibility}

This appendix develops determination responsibility: a Shapley-value
measure of each commitment's contribution to a contingent tuple.

Fix $n$ independent binary commitments
$\varphi_1, \ldots, \varphi_n$, a realized determination $D^*$
(the resolution that actually occurred), and a
tuple $t$ with support $S \subseteq \mathcal{D}$.
The \emph{presence game} is the cooperative game $(N, v)$ with player
set $N = \{1,\ldots,n\}$: for a coalition $C \subseteq N$,
\[
v(C) \;\triangleq\;
\frac{|\{D \in S \mid D \text{ agrees with } D^* \text{ on every }
\varphi_i,\; i \in C\}|}{2^{n-|C|}}.
\]
Intuitively, $v(C)$ is the probability that $t$ holds when the
outcomes of the commitments in $C$ have been revealed (matching $D^*$)
and the remaining commitments are still unknown (uniformly random).
The \emph{determination responsibility} of $\varphi_i$ is its Shapley
value in this game.

\begin{theorem}[Hardness]
  Computing determination responsibility is $\mathrm{\#P}$-hard, even
  for single-layer determinations with support given as DNF.
\end{theorem}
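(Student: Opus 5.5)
Our plan is to reduce from \#DNF (counting satisfying assignments of a DNF formula), which is \#P-complete, using the efficiency axiom of the Shapley value. The relevant game is the presence game $(N,v)$ defined above, built from a single layer of $n$ independent binary commitments whose support $S$ is given as a DNF $\psi$ over commitment variables. Each determination $D$ corresponds to an assignment $\alpha_D \in \{0,1\}^n$, and $D \in S$ iff $\alpha_D \models \psi$. Proposition~\ref{prop:collapse-depth-1} guarantees that such positive-Boolean supports are exactly the single-layer supports. In the general DNF case we also allow negated literals, which the ordering basis realizes as in the DNF-Validity gadget of Theorem~\ref{thm:robustness-conp}.

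First we evaluate the game at its extremes. $v(\emptyset) = |S|/2^n = \#\psi/2^n$, the fraction of satisfying assignments. $v(N) \in \{0,1\}$, according to whether $\alpha_{D^*} \models \psi$, and this is checkable in polynomial time. Efficiency of the Shapley value then gives
\[
\sum_{i=1}^n \mathrm{Sh}_i(v) \;=\; v(N) - v(\emptyset).
\]
So from $n$ calls to a responsibility oracle, one per commitment, together with a polynomial evaluation of $v(N)$, we recover $\#\psi = 2^n\bigl(v(N) - \sum_i \mathrm{Sh}_i\bigr)$. This is a polynomial-time Turing reduction from \#DNF, so computing determination responsibility is \#P-hard.

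Next we make the instance fit the hypotheses. The construction of Theorem~\ref{thm:robustness-conp} (pairs $T_i^+, T_i^-$ under the ordering basis, a UCQ selecting $t^*$) produces a single-layer determination space with $|\mathcal{D}| = 2^n$ and support of $t^*$ equal to the satisfying set of $\psi$. This is exactly ``single-layer with support given as DNF''. Any $D^*$ may be chosen; the all-false assignment is convenient.

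The main obstacle is modest. We must make sure the Shapley values are rationals of polynomial bit-length, so the oracle answers are exact. We must also confirm that hardness is meant under Turing rather than many-one reductions. If a many-one reduction were required, we would instead add a fresh dummy-like player $\varphi_{n+1}$ and set the support to $\psi \wedge x_{n+1}$. We would then compute its Shapley value directly as an average of $v(C\cup\{n+1\}) - v(C)$, expressing it as a linear function of $\#\psi$. The efficiency route is simpler, and it is the one we would present.
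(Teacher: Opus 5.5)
Your proof is correct, and it takes a different route from the paper. The paper's sketch reduces from counting satisfying assignments of a monotone DNF $\phi$, using support $S=\mathrm{sat}(\phi)$ over one binary commitment per variable. It then notes that the marginal $v(\{i\})-v(\emptyset)$ requires the restricted count $|\mathrm{sat}(\phi|_{x_i=D^*_i})|$, and defers the rest to Livshits et al. Read literally, that shows only that evaluating the characteristic function is \#P-hard. Getting from there to hardness of a Shapley value, which is a specific weighted combination of marginals, is exactly what the cited machinery of padded instances and linear-system inversion supplies.

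Your efficiency argument avoids that machinery. In the presence game, $v(\emptyset)=\#\psi/2^n$ is itself the hard quantity, while $v(N)\in\{0,1\}$ is computable in polynomial time. So $n$ oracle calls plus the telescoping identity $\sum_i \mathrm{Sh}_i = v(N)-v(\emptyset)$ recover $\#\psi$ exactly. This is a complete, self-contained Turing reduction, and Turing reductions are the standard notion for \#P-hardness. What a Livshits-style completion would add is the finer statement that hardness holds even when the queried commitment is fixed in advance.

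Two small remarks, neither of which affects your proof. First, you do not need the ordering-basis gadget or negated literals. The presence game is defined directly over abstract binary commitments, and \#monotone DNF is already \#P-complete. Second, your many-one fallback would not work as stated. With support $\psi\wedge x_{n+1}$, the new player's Shapley value is a weighted sum, over coalitions $C\subseteq N$, of normalized counts of satisfying assignments that agree with $D^*$ on $C$. That sum depends on how many satisfying assignments agree with $D^*$ in each number of positions, not on $\#\psi$ alone. Recovering $\#\psi$ from it would again need the interpolation step.
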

\begin{proof}[Proof sketch]
  We follow the approach of~\cite{livshits2021shapley}.
  Reduction from counting satisfying assignments of a monotone DNF.
  Given a DNF $\phi$ over variables $x_1,\ldots,x_n$, construct a
  determination with one binary commitment per variable and support
  $S = \mathrm{sat}(\phi)$.
  Then $v(\{i\}) - v(\emptyset)$ requires computing
  $|\mathrm{sat}(\phi|_{x_i = D^*_i})|$, which is $\mathrm{\#P}$-hard
  for monotone DNF~\cite{livshits2021shapley}.
\end{proof}

\begin{theorem}[Tractability at bounded support treewidth]
  \label{thm:responsibility-tw}
  For a single-layer determination whose monotone support formula has
  primal-graph treewidth $w$, determination responsibility is computable
  in time $O(2^w \cdot n^2 \cdot p(n))$---FPT in $w$.
\end{theorem}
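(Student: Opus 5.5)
The plan is to compute each Shapley value by dynamic programming over a tree decomposition of the support's primal graph. The first step is to reduce the Shapley value to a polynomial number of \emph{coalition-size-stratified sums}. By the standard formula,
\[
\mathrm{Sh}_i \;=\; \sum_{s=0}^{n-1} \frac{s!\,(n-1-s)!}{n!}\,\bigl(A_{i,s}-B_{i,s}\bigr),
\qquad
A_{i,s} = \sum_{\substack{C \subseteq N\setminus\{i\}\\ |C|=s}} v(C\cup\{i\}),
\qquad
B_{i,s} = \sum_{\substack{C \subseteq N\setminus\{i\}\\ |C|=s}} v(C).
\]
It therefore suffices to compute, for each $i$ and each size $s$, the quantity $\sum_{|C|=s} v(C)$ with $C$ ranging over $N\setminus\{i\}$, optionally with $i$ forced into $C$. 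This follows the approach of~\cite{livshits2021shapley}: reduce to counting-by-size, then count.

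Next, rewrite $v(C)$ as a weighted model count. Since $v(C)$ is the probability that a random assignment, agreeing with $D^*$ on $C$ and uniform elsewhere, satisfies the monotone support formula $\psi$, summing over $C$ of size $s$ is the same as counting pairs $(C,\alpha)$, where $\alpha$ is an assignment that agrees with $D^*$ on $C$. Each pair is weighted by $2^{-(n-|C|)}$ and required to satisfy $\alpha\models\psi$. Introduce for each variable $x_j$ a local state in $\{$in $C$ with value $D^*_j$, out of $C$ with value $0$, out of $C$ with value $1\}$, with weights $1,\tfrac12,\tfrac12$. The desired sums are then the coefficients of $z^s$ in a generating polynomial $\sum_{(C,\alpha)\models\psi} \prod_j w_j\, z^{[j\in C]}$, which is a weighted model count of $\psi$ over a three-valued domain per variable.

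Compute this polynomial by the standard treewidth DP for counting models of a formula whose primal graph has treewidth $w$. Take a nice tree decomposition with $O(n)$ bags of size at most $w+1$. At each node, tabulate a polynomial in $z$ of degree at most $n$ for every assignment of local states to the bag. Each term (conjunct) of $\psi$ lies within some bag, because its variables form a clique in the primal graph. Satisfaction of the monotone DNF, however, is a global disjunction. I would therefore count the complement: the non-satisfying assignments, i.e.\ those falsifying every term. This is a conjunction of local constraints, each checkable at the bag containing its term. Subtracting this count from the unconstrained total, which is easy since it factorizes, gives the satisfying count. Bag tables have $O(3^{w})$ entries. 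The state can be compressed to $2^{w}$ by tracking only the truth value of each variable and folding the $C$-membership choice into the weight, since membership affects only weight and degree, not satisfaction. This gives $2^{w}$ entries. Each polynomial operation costs $p(n)$, and there are $O(n)$ nodes and $O(n)$ sizes $s$. Repeating for each of the $n$ players $i$ gives the claimed $O(2^w\cdot n^2\cdot p(n))$.

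The main obstacle is the compression step and the accounting of the $2^{w}$ factor. Folding the membership choice into the weights needs care. A variable in $C$ is pinned to $D^*_j$, so its truth value is determined and its weight contribution is $z$. A variable out of $C$ takes either value, with weight $\tfrac12$. Each variable's contribution then becomes, for each truth value $b$, the factor $[b=D^*_j]z+\tfrac12$, which yields a $2$-valued state per bag variable. I would verify this factorization first. I would also check that forcing $i\in C$ just fixes $i$'s factor to $z$ at $b=D^*_i$. If the per-player repetition is too costly, a fallback is to compute all $n$ players by rerooting, keeping the bound within a polynomial $p(n)$ absorbed into the statement.
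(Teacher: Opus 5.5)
Your argument is correct. It follows the same overall route as the paper: stratify the Shapley sum by coalition size, in the style of Livshits et al., then do weighted model counting by dynamic programming over a tree decomposition. It differs, however, at the step that actually matters.

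The paper's sketch computes $v(C)$ by a weighted model count with the variables of a \emph{fixed} coalition $C$ pinned. It then asserts that $O(n^2)$ such evaluations, one per player and coalition size, suffice. Read literally, that accounting fails: the size-$s$ term of the Shapley formula sums $v$ over $\binom{n-1}{s}$ coalitions, so one evaluation per coalition would require exponentially many calls.

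Your generating polynomial supplies exactly the missing aggregation. Each variable either joins $C$ (pinned to $D^*_j$, weight $z$) or stays out (weight $\tfrac12$ for each value), and this collapses to the per-value factor $[b=D^*_j]z+\tfrac12$. A single DP over the $2^{w+1}$ truth assignments of each bag then yields $\sum_{|C|=s}v(C)$ for every $s$ at once. Counting the complement keeps every constraint local, since each negated term is a clause whose variables form a clique and so lie in some bag. When you force $i\in C$, give $b\neq D^*_i$ weight $0$ and read off the coefficient of $z^{s+1}$.

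The paper's version buys brevity and a black-box use of a model-counting oracle. Yours buys an argument whose accounting is actually justified, and with slack to spare. The extra factor for the $O(n)$ sizes $s$ that you multiply in is already absorbed by carrying the whole polynomial, so $n$ DP passes of polynomial cost suffice, comfortably within $O(2^w\cdot n^2\cdot p(n))$.
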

\begin{proof}[Proof sketch]
  We follow the approach of~\cite{livshits2021shapley}.
  Computing $v(C)$ reduces to weighted model counting on the support
  formula with variables in $C$ fixed.
  On bounded-treewidth formulas, weighted model counting runs in
  $O(2^w \cdot p(n))$ via dynamic programming on a tree
  decomposition~\cite{livshits2021shapley}.
  The Shapley value requires $O(n^2)$ such evaluations (one per
  coalition size and player), giving the stated bound.
\end{proof}

\subsection{Additive Approximation}

When support treewidth is unbounded, sampling provides an efficient
additive approximation.

\begin{proposition}[Additive approximation]
  \label{prop:fpras}
  For any $\varepsilon, \delta > 0$, $\rho(\varphi_i, t)$ can be
  additively $\varepsilon$-approximated with probability
  $\ge 1 - \delta$ in time
  $O(n/\varepsilon^2 \cdot \log(1/\delta) \cdot p(n))$.
\end{proposition}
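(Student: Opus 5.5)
We estimate the Shapley value $\rho(\varphi_i,t)$ by Monte Carlo sampling over permutations of the players, as in the standard permutation-sampling estimator for Shapley values (cf.~\cite{livshits2021shapley}). Recall that the Shapley value of player $i$ equals the expectation
\[
  \rho(\varphi_i,t) \;=\; \mathbb{E}_{\sigma}\bigl[\, v(\mathrm{Pre}_\sigma(i)\cup\{i\}) - v(\mathrm{Pre}_\sigma(i)) \,\bigr],
\]
where $\sigma$ is a uniformly random ordering of $N$ and $\mathrm{Pre}_\sigma(i)$ is the set of players preceding $i$ in $\sigma$. So it suffices to produce an unbiased estimator with bounded range and apply Hoeffding's inequality.

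The obstacle is that $v(C)$ is itself a counting quantity (a conditional model count of the support), which is $\#$P-hard to compute exactly by the preceding hardness theorem. The plan therefore avoids computing $v$ at all. We rewrite $v(C)$ as a probability: $v(C)=\Pr_{R}[D_{C,R}\in S]$, where $D_{C,R}$ is the determination agreeing with $D^*$ on $C$ and assigning the remaining commitments uniformly at random via $R$. Hence the marginal contribution equals
\[
  \mathbb{E}_{R}\bigl[\mathbf{1}[D_{C\cup\{i\},R}\in S]-\mathbf{1}[D_{C,R}\in S]\bigr].
\]
Here we couple the two terms with the same random bits $R$ on $N\setminus(C\cup\{i\})$, drawing a fresh uniform bit for $\varphi_i$ in the second term. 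A single sample is thus the triple $(\sigma,R,b)$, and the estimator $X\in\{-1,0,1\}$ is the difference of two indicators. By linearity and the permutation formula, $\mathbb{E}[X]=\rho(\varphi_i,t)$.

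Each sample costs $O(n)$ to draw $\sigma$ and the random bits, plus two membership tests ``$D\in S$''. A membership test amounts to checking that the assembled determination is valid and evaluating the provenance query on its resolved outcome, which takes time $p(n)$ under the polynomial assumptions of Theorem~\ref{thm:robustness-conp}. Since $X\in[-1,1]$, Hoeffding's inequality shows that $m=O(\varepsilon^{-2}\log(1/\delta))$ samples give additive error $\varepsilon$ with probability at least $1-\delta$. The total running time is then $O(n/\varepsilon^2\cdot\log(1/\delta)\cdot p(n))$, after absorbing the $O(n)$ per-sample setup into the stated bound.

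The step we expect to need care is the membership oracle. It requires that every combination of the $n$ independent binary commitments yields a valid determination in $\mathcal{D}$, since independence is what makes the uniform completion well-defined. If some combinations are invalid, we would treat them as not in $S$, consistent with the definition of $v$, which divides by $2^{n-|C|}$. With that convention the estimator remains unbiased, and the argument goes through unchanged.
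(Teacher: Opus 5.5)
Your proposal is correct and follows the paper's own argument: sample a uniformly random permutation, fix the predecessors of $\varphi_i$ to their $D^*$-values, complete the remaining commitments uniformly at random, record $\varphi_i$'s marginal effect, and apply Hoeffding's inequality. Your coupled $\{-1,0,1\}$-valued estimator, with invalid completions counted as outside $S$, makes explicit what the paper's sketch leaves implicit, namely that $v$ is never computed exactly; it also gives a per-sample cost of $O(n + p(n))$, which fits within the stated $O(n\,p(n))$.
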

\begin{proof}
  By standard permutation sampling of Shapley
  values~\cite{livshits2021shapley,shapley1953value}: each sample draws
  a random permutation, fixes preceding players to $D^*$-values,
  completes the rest uniformly, and measures $\varphi_i$'s marginal
  effect.
  Hoeffding's inequality gives the bound.
\end{proof}

\subsection{Worked Example: Multi-Layer Responsibility}

For depth $d > 1$, responsibility is defined per layer: the
responsibility of a commitment at layer $k$ is its Shapley value in the
presence game conditioned on layers $1, \ldots, k{-}1$ being discharged.
This respects non-commutativity---one does not ask ``what if layer-$2$
commitments were resolved before layer-$1$?'' because that is not a
valid determination.
Complexity per layer is the same as the single-layer case.
The filtration provides the stopping criterion: if
$\mathrm{qdepth}(t) = k$, only layers $1, \ldots, k$ contribute
nonzero responsibility.

\begin{example}[Multi-layer responsibility]
  \label{ex:multi-layer-resp}
  Three transactions with two overlapping directed 2-cycles:
  $T_1 \rightleftharpoons T_2$ and $T_2 \rightleftharpoons T_3$.
  Under the ordering basis, each 2-cycle requires exactly one abort.
  Let $\varphi_1$ resolve the first cycle (abort $T_1$ or $T_2$) and
  $\varphi_2$ resolve the second (abort $T_2$ or $T_3$).
  If $\varphi_1$ aborts $T_2$, the second cycle is auto-resolved
  (depth~$1$); otherwise both commitments are needed (depth~$2$).

  Tuple $t$ is written by $T_3$; it holds iff $T_3$ is not aborted.
  The determination space is a tree (not a product): if $\varphi_1$
  aborts $T_2$, layer~2 is never discharged.
  \begin{center}\small
  \begin{tabular}{lll}
    & commitments & outcome \\
    \hline
    $D_1$: & abort $T_2$ & $t$ holds \\
    $D_2$: & abort $T_1$,\; abort $T_2$ & $t$ holds \\
    $D_3$: & abort $T_1$,\; abort $T_3$ & $t$ absent
  \end{tabular}
  \end{center}
  Support: $|\mathrm{supp}(t)| = 2$ out of $3$.
  Prior $\Pr[t \text{ holds}] = 2/3$ (uniform over determinations).

  Responsibility at each layer is the change in $t$'s conditional
  presence probability caused by that layer's realized commitment:
  $\rho(\varphi_k) = \Pr[t \text{ holds} \mid \text{layers } 1{\ldots}k
  \text{ fixed to } D^*] - \Pr[t \text{ holds} \mid \text{layers }
  1{\ldots}(k{-}1) \text{ fixed}]$.
  Total responsibility sums across layers to $1 - \Pr[t \text{ holds}]$
  when $t$ holds in $D^*$ (the gain from prior uncertainty to certainty).

  For $D^* = D_2$ --- we observe $t$ holds, with both $T_1$ and $T_3$
  surviving (aborts: $T_1$ at layer~1, $T_2$ at layer~2):
  $\rho(\varphi_1) = 1/2 - 2/3 = -1/6$ (negative---keeping
  $T_2$ alive reduced $t$'s probability from $2/3$ to $1/2$);
  $\rho(\varphi_2) = 1 - 1/2 = 1/2$ (positive---directly
  saving $T_3$ raised it from $1/2$ to certainty).
  Total: $-1/6 + 1/2 = 1/3 = 1 - 2/3$.

  For $D^* = D_1$ --- we observe $t$ holds, with $T_2$ aborted
  (resolving both cycles at layer~1):
  $\rho(\varphi_1) = 1 - 2/3 = 1/3$, no layer-2 responsibility.
  The same total budget ($1/3$) is concentrated entirely on the single
  commitment that made $t$ certain.

  For $D^* = D_3$ --- we observe $t$ absent ($T_3$ aborted):
  $\rho(\varphi_1) = 1/2 - 2/3 = -1/6$;
  $\rho(\varphi_2) = 0 - 1/2 = -1/2$.
  Total: $-1/6 + (-1/2) = -2/3 = 0 - 2/3$.
  Both commitments bear negative responsibility for $t$'s absence.
\end{example}

\section{Transactional Determination Provenance: SI and Cross-Level Comparison}
\label{sec:appendix-concurrency}

The body develops determination provenance for serializability in
detail (Examples~\ref{ex:example} and~\ref{ex:running-revisited}).
This appendix extends the treatment to snapshot isolation (SI), where
first-committer-wins (FCW) introduces a different commitment
structure, and then characterizes the fine-grained separation between
SER and SI in terms of query-relative depth.

\subsection{Snapshot Isolation}
\label{sec:appendix-SI}

\paragraph{Fully reactive SI: depth $0$.}
Under a fully reactive SI implementation, snapshot assignment is
deterministic (each transaction reads the committed state at its start
time) and first-committer-wins (FCW) is deterministic (whichever
transaction's commit request arrives first wins on a contested object;
the other aborts).
All decisions are entailed by the environment; depth is~$0$.

\paragraph{SI with scheduling discretion: depth $O(n)$.}
When the system has discretion over commit ordering (choosing which
concurrent writer commits first) or batching (choosing when to assign
snapshots), depth grows to $O(n)$ via the same structural argument as
serializability.

\begin{proposition}[Worst-case SI depth]
  \label{prop:si-depth-n}
  Under snapshot isolation with commit-order discretion, worst-case
  determination depth is $\Theta(n)$.
\end{proposition}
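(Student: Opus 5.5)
The plan mirrors the proof of Theorem~\ref{thm:txn-depth}(c), with the rw-rw gadget replaced by a write-write gadget that first-committer-wins (FCW) cannot tolerate. SI permits rw-only cycles, so the serializability lower-bound construction does not transfer directly. The bound splits into an upper bound and a matching lower bound.

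\emph{Upper bound.} Every commitment under commit-order discretion resolves the fate of at least one transaction. That commitment is either an FCW victim choice $\varphi_{\mathrm{fcw}(\cdot)}$ or a snapshot/commit-order choice for a pending transaction. A transaction's fate is irrevocable once resolved, by Definition~\ref{def:commitment}. So along any history at most $n$ commitments can be non-redundant, and minimality (Definition~\ref{def:minimal-resolving}) excludes the rest. Hence a minimal resolving determination has at most $n$ commitments, and so at most $n$ layers. This step is routine.

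\emph{Lower bound.} I would build a long-lived writer $T_\infty$ that writes object $y$ and remains active. Fresh transactions $T_1,\ldots,T_{n-1}$ arrive one at a time, and each $T_i$ writes $y$ concurrently with $T_\infty$. Each pair $(T_\infty,T_i)$ is a ww-conflict between concurrent transactions, which FCW forbids. The system therefore has to commit $\varphi_{\mathrm{fcw}(y,\cdot)}$: either $T_\infty$ wins and $T_i$ aborts, or $T_i$ wins and $T_\infty$ aborts.

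If $T_\infty$ is ever aborted, no later arrival conflicts with it, so round $i{+}1$'s commitment exists only conditional on round $i$'s choice. These commitments are also separated by non-commitment events, namely $T_{i+1}$'s arrival and write. By the layer definition (Definition~\ref{def:layer-seq}), consecutive rounds therefore cannot share a layer. The determination that keeps $T_\infty$ alive for $n{-}1$ rounds has depth $n{-}1=\Omega(n)$. I would also check that it is minimal: removing round $i$'s commitment leaves the concurrent ww-conflict on $y$ between $T_\infty$ and $T_i$ unresolved. The resulting history is then not resolved, because both ``$T_i$ commits'' and ``$T_\infty$ commits'' remain $\Ord$-incomparable admissible traces.

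\emph{Expected obstacle.} The main difficulty is non-commutation. I must show that round $i{+}1$'s commitment genuinely depends on round $i$'s outcome, and not only that the two happen to be separated by environment events. Here that means showing $\Spec(H\cdot\varphi_i\cdot E\cdot\varphi_{i+1})$ is defined only on the branch where $\varphi_i$ spared $T_\infty$. My first attempt would be to argue directly from the conflict graph (Definition~\ref{def:conflict-graph}): $G(H')$ omits aborted transactions, so on the other branch $T_\infty$ is absent and no ww-edge to $T_{i+1}$ exists. I also need commit-order discretion to be essential. A fully reactive FCW rule would entail every outcome, which gives depth $0$, as in the SI paragraph preceding the statement.
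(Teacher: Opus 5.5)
Your proof is correct at the paper's level of rigor, but your lower-bound gadget is not the paper's.

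\textbf{The two constructions.} The paper does not keep any transaction alive across rounds. It chains FCW pairs $(T_i,T_i')$ on objects $x_i$, where $T_{i+1}$ reads $x_i$ and writes $x_{i+1}$ only if it sees a particular winner's version. So the round-$i$ winner decides, through SI's snapshot mechanism, whether a round-$(i{+}1)$ conflict exists at all. You instead show directly that SI meets the hypothesis of Theorem~\ref{thm:txn-depth}(c). You replace the rw-rw cycle (which SI tolerates) with a concurrent ww-conflict on $y$ (which FCW forbids) around a long-lived $T_\infty$.

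\textbf{What your route buys.} It is uniform with the SER argument. The dependence between rounds can be read straight off the conflict graph, with no value-dependent writes. It also gives depth $n{-}1$ from $n$ transactions, rather than depth $n$ from $2n$.

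\textbf{What it costs.} The $T_\infty$-favoring commitment must leave $T_\infty$ \emph{uncommitted}. Under SI, once $T_\infty$ commits, later arrivals are no longer concurrent with it and the chain stops. So each round's commitment has to be read as excluding $T_i$'s commit. That can be a deferred commit-order promise $T_\infty \prec T_i$ with $T_i$ held pending, or a victim-style abort. It cannot be ``the winner commits now,'' which is how the paper uses $\varphi_{\mathrm{fcw}}$ in its worked SI example. You should state this reading explicitly. Your minimality check should also say how $T_\infty$'s own fate is finally settled, because a recorded $\mathsf{commit}(T_\infty)$ would, by FCW, entail every earlier $T_i$'s abort. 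The paper's construction needs no such reading: every round's winner actually commits, which matches ``commit-order discretion'' literally.

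\textbf{Upper bound.} Your upper bound is the paper's counting argument. Strictly, it should bound the number of non-redundant commitments per transaction by a constant, since a snapshot choice fixes no commit/abort fate.
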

\begin{proof}
  The construction mirrors Theorem~\ref{thm:txn-depth}.
  Consider $n$ pairs of transactions $(T_1, T_1'), (T_2, T_2'), \ldots,
  (T_n, T_n')$, where $T_i$ and $T_i'$ both write object $x_i$
  (creating an FCW conflict), and $T_{i+1}$ reads $x_i$ before writing
  $x_{i+1}$ (so its snapshot depends on who won FCW on $x_i$).

  At each layer $i$, the system chooses the FCW winner on $x_i$.
  This determines which version of $x_i$ appears in $T_{i+1}$'s
  snapshot, potentially affecting $T_{i+1}$'s write to $x_{i+1}$
  (e.g., $T_{i+1}$ writes $x_{i+1}$ only if it reads a specific value
  of $x_i$).
  The FCW decision at layer $i+1$ therefore depends on the outcome at
  layer $i$: the two decisions do not commute.

  This chain of FCW-to-snapshot dependencies gives depth $n$.
  The upper bound ($n$ transactions, each resolved by at most one
  commitment) gives $O(n)$.
\end{proof}

\paragraph{Per-batch depth.}
Within a single batch of pending requests, depth is~$2$
(seal + commit-order selection), as in serializability
(Theorem~\ref{thm:txn-depth}).

\paragraph{Worked example.}
To illustrate FCW-induced depth concretely, consider two transactions
writing the same object concurrently.
Assume the database initially contains $S(5,e)$, and:
\[
  \begin{array}{ll}
    T_6: \textsf{write } S(5,e') \qquad &
    T_7: \textsf{write } S(5,e'').
  \end{array}
\]
A query transaction $T_Q$ reads the value of key $5$ in $S$ after both
$T_6$ and $T_7$ request commit.
Under SI, only one of $T_6$ or $T_7$ can commit (FCW on key $5$);
the other must abort.
If the system has commit-order discretion, this is a genuine
commitment: choosing $T_6$ as the FCW winner determines that $T_Q$
reads $e'$; choosing $T_7$ gives $e''$.
The determination has depth~$2$ (seal the batch of commit requests,
then choose the winner):
\[
  D_{\mathrm{SI}} =
    \varphi_{\mathsf{seal}} \seq \varphi_{\mathrm{fcw}(S(5),\,T_6)}.
\]
Under $D_{\mathrm{SI}}$, $T_Q$ reads $S(5,e')$; the derivational
provenance is $P_{D_{\mathrm{SI}}}(e') = x_6$.
Under the alternative $\varphi_{\mathrm{fcw}(S(5),T_7)}$, $T_Q$ reads
$S(5,e'')$---a contingent outcome.

\subsection{Fine-Grained Separation of SER and SI}
\label{app:ser-si-separation}

Proposition~\ref{prop:ser-si-incomparable} establishes that
query-relative depth is incomparable across SER and SI.
Here we develop the characterization in full.

\paragraph{Setup.}
Fix a workload $W$ (a set of transactions with their read/write sets)
and a conflict graph $G$ induced by $W$.
For each isolation level $L \in \{\mathrm{SER}, \mathrm{SI}\}$, the
determination semiring $\mathcal{D}_L$ and filtration
$\mathcal{F}^L_0 \subseteq \cdots \subseteq \mathcal{F}^L_d$ are
defined over the same workload but with different admissibility
constraints.
A tuple $t$ derived from transaction $T_i$ has
$\mathrm{qdepth}_L(t) = 0$ iff $T_i$ commits in every resolving
determination under $L$ (robust), and $\mathrm{qdepth}_L(t) > 0$ iff
$T_i$'s fate varies across determinations (contingent).

\paragraph{Characterization.}
The qdepth of $t$ (derived from $T_i$) differs across levels precisely
when $T_i$ participates in a conflict pattern that one level must
resolve but the other does not:

\begin{enumerate}[label=(\alph*),nosep]
  \item \emph{Write skew} ($\mathrm{qdepth}_{\mathrm{SER}}(t) >
        \mathrm{qdepth}_{\mathrm{SI}}(t)$):
        $T_i$ is on a cycle whose edges are all rw-type (anti-dependencies).
        SER forbids all cycles: at least one transaction on the cycle
        must abort, making $t$ contingent ($\mathrm{qdepth}_{\mathrm{SER}} > 0$).
        SI allows rw-only cycles: all transactions commit in every SI
        determination, so $t$ is robust ($\mathrm{qdepth}_{\mathrm{SI}} = 0$).

  \item \emph{FCW-forced abort} ($\mathrm{qdepth}_{\mathrm{SI}}(t) >
        \mathrm{qdepth}_{\mathrm{SER}}(t)$):
        $T_i$ has a write-write conflict with $T_j$ on some object $x$,
        but $T_i$ and $T_j$ are not on any cycle in $G$ (no
        serialization anomaly).
        Under SER: both commit (ordered on $x$, no cycle results);
        $t$ is robust ($\mathrm{qdepth}_{\mathrm{SER}} = 0$).
        Under SI: FCW on $x$ forces one to abort; $t$ holds only if
        $T_i$ wins; $\mathrm{qdepth}_{\mathrm{SI}} > 0$.
\end{enumerate}

\paragraph{Tight condition for qdepth equality.}
$\mathrm{qdepth}_{\mathrm{SER}}(t) = \mathrm{qdepth}_{\mathrm{SI}}(t)$
iff $T_i$ is not involved in either pattern above: it participates in
no rw-only cycle (so SER and SI agree on whether the cycle must be
broken) and in no write-write conflict without an enclosing cycle (so
SER and SI agree on whether both writers commit).
These are exactly the transactions whose commit/abort fate is the same
under both levels---the ``robustly portable'' transactions
of~\cite{vandevoort2025mixed}.

\paragraph{Compositionality.}
By Corollary~\ref{cor:propagation-filtration}, positive relational
algebra cannot increase qdepth.
Hence if all base tuples in a query have equal qdepth across SER and
SI, the query result does too.
A query result can have different qdepth across levels only if at least
one of its base tuples participates in a write-skew or FCW pattern.
This gives a sufficient condition for safe isolation-level change:
if no base tuple of a query is in the SER/SI gap, the query's
provenance is identical under both levels.

\subsection{Certificates and Semantic-Change Queries}

A \emph{certificate} over-approximates a commitment's effect: it
summarizes which outcomes a commitment excludes, admitting possibly more
but never fewer (formal definition in
Appendix~\ref{app:systems-lineage}).
Under serializability, a certificate need only record ordering predicates
sufficient to justify the observation.
Under snapshot isolation, answering semantic-change queries---such as
whether an SI outcome admits a serial explanation---requires preserving
the snapshot structure itself.
Certificates are query-parametric: richer
questions require retaining more semantic information
(Appendix~\ref{app:systems-lineage}).

\section{Protocol-Specific Determination Structures}
\label{app:protocols}

Theorem~\ref{thm:txn-depth} establishes protocol-agnostic depth
bounds via the $T_\infty$ witness.
Here we show how 2PL and MVTO instantiate the ordering basis and
confirm that the bounds are tight.
OCC requires no separate treatment: its determination structure (seal +
transaction-ordering layer) is the body's default presentation
(Example~\ref{ex:overlapping-cycles}), with all conflict detection and
aborts entailed by the chosen validation order.
We first compare per-batch structure across all three protocols, then
give protocol-specific details for 2PL and MVTO.

\subsection{Per-Batch Structure}

\paragraph{2PL with batching.}
Under 2PL, lock acquisition and release are fully reactive (entailed
by the arrival order of requests); the only discretion is
\emph{victim selection} when a deadlock cycle forms.
Without sealing, victim choices cascade: aborting a victim releases
locks, enabling new requests that may form new cycles, giving
$\Theta(n)$ depth.
Sealing a batch (refusing new lock requests) cuts this feedback loop:
all deadlock cycles in the sealed wait-for graph are present
simultaneously, and victim choices across distinct cycles commute
(aborting one victim cannot create a new cycle when no new edges can
form).
Per-batch depth: $2$ (seal + one commuting layer of victim selections).

\paragraph{OCC with batching.}
A batch of pending commit requests is sealed; the system then chooses
a validation order.
Transactions validated earlier can commit (if no cycle among
already-committed); those validated later may find cycles and abort.
The commit-order choice is isomorphic to 2PL victim selection: the
last transaction validated on a cycle is the one that aborts.
Per-batch depth: $2$ (seal + ordering layer).

\paragraph{MVTO with batching.}
For each batch, the system starts with existing transactions (with
timestamps) and an acyclic conflict graph whose topological sort is
consistent with those timestamps.
The batch contains new requests (starts needing timestamp assignment,
reads, writes, commit requests).
The system's only choice is the order in which to process the batch;
all responses (timestamp assignments, read values, forced aborts for
transactions that read between writes, commit acknowledgments) are
entailed by the chosen order.
Per-batch depth: $2$ (seal + action-ordering layer).

\subsection{Two-Phase Locking (2PL)}

Under strict 2PL, a transaction acquires locks before accessing objects
and holds them until commit.
The following details supplement the protocol-agnostic analysis above.

\paragraph{Lock acquisition as ordering commitment.}
Each lock acquisition implicitly fixes a serialization ordering:
if $T_i$ holds a write lock on $x$ and $T_j$ requests a
conflicting lock, $T_j$ blocks until $T_i$ commits (strict 2PL holds
locks until commit), irrevocably establishing $T_i \prec T_j$.
Under the ordering basis
(Section~\ref{sec:scheduling-basis}), this is exactly the ordering
commitment $\varphi_{T_i \prec T_j}$.
The key difference from the protocol-agnostic view is \emph{distribution}:
ordering commitments are made incrementally as locks are acquired
(throughout execution), rather than concentrated at commit time.

\paragraph{Deadlock as forced abort.}
A deadlock cycle $T_1 \to T_2 \to \cdots \to T_m \to T_1$ in the
wait-for graph means the lock-acquisition commitments already made are
jointly inconsistent with all transactions committing.
The system resolves this by aborting one transaction (the ``victim''),
which is an abort commitment $\varphi_{\mathsf{abort}(T_v)}$.
The choice of victim is itself a commitment: different victims yield
different determinations.

\paragraph{Determination structure.}
Under strict 2PL (locks held until commit), ordering commitments
made during the growing phase on distinct objects commute (when
read/write sets are static, ordering $T_i$ on object $x$ does not
affect the lock state of unrelated object $y$).
Non-commutativity arises when deadlock forces a victim choice that
affects subsequent lock availability.
In the fully reactive case (deadlock detected and resolved immediately
on each lock request, with the requestor as victim), all decisions are
entailed and depth is~$0$.
With victim-selection discretion, depth grows to $O(n)$
(Theorem~\ref{thm:txn-depth}): deadlock cycles in the wait-for
graph correspond to conflict cycles requiring resolution.

\subsection{Multiversion Timestamp Ordering (MVTO)}

Under MVTO~\cite{reed1978naming}, each transaction $T_i$ receives a
unique timestamp $ts_i$ at start.
Reads and writes are governed by the timestamp order: $T_i$ reads the
latest version of $x$ written by a transaction with timestamp
$\le ts_i$, and $T_i$'s write of $x$ creates a new version tagged
$ts_i$.
A write by $T_i$ is rejected (and $T_i$ aborted) if some transaction
$T_j$ with $ts_j > ts_i$ has already read an earlier version of $x$:
$T_j$'s read arrived before $T_i$'s write, so the system already
committed to $T_j$ seeing the old version, and $T_i$'s write would
retroactively invalidate that commitment.

\paragraph{Semantic commitments.}
In the fully reactive case (timestamps assigned on arrival), the
primary commitment is timestamp assignment: assigning $ts_i < ts_j$
irrevocably fixes $T_i \prec T_j$ in the serial order on every object
they both access.
This is an ordering commitment $\varphi_{T_i \prec T_j}$ in the
ordering basis (Section~\ref{sec:scheduling-basis}), with a structural
property: a single timestamp assignment simultaneously fixes the
ordering of $T_i$ relative to \emph{all} other active transactions.
With batching, the commitment is finer-grained: the system chooses a
processing order over individual \emph{actions} (reads, writes, starts),
not just transactions.
This action-level ordering subsumes timestamp assignment (timestamps
are assigned when a transaction's start or first read/write operation is
processed) and determines which reads see which versions, hence which
transactions are forced to abort.

\paragraph{Forced aborts.}
Unlike 2PL, MVTO never blocks: conflicting operations are either
served from an appropriate version (reads) or rejected immediately
(writes that arrive too late---after a higher-timestamped transaction
has already read an earlier version).
Rejection forces an abort, but this is not a free choice---it
is entailed by the timestamp assignments already made.
MVTO is therefore deadlock-free.

\paragraph{Determination structure.}
The determination under MVTO mirrors the other protocols (seal +
ordering layer) but at action granularity rather than transaction
granularity.
The action ordering is quotiented by commutativity of adjacent
non-conflicting actions (swapping two actions on disjoint objects has
no observable effect), and forced aborts add no independent
commitments---they are entailed by the ordering.
Fully reactive MVTO has depth~$0$ (timestamp assignment is
entailed by environment event arrival order, and aborts are forced).
With batching, per-batch depth is~$2$: the seal commitment (declaring
the batch complete) followed by a single commitment choosing the
processing order over buffered actions (one total order determines all
version visibility and forced aborts).

\subsection{Comparison Across Protocols.}
All three protocols share a common determination structure (seal +
resolution layer) and the same worst-case depth bounds
(Theorem~\ref{thm:txn-depth}): depth~$0$ when fully reactive,
$\Theta(n)$ with discretion, depth~$2$ per batch.
They differ in commitment basis:
OCC uses a transaction-ordering basis (choose a validation order);
MVTO uses an action-ordering basis (choose a processing order over starts,
reads and writes);
2PL uses a victim-selection basis (choose which transaction to abort
on deadlock).
These bases are incomparable---they record different aspects of the
system's discretion---but yield the same depth structure.

\section[Datalog with Negation: Worked Determination Provenance]{Datalog$^{\neg}$ Worked Determination Provenance}
\label{sec:appendix-datalog}

This appendix instantiates determination provenance for
$\mbox{Datalog}^\neg$ in full generality, then illustrates the
machinery on a running example.
We first develop the canonical commitment basis for programs with
arbitrarily nested negative cycles, prove soundness and completeness,
and state the general semantics-as-filtration theorem.
We then show how the three classical negation semantics (stable,
well-founded, stratified) are recovered as different readings of the
determination semiring, and close with the monus elimination proof.

\begin{example}[Running example]
  \label{ex:datalog-negation}
  Consider the following program $P$ with EDB fact $a(c)$:
  \begin{align*}
    p(c) & \leftarrow a(c).      & r(c) & \leftarrow \neg s(c). \\[1mm]
    q(c) & \leftarrow \neg p(c). & s(c) & \leftarrow \neg r(c).
  \end{align*}
  The pair $(P, \{a(c)\})$ has a stratified fragment ($p, q$: one
  stratum boundary) and a single negative SCC ($r \leftrightarrow
  \neg s$).
  Parameters: $k = 1$, $d = 1$.
  We instantiate the general constructions on this example throughout.
\end{example}

Throughout, we use \emph{sealing predicates}
$\varphi_{\seal}(X)$, which commit that the set of atoms $X$ is complete:
no additional atoms in $X$ will be derived in any admissible outcome.
Such predicates rule out outcomes and therefore constitute genuine semantic
commitments.

\subsection{The Canonical Layered Choice Basis}
\label{app:datalog-general}

\paragraph{Setup.}
Let $P$ be a finite $\mbox{Datalog}^\neg$ program with $k$
stratification layers and $d$ negative SCCs in the longest dependency
chain after SCC condensation.
Let $C_1, \ldots, C_d$ be a topological ordering of the negative SCCs
(if $C_i$'s resolution affects $C_j$'s admissible models, then
$i < j$).
SCCs that are incomparable in the DAG (no dependency path between them)
have choice predicates that do not interact: neither's rules reference
the other's atoms.
Their choices therefore commute and are placed in the same
determination layer.
The nesting depth $d$ counts the longest chain of dependent SCCs, not
the total number; independent SCCs at the same DAG level share a layer.

\paragraph{Relationship to the Gelfond-Lifschitz reduct.}
In the original formulation~\cite{gelfond1988stable}, a candidate model
$I$ is \emph{stable} if $I$ equals the minimal model of the
Gelfond-Lifschitz (GL) reduct $P^I$---the positive program obtained by
(i)~dropping every rule whose body contains $\neg C$ with $C \in I$,
and (ii)~deleting all remaining negated literals.
This is a global, verify-and-check definition.
Our layered choice basis reformulates it constructively: once atoms
outside a negative SCC $C_j$ are fixed (by the stratified prefix and
earlier SCC resolutions), the GL reduct of $C_j$'s rules depends only
on those fixed atoms, and its minimal models are exactly the
\emph{locally stable extensions} that our choice predicates enumerate.

\begin{definition}[Canonical layered choice basis]
  \label{def:canonical-choice-basis}
  The commitment basis has $k$ sealing predicates
  $\varphi_{\seal}(S_1), \ldots, \varphi_{\seal}(S_k)$---where $S_i$
  is the set of atoms sealed at step~$i$, declaring their derivations
  complete and licensing negation of $S_i$-atoms in subsequent
  rules---followed by $d$ choice layers: layer $k{+}i$ contains choice
  predicates that select a local stable extension of $C_i$, applied
  after all layers $k{+}1, \ldots, k{+}i{-}1$ are discharged.
  Sealing predicates do not commute across dependent strata (sealing
  $S_i$ is a prerequisite for evaluating rules that negate
  $S_i$-atoms); within a single choice layer, choice predicates
  commute.
\end{definition}

\noindent
A resolving determination takes the form
\[
  D \;\triangleq\;
  \bigl(
  \varphi_{\seal}(S_1)
  \seq \cdots \seq
  \varphi_{\seal}(S_k)
  \bigr)
  \seq
  \varphi_{C_1} \seq \cdots \seq \varphi_{C_d},
\]
where $\varphi_{C_j}$ denotes the (commuting) set of choice predicates
that resolve SCC $C_j$.
The determination depth is $k{+}d$.

\begin{proposition}[Soundness and completeness]
  \label{prop:canonical-basis-sc}
  For finite $\mbox{Datalog}^\neg$ programs whose negative SCCs admit
  the layered choice decomposition of
  Definition~\ref{def:canonical-choice-basis} (each SCC has finitely
  many local stable extensions given its fixed context, and composing
  local extensions in topological order is sound and complete for
  global stable models), the canonical layered choice basis is sound
  and complete: each resolving determination produces a stable model,
  and every stable model is produced by exactly one resolving
  determination.
\end{proposition}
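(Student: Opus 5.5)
The proof has two directions, soundness and completeness, organized by induction along the layers of the canonical determination
\[
  D = \bigl(\varphi_{\seal}(S_1) \seq \cdots \seq \varphi_{\seal}(S_k)\bigr) \seq \varphi_{C_1} \seq \cdots \seq \varphi_{C_d}.
\]
The central invariant is the following: after discharging the first $j$ layers, every resolved history agrees on all atoms in the discharged strata and SCCs. Moreover, the partial interpretation built so far coincides with the restriction of some global stable model to those atoms. In the other direction, each global stable model restricts to exactly one such partial interpretation.

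For the sealing prefix I would rely on the standard fact that, for the stratified part of the program, every stable model agrees with the perfect (stratified) model. Sealing $S_i$ licenses $\neg a$ exactly for the atoms of $S_i$ that are not derived. Positive evaluation of stratum $i{+}1$ under these sealed negations therefore computes the least model of the GL reduct restricted to that stratum. This step has no choices: every determination shares the same sealing prefix, as the proof sketch of Theorem~\ref{thm:negation-filtration} already notes. Uniqueness of this prefix is what makes the later ``exactly one'' claim reduce to the choice layers.

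For the choice layers, I would appeal to the splitting-set theorem for stable models (Lifschitz--Turner). Consider the atoms of $S_1\cup\cdots\cup S_k\cup C_1\cup\cdots\cup C_{j-1}$ together with everything they depend on. This set is a splitting set, because the topological order on $C_1,\ldots,C_d$ guarantees that no rule defining an earlier atom mentions a later one. Splitting at this set, a stable model of $P$ decomposes into two parts: a stable model of the bottom program, and a stable model of the top part partially evaluated on that bottom model. The partially evaluated top rules for $C_j$ are exactly $C_j$'s rules with the outside atoms fixed, so their stable models are the local stable extensions that $\varphi_{C_j}$ enumerates.

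\textbf{Soundness.} I would argue by induction on $j$. Each determination composes a bottom stable model with a local stable extension, and by the splitting theorem the result is again stable. At $j=d$ this yields a global stable model.

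\textbf{Completeness.} Given a stable model $M$, splitting yields, at each layer, the restriction $M\cap C_j$, which is a local stable extension in the context $M$ restricted to the earlier layers. That sequence of choices defines a determination producing $M$.

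\textbf{Uniqueness.} If two determinations produce the same $M$, they share the sealing prefix. Their choice at each layer must equal $M\cap C_j$, because each choice predicate fixes the truth values of $C_j$'s atoms. Hence they coincide; the identification of determinations up to outcome equivalence in $\mathcal{D}$ handles any residual syntactic variation.

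Within a single layer, independent SCCs commute because their rules share no atoms. Consequently the product of their local choices is again a product of splitting-independent components, and the same argument applies componentwise.

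The main obstacle I expect is verifying the splitting-set condition. The topological order is defined semantically (``if $C_i$'s resolution affects $C_j$''), not by syntactic rule dependencies. I would first replace it with the syntactic condensation DAG of the dependency graph, and then invoke the proposition's hypothesis that composing local extensions is sound and complete to close any remaining gap.
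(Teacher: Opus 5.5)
Your proposal is correct and follows the paper's proof: decompose the GL reduct along the SCC DAG, get soundness by composing local stable extensions in topological order after the shared sealing prefix, and get completeness by restricting a stable model $M$ to each $C_j$. Citing the Lifschitz--Turner splitting-set theorem makes the paper's informal minimality argument rigorous, and your uniqueness step (any determination producing $M$ must choose $M|_{C_j}$ at every layer) supplies the direction that ``exactly one'' actually needs, which the paper's remark that $M \mapsto D_M$ is injective leaves implicit.
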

\begin{proof}
  The full GL reduct decomposes along the SCC DAG: once atoms outside
  $C_i$ are fixed, the reduct of $C_i$'s rules depends only on those
  fixed atoms.

  \emph{Soundness:} composing local stable extensions in topological
  order yields a global stable model (each local extension is a minimal
  model of its local reduct; their composition is a minimal model of
  the full reduct).

  \emph{Completeness:} given a stable model $M$, define $D_M$ by
  taking the shared sealing prefix
  $\varphi_{\seal}(S_1) \seq \cdots \seq \varphi_{\seal}(S_k)$
  followed by the choice $\varphi_{C_i} = M|_{C_i}$ at each layer
  $k{+}i$.
  Local minimality of $M|_{C_i}$ follows from global minimality of $M$
  (a smaller local model would give a smaller global one).
  The map $M \mapsto D_M$ is injective.
\end{proof}

\paragraph{Instantiation (Example~\ref{ex:datalog-negation}).}
With $k = 1$ (one seal: $S_1 = \{p\}$) and $d = 1$ (one SCC:
$\{r, s\}$), the two resolving determinations are:
\[
  D^{(r)}
  \;\triangleq\;
  \varphi_{\seal}(S_1)
  \seq
  \varphi_{r(c)=\mathbf{t}} \cdot \varphi_{s(c)=\mathbf{f}},
  \qquad
  D^{(s)}
  \;\triangleq\;
  \varphi_{\seal}(S_1)
  \seq
  \varphi_{s(c)=\mathbf{t}} \cdot \varphi_{r(c)=\mathbf{f}}.
\]
Depth $= k{+}d = 2$.

\subsection{Negation Semantics as Filtration Levels}

\begin{theorem}[General negation semantics as filtration levels]
  \label{thm:negation-general}
  For a finite $\mbox{Datalog}^\neg$ program satisfying the
  layered-choice decomposition of
  Proposition~\ref{prop:canonical-basis-sc}, with $k$ stratification
  layers and nesting depth $d$, the determination semiring under the
  canonical layered choice basis
  (Definition~\ref{def:canonical-choice-basis}) has depth $k{+}d$.
  The semantics correspondence is:
  \begin{enumerate}[nosep,label=(\alph*)]
    \item \emph{Stable-model semantics} reads $\mathcal{F}_{k+d}$: all
          choice layers discharged, yielding a two-valued model.
    \item \emph{Well-founded semantics} reads $\mathcal{F}_{k+d^*}$
          where $d^* \le d$ is the number of consecutive choice
          layers (starting from $k{+}1$) whose resolution is forced
          (uniquely determined by the preceding prefix).
          Atoms with full or empty support are robust (WFS's
          $\mathbf{t}$ and $\mathbf{f}$); atoms with partial support
          are contingent (WFS's $\mathbf{u}$).
    \item \emph{Stratified semantics} reads $\mathcal{F}_k$: defined
          only when no negative cycles exist ($d = 0$).
  \end{enumerate}
\end{theorem}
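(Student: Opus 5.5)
The proof rests on Proposition~\ref{prop:canonical-basis-sc}: resolving determinations are in bijection with stable models, and every $D\in\mathcal{D}$ has the uniform shape $(\varphi_{\seal}(S_1)\seq\cdots\seq\varphi_{\seal}(S_k))\seq\varphi_{C_1}\seq\cdots\seq\varphi_{C_d}$ (Remark~\ref{rem:uniform-layering}). The plan has three parts. First, establish the depth claim. Then read off $\equiv_j$ for each $j$. Finally, match each classical semantics to the corresponding level $\mathcal{F}_j$ by comparing supports with the truth values each semantics assigns.

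\textbf{Depth.} The upper bound is immediate from the shape of $D$: there are $k$ sealing layers and $d$ choice layers. For the lower bound I will argue that no two adjacent layers can be merged into one commuting layer.
\begin{itemize}
  \item Two consecutive seals $\varphi_{\seal}(S_i)$ and $\varphi_{\seal}(S_{i+1})$ do not commute, because sealing $S_{i+1}$ requires evaluating rules that negate $S_i$-atoms. This is licensed only after $\varphi_{\seal}(S_i)$, so the order changes the admissible set.
  \item Choice layers $C_i$ and $C_{i+1}$ on a dependency chain do not commute, because the local GL reduct of $C_{i+1}$ depends on the atoms fixed by $C_i$. Independent SCCs already share a layer by construction.
\end{itemize}
A longest chain of $k$ dependent strata followed by $d$ dependent SCCs then witnesses depth exactly $k{+}d$.

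\textbf{Filtration levels.} Since stratified evaluation is unique, all determinations share the sealing prefix and its intervening events. Hence $\equiv_k$ is the total relation and $\mathcal{F}_0=\cdots=\mathcal{F}_k=\{\emptyset,\mathcal{D}\}$.

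Now let $d^*$ be the number of initial choice layers whose resolution is forced, i.e., $\varphi_{C_1},\ldots,\varphi_{C_{d^*}}$ each have a unique local stable extension given the prefix. Then all determinations still agree through layer $k{+}d^*$, so $\mathcal{F}_{k+d^*}=\{\emptyset,\mathcal{D}\}$ as well. At layer $k{+}d$ the classes are singletons by injectivity of $M\mapsto D_M$, so $\mathcal{F}_{k+d}=2^{\mathcal{D}}$.

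With these levels in hand, the three semantics follow.
\begin{itemize}
  \item \emph{Part (a).} Reading $\mathcal{F}_{k+d}$ is evaluation at a single $D$, which by Proposition~\ref{prop:canonical-basis-sc} is exactly a stable model.
  \item \emph{Part (c).} When $d=0$ there is exactly one determination, the stratified model, and every support is $\emptyset$ or $\mathcal{D}$.
  \item \emph{Part (b).} Reading at $\mathcal{F}_{k+d^*}$ classifies an atom as robust (support $\mathcal{D}$ or $\emptyset$) or contingent (partial support). The claim is that this classification coincides with WFS's $\mathbf{t}/\mathbf{f}/\mathbf{u}$.
\end{itemize}

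\textbf{Main obstacle.} The hard step is part (b): showing that full support coincides with WFS-true and empty support with WFS-false. As stated, this is the assertion that skeptical stable-model consequence equals the WFS. That is false in general, since WFS is sound for but weaker than cautious stable reasoning, and it fails outright for programs with no stable model.

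I would therefore prove one containment unconditionally: WFS $\mathbf{t}$ implies full support, and WFS $\mathbf{f}$ implies empty support. This follows from the standard fact that every stable model extends the well-founded model. To do it, I would show by induction along the alternating fixpoint that the atoms it decides are exactly those fixed by the sealing prefix and the forced layers $1,\ldots,d^*$; these are the entailments in the three-forces reading.

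For the converse I would add a hypothesis that the WFS-undefined atoms are exactly those first touched by a genuinely branching choice layer. This holds, for instance, when each unforced SCC is an even negative cycle with independent branches, as in the running example. I would check this hypothesis on the running example, where it gives $d^*=0$, reading level $\mathcal{F}_1$, and $r(c),s(c)$ contingent, i.e., $\mathbf{u}$. I expect the final proof to state the equivalence under that assumption, relying on the paper's standing hypothesis that the choice basis is sound and complete for stable models.
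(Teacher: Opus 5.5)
Your treatment of depth, (a), and (c) matches the paper's proof: the sealing prefix is shared by uniqueness of stratified evaluation, dependent choice layers do not commute, the bijection of Proposition~\ref{prop:canonical-basis-sc} gives $\mathcal{F}_{k+d}=2^{\mathcal{D}}$, and $|\mathcal{D}|=1$ when $d=0$.

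On (b) your objection is correct. It is an error in the paper, not a gap in your plan. The paper's proof models the alternating fixpoint as ``checking whether $C_i$ has a unique stable extension,'' and then simply asserts that atoms depending on unforced layers have partial support. Both steps fail.
\begin{itemize}
\item The paper's own worked instantiation adds $d(c)\leftarrow r(c)$ and $d(c)\leftarrow s(c)$ and computes $\mathrm{supp}(P(d(c)))=\mathcal{D}$. Yet the alternating fixpoint leaves $d(c)$ undefined, so a full-support atom is WFS-$\mathbf{u}$.
\item The single SCC $x\leftarrow\neg y$, $y\leftarrow\neg x$, $y\leftarrow\neg y$ has the unique stable model $\{y\}$, so its layer is forced ($d^*=d=1$). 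Yet WFS leaves both $x$ and $y$ undefined.
\end{itemize}
Your own computation $\mathcal{F}_{k+d^*}=\{\emptyset,\mathcal{D}\}$ also shows that $d^*$ plays no role in (b). The claim really is ``WFS equals skeptical stable-model consequence,'' which is false, as you say.

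Your repair has two defects.

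First, the planned induction, that the alternating fixpoint decides \emph{exactly} the atoms fixed by the sealing prefix and the forced layers, is false; the $x,y$ program is a counterexample. It is also unnecessary. The implications WFS-$\mathbf{t}\Rightarrow$ full support and WFS-$\mathbf{f}\Rightarrow$ empty support follow directly from two facts: the Van Gelder--Ross--Schlipf result that the well-founded model is contained in every stable model, and the bijection between $\mathcal{D}$ and stable models.

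Second, your side condition does not yield the converse. In the $d(c)$ example the only negative SCC is an even, independent cycle, exactly your ``for instance'' case. Every WFS-undefined atom ($r(c)$, $s(c)$, $d(c)$) lies in or depends on the branching layer, yet $d(c)$ has full support. Positive rules downstream that merge branches break the equivalence whatever the layer structure is. If you read ``first touched by a branching layer'' as ``takes different values across branches,'' the hypothesis becomes the converse itself.

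The defensible version of (b) is the one-directional containment. Alternatively, restate (b) in terms of cautious and brave stable-model consequence rather than WFS.
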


\begin{proof}
  \emph{Depth $k{+}d$.}
  The $k$ sealing layers are shared by all resolving determinations
  (uniqueness of stratified evaluation).
  Each choice layer $k{+}i$ resolves the atoms in $C_i$; since $C_i$'s
  rules reference atoms in earlier cycles, the choices available at
  layer $k{+}i$ depend on the outcomes of layers $k{+}1, \ldots,
  k{+}i{-}1$.
  Hence layers $k{+}1, \ldots, k{+}d$ do not commute in general, giving
  depth $k{+}d$.

  \emph{(a) Stable models.}
  Each resolving determination $D$ discharges all $k{+}d$ layers,
  producing a unique two-valued assignment---a stable model.
  Conversely, every stable model corresponds to a resolving determination.
  Hence stable models biject with $\mathcal{D}$, and stable-model
  semantics reads $\mathcal{F}_{k+d} = 2^{\mathcal{D}}$.

  \emph{(b) Well-founded semantics.}
  WFS applies the alternating fixpoint~\cite{vanGelder1991wellFounded}
  after the sealing prefix.
  At each choice layer $k{+}i$, the alternating fixpoint checks whether
  $C_i$ has a unique stable extension given the discharged prefix.
  If so, that layer is \emph{forced} and WFS discharges it.
  If not, $C_i$'s atoms are classified as $\mathbf{u}$ and WFS stops
  discharging further layers that depend on $C_i$.
  Let $d^*$ be the number of consecutive forced layers starting from
  $k{+}1$.
  WFS reads $\mathcal{F}_{k+d^*}$: atoms resolved by layers
  $1, \ldots, k{+}d^*$ have full or empty support ($\mathbf{t}$ or
  $\mathbf{f}$); atoms depending on layers $k{+}d^*{+}1, \ldots, k{+}d$
  have partial support ($\mathbf{u}$).
  In the restricted case of Theorem~\ref{thm:negation-filtration}
  ($d = 1$, independent cycles), $d^* = 0$ and WFS reads $\mathcal{F}_k$.

  \emph{(c) Stratified semantics.}
  When $d = 0$, the sealing prefix resolves all atoms.
  $|\mathcal{D}| = 1$ and $\mathcal{F}_k = \{\emptyset, \mathcal{D}\}$.
\end{proof}

\paragraph{The alternating fixpoint as entailment.}
The distinction between the alternating fixpoint and a commitment is:
a \emph{commitment} operates on $2^O$ (replacing the admissible set
with a subset); the \emph{alternating fixpoint} operates on $O$ via
$\Ord$ (refining the observed outcome upward without excluding
alternatives).
It is the specification's choice of $\Ord$ that makes the alternating
fixpoint an entailment rather than a commitment.

\paragraph{Provenance structure.}
For atoms determined by the stratified prefix, all resolving
determinations agree: conditioned provenance is classical.
For atoms in negative SCCs, determination provenance records which
resolving determination is assumed; only after this choice does
ordinary derivational provenance apply.
WFS does not select a determination; it reports contingent atoms as
$\mathbf{u}$.
But the determination semiring retains the full contingency structure:
the support records exactly which determinations make each atom hold.
This is not provenance---provenance requires resolution
(Theorem~\ref{thm:resolution-implies-monotone})---but a property
\emph{of} the determination provenance function, read at filtration
level~$k$ without selecting a determination.

\begin{example}[Nested cycles]
  Consider: $a \leftarrow \neg b.\; b \leftarrow \neg a.\;
  c \leftarrow a, \neg d.\; d \leftarrow \neg c.$
  with EDB $\emptyset$.
  The $(a,b)$ cycle is at layer $k{+}1$; the $(c,d)$ cycle is at layer
  $k{+}2$ (it depends on $a$).
  When $a = \mathbf{t}$, the $(c,d)$ cycle has two stable extensions
  ($\{a,c\}$ and $\{a,d\}$); when $b = \mathbf{t}$, $c$ cannot be
  derived so $d = \mathbf{t}$ is forced---giving three stable models
  total.
  WFS: the alternating fixpoint cannot force $(a,b)$ (both extensions
  are consistent), so $d^* = 0$ and all four atoms are $\mathbf{u}$.
  If we add the rule $a \leftarrow.$ (making $a$ true by derivation),
  then $(a,b)$ is forced ($a = \mathbf{t}$, $b = \mathbf{f}$), so
  $d^* = 1$; WFS then evaluates $(c,d)$ and finds it unforced, giving
  $c = d = \mathbf{u}$.
\end{example}

\subsection{Worked Instantiation (Example~\ref{ex:datalog-negation})}

We trace the three semantics through the running example ($k=1$,
$d=1$, $\mathcal{D} = \{D^{(r)}, D^{(s)}\}$).

\paragraph{Stable models.}
The two stable models are $o_r = \{a, p, r\}$ and $o_s = \{a, p, s\}$.
Provenance:
$P_{D^{(r)}}(r(c)) = 1$, $P_{D^{(s)}}(s(c)) = 1$,
$P_{D^{(r)}}(p(c)) = P_{D^{(s)}}(p(c)) = x_a$
(the EDB annotation of $a(c)$).

\paragraph{Well-founded semantics.}
The stratified prefix determines $p(c) = \mathbf{t}$ (robust),
$q(c) = \mathbf{f}$ (robustly absent).
The $\neg$-cycle atoms have partial support:
$P(r(c)) = \{(D^{(r)}, 1)\}$, $P(s(c)) = \{(D^{(s)}, 1)\}$.
WFS reports $r(c) = s(c) = \mathbf{u}$ (contingent).

Adding the rule $d(c) \leftarrow r(c);\; d(c) \leftarrow s(c)$ gives
$\mathrm{supp}(P(d(c))) = \{D^{(r)}\} \cup \{D^{(s)}\} = \mathcal{D}$:
full support, read off algebraically without iterative fixpoint.

\paragraph{Stratified semantics.}
The stratified fragment ($p, q$) is resolved by the sealing prefix
alone: $P_{D_{\mathrm{strat}}}(p(c)) = P(a(c))$.
The cycle $(r, s)$ is outside its scope.

\paragraph{Cross-semantics comparison.}
$p(c)$ has full support---it holds under every stable model, is robust
under WFS, and is derived by stratified evaluation.
$r(c)$ has partial support ($\{D^{(r)}\}$): it holds under one stable
model, is contingent ($\mathbf{u}$) under WFS, and is undefined under
stratified semantics.

\subsection{Proof of Monus Elimination (Theorem~\ref{thm:monus-elimination})}

Let $P$ be a finite stratified $\mbox{Datalog}^\neg$ program with
stratification $S_0, S_1, \ldots, S_k$ (where $S_0$ is the EDB
stratum), and let $(K, +, \cdot, 0_K, 1_K)$ be a naturally ordered,
zero-divisor-free, $\omega$-continuous commutative semiring.
The zero-divisor-free condition ($a \cdot b = 0_K$ implies $a = 0_K$
or $b = 0_K$) is satisfied by all semirings commonly used in
provenance: $\mathbb{N}[X]$ (the provenance polynomials of Green et
al.~\cite{green2007provenance}), $\mathrm{PosBool}[X]$, the Viterbi
semiring, the tropical semiring, and the access-control
semiring---but not by semirings with absorbing elements such as
$(\{0,1,\ldots,k\}, \max, \min, 0, k)$ for finite $k > 1$.
The $\omega$-continuity condition (directed suprema distribute over
$+$ and $\cdot$) is the standard assumption for datalog fixpoint
semantics over
semirings~\cite{green2007provenance,dannert2021semiring}; it holds for
all the semirings above.
We prove that for every derived atom $a$, the support of $a$
(the set of determinations under which $a$'s annotation is nonzero)
is the same whether negation is handled via sealing or via monus.

\paragraph{Setup.}
Fix a single resolving determination $D$.
Under $D$, the specification is resolved: there is a unique outcome
(a model of the program), and classical semiring
provenance~\cite{green2007provenance} assigns each atom $a$ an
annotation $P_D(a) \in K$ recording how $a$ is derived from EDB facts.
The question is how to compute $P_D(a)$ when the program contains
negation.
Both strategies below produce a $K$-annotation for every atom; they
differ only in how negated literals are handled.
We compare them and show they agree on the zero/nonzero distinction.

\paragraph{Definitions.}
We recall the two evaluation strategies.
Both proceed bottom-up through the stratification, computing a least
fixpoint at each stratum over $(K, +, \cdot)$.
EDB atoms are annotated by their base semiring labels (or $0_K$ if
absent); positive body literals contribute their annotation
multiplicatively (as in standard $K$-relational
evaluation~\cite{green2007provenance}); alternative derivations
(multiple rules for the same head) combine additively.
The strategies differ only in the annotation assigned to a negated
literal $\neg b_j$ appearing in a rule body, where $b_j$ is in some
lower stratum whose fixpoint has already been computed.

\emph{Sealing evaluation.}
After stratum~$i$'s fixpoint stabilizes, \emph{seal} it.
At this point, standard $K$-relational
evaluation~\cite{green2007provenance} has assigned each atom $b$ in
stratum~$i$ a semiring value---call it $\mathrm{val}(b) \in K$---which
is a polynomial in the EDB annotations (in $\mathbb{N}[X]$) or more
generally an element of $K$ built from the base annotations via $+$
and $\cdot$.
This is the familiar provenance polynomial of $b$; it equals $0_K$
iff $b$ is not derivable from the EDB.
For a rule in stratum~$i{+}1$ containing $\neg b_j$ (with $b_j$ in
stratum~$\le i$, already sealed), the negated literal contributes:
\[
  \llbracket \neg b_j \rrbracket_{\mathrm{seal}}
  \;\triangleq\;
  \begin{cases}
    1_K & \text{if } \mathrm{val}(b_j) = 0_K
          \quad\text{(atom absent from the model)}, \\
    0_K & \text{if } \mathrm{val}(b_j) \neq 0_K
          \quad\text{(atom present)}.
  \end{cases}
\]
That is, a successful negation contributes $1_K$ (the multiplicative
identity---it does not block the derivation) and a failed negation
contributes $0_K$ (killing the rule's contribution).

\emph{Monus evaluation}~\cite{dannert2021semiring}.
The same bottom-up fixpoint computation, but the negated literal
contributes:
\[
  \llbracket \neg b_j \rrbracket_{\mathrm{monus}}
  \;\triangleq\;
  1_K \mathbin{\dot{-}} \mathrm{val}(b_j),
\]
where $\mathbin{\dot{-}}$ is the \emph{monus} (truncated subtraction)
of $K$, defined by
$a \mathbin{\dot{-}} b \triangleq \max\{\,c \in K \mid b + c \le a\,\}$
in the natural order $\le$ (where $x \le y$ iff $\exists d.\; x + d = y$).
Here $\mathrm{val}(b_j)$ is the same provenance polynomial as in the
sealing case---the two strategies compute the same fixpoint for
positive rules and differ only in how they interpret $\neg b_j$.

\paragraph{Key lemma.}
The only non-obvious step is showing that monus against $1_K$ acts as
a zero-test---the same behavior as sealing.
The tricky aspect of monus is that $1_K \mathbin{\dot{-}} v$ could in
principle be some nonzero value less than $1_K$ when $v \neq 0_K$; the
zero-divisor-free condition rules this out.
(For the Boolean support abstraction $\mathrm{PosBool}[X]$, where all
nonzero elements are $\ge 1_K$, the result is immediate; the lemma
below provides the stronger algebraic condition covering all standard
provenance semirings.)

\begin{lemma}
  \label{lem:monus-negation-value}
  In a naturally ordered, zero-divisor-free, $\omega$-continuous
  commutative semiring $K$:
  $1_K \mathbin{\dot{-}} v = 1_K$ if $v = 0_K$, and
  $1_K \mathbin{\dot{-}} v = 0_K$ if $v \neq 0_K$.
\end{lemma}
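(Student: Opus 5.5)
The plan is to unfold the definition of monus, $1_K \mathbin{\dot{-}} v = \max\{\,c \in K \mid v + c \le 1_K\,\}$, and treat the two cases separately. The set of candidates $C_v \triangleq \{c \mid v + c \le 1_K\}$ always contains $0_K$, since $v \le 1_K$ or not is irrelevant for $c = 0_K$ only if $v \le 1_K$. So first, for the maximum to exist at all, I will use $\omega$-continuity. Then $C_v$ is downward closed in the natural order. I will argue that it is closed under directed suprema, because $+$ commutes with directed suprema. Its maximum should then be obtained as the supremum of $C_v$, which I will show is itself directed by checking that $C_v$ is closed under the relevant joins in the cases that arise. In practice the two cases below identify $C_v$ explicitly, so existence of the maximum will fall out directly.

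\emph{Case $v = 0_K$.} Here $C_{0_K} = \{c \mid c \le 1_K\}$, whose maximum is $1_K$ by reflexivity of $\le$. Hence $1_K \mathbin{\dot{-}} 0_K = 1_K$. This case is routine.

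\emph{Case $v \neq 0_K$.} The goal is $C_v = \{0_K\}$, or at least $\max C_v = 0_K$. Suppose $c \in C_v$, so that $v + c + d = 1_K$ for some $d$. I will try to show $c = 0_K$ using zero-divisor-freeness in the following way. Squaring the identity gives $1_K = (v + c + d)^2$, which contains the cross term $2vc$. If $c \neq 0_K$, then $vc \neq 0_K$ by zero-divisor-freeness. Moreover, $1_K = v + c + d$ lets me write $1_K = 1_K \cdot 1_K \ge v\cdot 1_K + c \cdot 1_K + 2vc$, that is, $1_K \ge 1_K + 2vc$ after regrouping. Combined with $1_K \le 1_K + 2vc$, this yields $1_K + 2vc = 1_K$. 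The intended conclusion is that a nonzero element absorbed additively by $1_K$ is impossible. I would extract this from the natural order together with the fact that the semiring is the completion of its base annotations: iterating gives $1_K = 1_K + n\cdot 2vc$ for all $n$, and $\omega$-continuity passes to the supremum.

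This step is the main obstacle, and it is where the hypotheses must do real work. Absorption $1_K + w = 1_K$ with $w \neq 0_K$ is exactly what must be excluded, and idempotent semirings such as Viterbi or $\mathrm{PosBool}$ satisfy absorption for every $w \le 1_K$. I would therefore first try to derive non-absorption from zero-divisor-freeness plus antisymmetry of the natural order. If that fails, I would fall back on the structural property the surrounding proof actually uses: $\mathrm{val}(b_j)$ is a provenance value built from base annotations by $+$ and $\cdot$, so for $\mathrm{val}(b_j) \neq 0_K$ one has $\mathrm{val}(b_j) \not\le$ anything incomparable to $1_K$. I would then verify the zero-test directly for such values. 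Once $C_v = \{0_K\}$ is established, $1_K \mathbin{\dot{-}} v = 0_K$, completing the lemma.
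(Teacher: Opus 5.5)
Your argument fails in the case $v \neq 0_K$, and it cannot be repaired, because the lemma as stated is false.

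First, the step ``$1_K \ge v\cdot 1_K + c\cdot 1_K + 2vc$, that is, $1_K \ge 1_K + 2vc$'' does not follow. You have $v\cdot 1_K + c\cdot 1_K = v + c$, which is only $\le 1_K$, because you dropped $d$. Even $v + c + 2vc \le 1_K$ can fail: in $\mathbb{R}_{\ge 0}\cup\{\infty\}$ with $v = c = 0.4$ it equals $1.12$.

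Second, as you suspected, absorption yields no contradiction. Your own remark about idempotent semirings is already a counterexample to the lemma:
\begin{itemize}
\item In the Viterbi semiring $([0,1],\max,\times,0,1)$, which is naturally ordered, zero-divisor-free, $\omega$-continuous, and explicitly listed by the paper, $C_{1/2} = [0,1]$, so $1_K \mathbin{\dot{-}} \tfrac12 = 1_K$.
\item In the Boolean semiring $\mathbb{B}$, $1 \mathbin{\dot{-}} 1 = 1$.
\item In $\mathbb{R}_{\ge 0}\cup\{\infty\}$, which is not idempotent, $1 \mathbin{\dot{-}} \tfrac12 = \tfrac12$.
\end{itemize}
Your fallback via $\mathrm{val}(b_j)$ does not help either: in $\mathrm{PosBool}[X]$ the provenance value $x$ gives $1_K \mathbin{\dot{-}} x = 1_K$.

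Read literally, the statement fails in every nontrivial naturally ordered semiring. If $1_K + 1_K = 1_K$, then $1_K \in C_{1_K}$ and $1_K \mathbin{\dot{-}} 1_K = 1_K$. Otherwise $C_{1_K+1_K} = \emptyset$, since any member would give $1_K + 1_K \le 1_K$ and hence equality by antisymmetry; so the maximum does not exist. This also disposes of your opening claim that $0_K \in C_v$, which needs $v \le 1_K$. It likewise rules out your plan to extract the maximum from $\omega$-continuity, which only supplies suprema of ascending chains.

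The paper aims at the same contradiction (a nonzero $c$ with $v + c \le 1_K$) by a different device. It repeatedly squares $vc$, passes to an ``infimum'' idempotent $e$ with $e + e \le 1_K$, argues $e = 1_K$, and concludes that $1_K + 1_K \le 1_K$ forces $0_K \ge 1_K$. It breaks at the same point. $\omega$-continuity provides suprema, not infima, of chains. In Viterbi or $\mathbb{R}_{\ge 0}\cup\{\infty\}$ the chain $(vc)^{2^n}$ tends to $0_K$. The final implication needs additive cancellation, which fails exactly when $1_K + 1_K = 1_K$.

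So the obstacle you flagged is not a weakness of your route relative to a valid proof; it is where the statement itself fails. What is missing is an extra hypothesis such as ``$v + c \le 1_K$ implies $v = 0_K$ or $c = 0_K$''. This holds in $\mathbb{N}^\infty[[X]]$ and fails whenever $1_K + 1_K = 1_K$. You would also need a convention for $v \not\le 1_K$: for example, $v = x$ in $\mathbb{N}^\infty[[X]]$, where $C_v = \emptyset$. Under that hypothesis $C_v \subseteq \{0_K\}$, and the second clause is immediate.
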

\begin{proof}
  The case $v = 0_K$ is immediate:
  $\max\{c \mid c \le 1_K\} = 1_K$.

  For $v \neq 0_K$: suppose for contradiction that $c \neq 0_K$
  satisfies $v + c \le 1_K$.
  Since $v, c \le 1_K$, monotonicity gives $vc \le v$ and $vc \le c$,
  so $vc + vc \le v + c \le 1_K$.
  Zero-divisor-freeness gives $vc \neq 0_K$.
  Squaring preserves both properties ($w \neq 0_K$ and $w + w \le 1_K$
  imply $w^2 \neq 0_K$ and $w^2 + w^2 \le 1_K$), so repeated squaring
  yields a descending chain $(vc)^{2^n}$, all nonzero, all satisfying
  $w + w \le 1_K$.
  By $\omega$-continuity this chain has an infimum
  $e = (vc)^{2^\infty}$, still nonzero, with $e^2 = e$ and
  $e + e \le 1_K$.
  We claim $e = 1_K$: write $e + f = 1_K$; if $f \neq 0_K$ then
  $ef \neq 0_K$ (zero-divisor-free) and $ef + ef \le e + f = 1_K$,
  so $ef$ is another nonzero element below $e$ satisfying the same
  bound---contradicting $e$ being the infimum.
  So $f = 0_K$ and $e = 1_K$.
  But then $1_K + 1_K \le 1_K$ implies $0_K \ge 1_K$, a contradiction.
\end{proof}

\paragraph{Main argument.}
Given the lemma, the two strategies assign the same zero/nonzero value
to every negated literal (both yield $1_K$ when the atom is absent,
$0_K$ when present).
Since positive evaluation is identical in both strategies, and $K$ is
zero-divisor-free (a product is nonzero iff all factors are; a sum is
nonzero iff some summand is), a straightforward induction on strata
shows that every atom has the same zero/nonzero provenance value under
sealing and monus.
Supports (the zero/nonzero distinction across determinations) therefore
agree.
\qed

\begin{remark}
  Under the assumptions of the theorem (zero-divisor-free,
  $\omega$-continuous), the lemma shows that monus and sealing assign
  \emph{identical} values to negated literals---both collapse to
  $\{0_K, 1_K\}$---so no quantitative information is lost.
  If one relaxes zero-divisor-freeness (e.g., the bounded semiring
  $(\{0,\ldots,k\}, \max, \min, 0, k)$), monus can assign intermediate
  values $1_K \mathbin{\dot{-}} v \notin \{0_K, 1_K\}$, encoding a
  quantitative residual that sealing discards.
  Whether this residual can be captured within the determination
  framework---perhaps via a richer commitment structure that records
  \emph{how much} of a derivation was overcome, not merely whether it
  was present---is an open question
  (Appendix~\ref{app:open-questions}).
\end{remark}

\section{Heredity Canonicalization}
\label{app:persistence-canonicalization}

A commitment is \emph{hereditary} if the outcomes it excludes stay
excluded as the history grows: once ruled out, an outcome never
becomes admissible again.
This is a natural monotonicity property---commitments should be
irrevocable---but not all commitment bases satisfy it syntactically.
For example, a commitment ``proposal $A$ wins the vote based on the
majority so far''
excludes outcomes where $A$ loses; but if new voters submit ballots, the
majority can flip, and the exclusion no longer holds.
The commitment's effect depends on context that has not yet stabilized.
We care about heredity because it guarantees that the admissible set
shrinks monotonically along any history extension, which simplifies
reasoning about determination structure: one need not worry that a
later event could ``undo'' an earlier commitment's exclusion.

We show that heredity is without loss of generality in the
retrospective setting: any non-hereditary commitment basis can be
transformed into a hereditary one (i.e., one where
$o \notin \Spec(H_1 \cdot \varphi) \Rightarrow o \notin \Spec(H_2 \cdot \varphi)$
for all $H_1 \sqsubseteq H_2$) that preserves not only resolved
outcomes but the full determination structure---the same set
$\mathcal{D}$, the same supports, the same filtration, and the same
$K$-valued provenance annotations.

\begin{definition}[Dependency set]
  The \emph{dependency set} of a commitment $\varphi$ at history $H$ is
  the set of events $S \subseteq E(H)$ whose presence or absence affects
  which outcomes $\varphi$ excludes.
  Formally, $S$ is minimal such that $\varphi$'s exclusion set is
  determined by $S \cap E(H)$ for all $H$.
\end{definition}

\begin{definition}[Sealing commitment]
  For a set of events $S$, the \emph{sealing commitment}
  $\varphi_{\seal(S)}$ declares that $S$ is complete: it excludes
  outcomes that are admissible only if additional events of the types in
  $S$ occur.
\end{definition}

\begin{proposition}[Heredity canonicalization preserves determination structure]
  \label{prop:persistence-canon}
  Let $\varphi$ be a non-hereditary commitment with dependency set $S$,
  applied at history $H$ within a completed history $H^*$ (so that $S$
  has stabilized: no further events of types in $S$ occur in any
  extension within $H^*$).
  Write $\Spec_{H^*}(H')$ for the admissible set at $H'$ conditioned
  on the completed history $H^*$ (i.e., restricted to outcomes
  consistent with $H^*$'s event structure).
  Define the replacement: $\varphi_{\seal(S)}$ followed by the
  deterministic function $f_\varphi$ that computes $\varphi$'s effect
  over the sealed state (an entailment, not a commitment).
  Then:
  \begin{enumerate}[nosep,label=(\alph*)]
    \item $\varphi_{\seal(S)}$ is hereditary.
    \item The seal is non-filtering at $H$ relative to $H^*$:
          $\Spec_{H^*}(H \cdot \varphi_{\seal(S)}) = \Spec_{H^*}(H)$
          (since $S$ has already stabilized within $H^*$, the seal
          excludes no outcome that was previously admissible).
    \item The entailment $f_\varphi$ reproduces $\varphi$'s exclusions
          exactly: $\Spec_{H^*}(H \cdot \varphi_{\seal(S)} \cdot f_\varphi)
          = \Spec_{H^*}(H \cdot \varphi)$.
    \item The remainder of the determination is unchanged: for any
          subsequent commitments $\varphi_2 \cdot \varphi_3 \cdots$,
          $\Spec_{H^*}(H \cdot \varphi_{\seal(S)} \cdot f_\varphi \cdot
          \varphi_2 \cdot \varphi_3 \cdots)
          = \Spec_{H^*}(H \cdot \varphi \cdot \varphi_2 \cdot \varphi_3 \cdots)$.
    \item The canonicalization preserves the set of resolving
          determinations $\mathcal{D}$, all supports, the filtration,
          and all $K$-valued provenance annotations.
  \end{enumerate}
\end{proposition}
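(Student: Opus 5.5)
I will prove the five parts in order, treating (a)--(c) as local facts about a single replacement and (d)--(e) as propagation claims obtained by induction along the determination.

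For (a), I will unfold the sealing commitment. Its exclusion set is defined purely in terms of $S$: it excludes exactly those outcomes that need further $S$-type events. Under $H_1 \sqsubseteq H_2$, clause (3) of history extension forbids new predecessors of existing events. The seal's own event therefore cuts off the type-$S$ events in its past. An outcome that is admissible only if more $S$-events occur stays inadmissible in any extension, since the seal asserts that no such events will be admitted. So $o\notin\Spec(H_1\cdot\varphi_{\seal(S)})$ implies $o\notin\Spec(H_2\cdot\varphi_{\seal(S)})$.

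For (b), I will use the hypothesis that $S$ has stabilized within $H^*$. Every outcome in $\Spec_{H^*}(H)$ is consistent with $H^*$'s event structure, and $H^*$ contains no additional $S$-type events after $H$. Hence no outcome in $\Spec_{H^*}(H)$ depends on further $S$-events, and the seal removes nothing.

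For (c), I will use the definition of the dependency set. By minimality, $\varphi$'s exclusion set is a function of $S\cap E(H)$. After sealing, this intersection is fixed and equals its value in $H^*$. The function $f_\varphi$ is defined to compute exactly that exclusion set, so it removes precisely the outcomes $\varphi$ removes at $H$. Because $f_\varphi$ is determined by the sealed state, it is an entailment rather than a free choice.

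For (d), I will induct on the number of subsequent commitments. The two histories may differ as event structures, but they have equal conditioned admissible sets after the replaced step, by (b) and (c). The main obstacle is that each later $\varphi_i$ depends on the full history, not only on the admissible set, so equal admissible sets alone are not enough. I will handle this by identifying histories up to the outcome-equivalence $\sim$ used to define $\mathcal{D}$, together with the convention that each $\varphi_i$'s effect is read through $\Spec_{H^*}$. I will state this explicitly as the working assumption, namely that effects factor through the conditioned admissible set and the shared non-commitment events. Under it, each inductive step preserves equality of conditioned admissible sets.

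For (e), resolved outcomes coincide by (d), so $\mathsf{obs}$ yields the same instances and the same $K$-annotations; hence supports coincide. For the filtration, I will map each $\equiv_k$ class bijectively. The seal and $\varphi$ occupy the same layer position: the seal is placed in $\varphi$'s layer, and $f_\varphi$ is an entailment, so it adds no layer. Since the seal is non-filtering by (b), it commutes with the other commitments in that layer. Level-$k$ agreement is therefore preserved in both directions, and every $\mathcal{F}_k$ is unchanged.

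Minimality of the resolving determinations also needs a check. Removing the seal yields an unresolved history exactly when removing $\varphi$ does, because the seal together with $f_\varphi$ is the only carrier of $\varphi$'s exclusions. Hence $\mathcal{D}$ is preserved as a set.
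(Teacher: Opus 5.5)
Your (a)--(d) follow the paper's proof step for step, and your (e) uses a different, better argument for the filtration. Elsewhere the differences are in rigor, not strategy.

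In (d) the paper simply asserts that $\varphi_2$'s effect ``depends only on the admissible set at its point of application''. That conflicts with the history-indexed definition of commitments, and you rightly state it as an explicit hypothesis.

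In (e) the paper infers that the filtration is preserved because supports are. But $\mathcal{F}_k$ is generated by the $\equiv_k$ classes (histories through layer $k$), not by tuple supports. Support preservation alone would not rule out the seal ending up in a layer of its own and shifting qdepths. Your layer-position argument is what is actually needed. Your minimality check also covers a condition for membership in $\mathcal{D}$ that the paper never verifies.

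Three things in your version should be made explicit.
\begin{enumerate}
\item Layers are canonical commuting stages, so ``the seal is placed in $\varphi$'s layer'' has to be derived. If $f_\varphi$ is recorded as a non-commitment event right after the seal, the layer-sequencing definition cuts the seal off from its later layer-mates. So $f_\varphi$ must either be kept out of the layering or deferred to the end of the layer. Deferring is harmless, since $f_\varphi$ reads only the frozen $S\cap E(H)$.
\item Commuting the seal with a layer-mate $\psi$ needs the seal to be non-filtering at $H\cdot\psi$, not only at $H$ as (b) states. The same stabilization argument gives this.
\item Because the seal is non-filtering, deleting it leaves $\Spec_{H^*}$ unchanged unless $f_\varphi$ is licensed only by the seal. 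Your minimality claim rests on that licensing assumption. Without it, the canonicalized determinations would not be minimal and would drop out of $\mathcal{D}$.
\end{enumerate}
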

\begin{proof}
  (a)~Once $S$ is declared complete, that declaration holds at all
  extensions (the seal event cannot be retracted).
  Any outcome excluded by sealing remains excluded: heredity holds.

  (b)~Since $H \sqsubseteq H^*$ and $S$ has stabilized within $H^*$,
  no further $S$-type events will arrive.
  The seal declares complete a set that is already complete; it excludes
  no outcome that was previously admissible.
  Hence $\Spec_{H^*}(H \cdot \varphi_{\seal(S)}) = \Spec_{H^*}(H)$.

  (c)~Non-heredity of $\varphi$ arose because its exclusion set
  depended on which events in $S$ had occurred.
  After sealing, $S \cap E(H)$ is fixed; $\varphi$'s effect is a
  deterministic function of this fixed set.
  The entailment $f_\varphi$ computes exactly this function, producing
  the same exclusion set as $\varphi$.
  Combined with (b): $\Spec_{H^*}(H) = \Spec_{H^*}(H \cdot \varphi_{\seal(S)})$,
  so $f_\varphi$ applied to $\Spec_{H^*}(H)$ gives $\Spec_{H^*}(H \cdot \varphi)$.

  (d)~By (c), $\varphi_2$ sees the same admissible set
  $\Spec_{H^*}(H \cdot \varphi)$ in both cases.
  Since $\varphi_2$'s effect depends only on the admissible set at its
  point of application, it behaves identically.
  By induction on the remaining commitments, the entire tail is unchanged.

  (e)~By (d), each resolving determination $D$ in the original basis
  maps to a determination in the canonicalized basis that produces the
  same admissible set at every point.
  The resolved outcome is identical; hence supports
  $\mathrm{supp}(P(t))$ are identical for every tuple $t$.
  Since supports are preserved, the filtration (defined over supports)
  is preserved.
  Since the resolved outcome under each $D$ is the same fixed instance,
  classical semiring provenance over that instance yields the same
  $K$-valued annotations.
\end{proof}

\noindent
The key insight is that in the retrospective setting, the seal is a
non-filtering guard (the dependency set has already stabilized, so no
outcome is excluded at the point of application),
and the entailment is not a commitment (it excludes nothing beyond what
$\varphi$ would have excluded).
The replacement is therefore invisible to the rest of the
determination: subsequent commitments see the same admissible set,
commutativity relations are unchanged, and the layer structure is
preserved.
Heredity is without loss of generality for the retrospective
analysis that determination provenance performs.

\section{Systems Directions}
\label{app:systems-lineage}
\label{app:certificates}

Algebraic data provenance~\cite{green2007provenance} offers powerful
compositional analysis---robustness, counterfactuals, quantitative
attribution---but applies only after semantics is fixed: a single
database, a single execution, a single model.
Systems observability (distributed
tracing~\cite{sigelman2010dapper}, workflow
provenance~\cite{seltzer2005provenance}) is broadly deployed but
expressively weak: it records which events occurred, not how they
combined or whether the outcome would survive a different execution.
Determination provenance bridges the two. Here we sketch some
opportunities and open problems.

\paragraph{From traces to algebraic provenance.}
Standard systems traces are histories in our sense: partially ordered
event records.
The only additional modeling step is to define an outcome space $O$ and
refinement order $\Ord$ (e.g., consistent serializations ordered by
sub-trace inclusion).
The commitment events---those that narrow the admissible set---are then
determined in principle, though identifying them efficiently requires
domain structure (conflict graphs, dependency annotations) rather than
brute-force comparison of $\Spec(H)$ before and after each event;
developing general-purpose extraction algorithms is an open problem.
The layer structure follows from the specification and the identified
commitments.

Once the determination is extracted, the full algebraic machinery
applies to the trace: supports give robustness and fragility;
the filtration gives depth bounds and layer-specific diagnosis;
responsibility attributes root causes to individual commitments;
cross-specification comparison answers ``what if'' questions about
alternative policies.
The framework applies whenever provenance queries can be expressed in
positive RA over the observation function---including semi-structured
settings (select, filter, join over trace spans or log entries).
The determination also compresses: rather than retaining every
intermediate state, the system stores only the commitment sequence and
reconstructs derivational detail on demand.

As a concrete example: in distributed
tracing~\cite{sigelman2010dapper}, unsynchronized clocks leave the
relative ordering of concurrent spans ambiguous.
A determination commits to clock alignments between hosts, fixing a
single timeline.
The support of a critical-path predicate then answers questions that
current tracing cannot: is the database \emph{robustly} on the
critical path (full support), or only under certain clock assumptions
(partial support)?
Or one could ask a counterfactual: if the clock alignment had gone
differently (a different determination in the same $\mathcal{D}$),
would the ad server rather than the database have been the
bottleneck?
The support $\mathrm{supp}(P(\mathrm{cp}=A))$ answers this directly.

\paragraph{Quantitative fragility and scheduler design.}
The support ratio $|\mathrm{supp}(P(t))| / |\mathcal{D}|$ measures
outcome fragility structurally---from the commitment basis and conflict
graph rather than from repeated execution.
Computing it exactly requires enumerating $\mathcal{D}$ (exponential in
general), but compact representations (Boolean formulas at depth~$1$;
factorized supports at higher depths) and the Shapley-value
approximation of Appendix~\ref{app:responsibility-details} make
tractable estimates feasible for bounded-treewidth conflict structures.
Given a determination structure extracted from prior runs---and
assuming the workload's conflict structure is stable across
runs---a scheduler could minimize maximum responsibility for an
SLA predicate by avoiding the high-responsibility commitments
(e.g., preferring orderings that keep fragile tuples robust).
This is a new optimization objective: optimize across the space of
determinations rather than within a single execution.

\paragraph{Parsimony: truncation and certificates.}
A full determination may record far more than any particular query
needs.
Two orthogonal dimensions of compression apply.

\emph{Vertical parsimony (truncation).}
The filtration tells us which layers matter: if a query family depends
only on tuples with $\mathrm{qdepth} \le k$, layers
$k{+}1, \ldots, d$ are irrelevant and can be discarded from the
stored determination.
For a depth-$d$ determination with $n$ commitments per layer, this
reduces the stored commitment sequence from $O(nd)$ to $O(nk)$.

\emph{Horizontal parsimony (certificates).}
Within a retained layer, the full commitment may carry more detail than
the query requires.
We envision a \emph{certificate} for commitment $\varphi$: a predicate $C$ that
over-approximates $\varphi$'s effect.
It must satisfy
$\Spec(H \cdot \varphi) \subseteq \{o \mid C(o)\}$
(soundness---every outcome that $\varphi$ retains is also retained by
$C$), but may admit additional outcomes that $\varphi$ would have
excluded.
A certificate is \emph{sufficient} for a query family $\mathcal{Q}$ if
replacing $\varphi$ with $C$ yields the same query answers.
The space savings come from the gap between the full commitment (which
may encode an entire conflict graph or the complete state of a
stratum) and the minimal certificate (which need only record the
aspects relevant to $\mathcal{Q}$---e.g., the ordering predicates in a
tuple's conflict neighborhood rather than the full serialization order).
Richer queries require stronger certificates:
derivational provenance needs only local conflict information, while
cross-specification queries need the full conflict-graph structure.

Together, truncation and certificates define a parameterized design
space for parsimonious determination storage; fixing the
parameters---choosing the query family, computing minimal certificates,
implementing adaptive retention---is the systems work that remains.

\section{Depth Reduction: Bounding Determination Complexity}
\label{app:depth-reduction}

The filtration measures determination complexity: a specification with
depth $d$ requires $d$ sequential, non-commuting commitment decisions
to resolve.
High depth means long chains of dependent decisions --- each contingent
on the previous --- and correspondingly high resolution cost.
A natural question: given a specification with depth $d$, how can we
reduce it to a target depth $d^* < d$?

We identify three mechanisms for depth reduction, each corresponding to
a different structural change to the specification or its commitment
basis.

\begin{definition}[Depth reduction mechanisms]
  \label{def:depth-reduction}
  Let $\Spec$ have depth $d$ under commitment basis $\Phi$.
  A \emph{depth reduction} to $d^* < d$ is achieved by one of:
  \begin{enumerate}[nosep,label=(\roman*)]
    \item \emph{Coarsening} $\Ord$: enlarge the outcome order to
          $\Ord' \supseteq \Ord$ so that outcomes distinguished by
          some layer~$k$ become $\Ord'$-comparable.
          Layer~$k$'s commitments become unnecessary (nothing to choose
          among) and the layer is eliminated.
    \item \emph{Commutation}: impose additional structure so that
          commitments in adjacent layers $k$ and $k{+}1$ commute.
          The two layers merge into one, reducing depth by one.
    \item \emph{Entailment}: replace a layer-$k$ commitment with a
          deterministic function of the discharged prefix (an
          entailment).
          The decision is no longer discretionary; the layer is
          eliminated.
  \end{enumerate}
\end{definition}

\noindent
Each mechanism has a cost:
(i)~weakens the specification (more outcomes are considered equivalent);
(ii)~restricts the commitment basis (decisions lose independence);
(iii)~removes discretion (the system commits to a policy).
The filtration diagnoses \emph{where} depth arises and \emph{which}
mechanism applies at each layer.

\begin{example}[Transactional depth reduction]
  \label{ex:depth-reduction-txn}
  The $T_\infty$ construction (Theorem~\ref{thm:txn-depth}) has depth
  $\Theta(n)$: each round's victim decision depends on the previous.
  The three mechanisms yield different bounded-depth protocols:
  \begin{itemize}[nosep]
    \item \emph{Coarsening} (SER $\to$ SI): the rw-rw cycle that
          forced victim selection is no longer forbidden under SI.
          The layer that broke the cycle is eliminated; depth drops for
          tuples involved in write-skew patterns.
    \item \emph{Commutation} (commutative basis rewriting): replace
          non-commutative commitments with globally commutative
          variants.
          For example, replace cycle-based victim selection
          ($\varphi_{\mathsf{abort}(T_v)}$, whose effect depends on
          which other aborts have already occurred---non-commutative)
          with per-transaction age-based abort predicates
          ($\varphi_{\mathsf{abort\text{-}if\text{-}old}(T_i)}$: abort
          $T_i$ if its age exceeds a threshold).
          The age predicates commute unconditionally (each depends only
          on its own transaction's metadata), so all abort decisions
          collapse into a single layer.
          The cost: the rewritten basis may abort transactions that
          precise victim selection would have spared---a semantic
          change, but one that buys global commutativity and reduces
          depth from $\Theta(n)$ to~$1$.
    \item \emph{Entailment} (reactive protocol): process each event
          deterministically given the current state (e.g., OCC:
          validate on commit, abort if cycle exists; MVTO: assign
          timestamp on arrival, abort on violation).
          No system choice arises; victim selection is entailed by the
          protocol and the arriving request.
          Depth drops to~$0$.
  \end{itemize}
\end{example}

\begin{example}[$\mbox{Datalog}^\neg$ depth reduction]
  \label{ex:depth-reduction-datalog}
  A program with $k$ stratification layers and $d$ nested negative
  SCCs has depth $k{+}d$ (Theorem~\ref{thm:negation-filtration}).
  \begin{itemize}[nosep]
    \item \emph{Coarsening} (stable $\to$ well-founded): enlarge
          $\Ord$ so that $\mathbf{u}$ refines to both $\mathbf{t}$ and
          $\mathbf{f}$.
          The choice layers become unnecessary (all outcomes are
          $\Ord'$-comparable); depth drops from $k{+}d$ to $k$.
    \item \emph{Commutation} (conservative choice): dependent
          SCC-resolution choices do not commute ($C_j$'s choice reads
          $C_i$'s outcome).
          Replace the precise choice for $C_j$ (minimal model of the
          GL reduct given $C_i$'s resolution) with a conservative one:
          include every atom in $C_j$ that is derivable under
          \emph{any} resolution of $C_i$.
          This choice no longer depends on $C_i$'s outcome, so the two
          layers commute and merge.
          The cost: the resulting model may not be minimal---it
          over-derives, including atoms that a precise reduct would
          have excluded.
          This parallels the transactional case (age-based abort
          over-aborts); both trade precision for commutativity.
          We present this merely as an illustration; whether such
          conservative models correspond to a natural
          semantics (perhaps supported models or some relaxation of
          stability) is unclear.
    \item \emph{Entailment} (deterministic choice): impose a policy
          that selects a unique stable extension for each negative SCC
          (e.g., order atoms alphabetically and assign $\mathbf{t}$ to
          the lowest in each cycle).
          The choice layer becomes deterministic --- an entailment, not
          a commitment.
          Each such SCC reduces $d$ by one.
  \end{itemize}
\end{example}

\section{Open Questions}
\label{app:open-questions}

Several structural questions about determinations remain open.

\paragraph{Algebraic structure of the determination space.}
Three related questions concern the internal algebra of determinations.
\emph{Equivalence:}
In classical trace theory~\cite{mazurkiewicz1977concurrent}, the Foata
normal form canonically represents a word in a partially commutative
monoid as a sequence of maximal independent layers.
The layered structure of a determination is the analog---but commutation
is \emph{dynamic}: whether $\varphi$ and $\psi$ commute depends on the
current admissible set, which changes as earlier layers are discharged.
Characterizing when two determinations are equivalent (produce the same
resolved specification) would give determinations a precise algebraic
identity beyond their definition.
\emph{Composition:}
When two determinations share a common prefix, they share initial
layers.
A formal treatment of determination morphisms---how determinations
relate under prefix extension, restriction, and substitution---would
enable compositional reasoning across sub-specifications.
\emph{Factorized representation:}
Within-layer commutativity gives $\mathcal{D}$ the structure of a trie
(each layer branches on the choice of commitment set; a determination
is a root-to-leaf path), yielding a factorized
representation~\cite{olteanu2015factorized} over the determination
space.
A level-$k$ support is a union of complete subtrees rooted at
depth~$k$, specified by its depth-$k$ prefixes without enumerating
exponentially many leaves.
This trie plays a role analogous to $\mathbb{N}[X]$ in classical
provenance: one computes the support trie once and reads off robustness,
qdepth, responsibility, or cross-spec comparison as different views.
But the analogy is imperfect: $\mathbb{N}[X]$'s universality is
algebraic (all specializations are semiring homomorphisms), whereas
our specializations include combinatorial operations (Shapley values)
and metric ones ($|\mathrm{supp}|/|\mathcal{D}|$) that are not
semiring maps.
Whether there is a formal category in which the support trie is
universal---and what efficient algorithms the factorized structure
enables---remain open.

\paragraph{Reversible commitments.}
In settings with rollback (speculative execution, backtracking search,
optimistic replication with conflict resolution), commitments can be
undone.
Extending the framework to reversible commitments---replacing monoids
with groups at each layer, in the spirit of incremental computation
via group-structured changes~\cite{budiu2023dbsp}---is open.

\paragraph{Depth under history extension.}
Under a hereditary basis
(Appendix~\ref{app:persistence-canonicalization}), the admissible set
can only shrink as history grows:
$\Spec(H') \subseteq \Spec(H)$ for all $H \sqsubseteq H'$.
One would expect that determination depth is also non-increasing---as
more events arrive, fewer commitments should be needed to resolve the
remaining ambiguity.
\emph{Conjecture:} under a hereditary basis, determination depth is
monotonically non-increasing under history extension.

\paragraph{Joint attribution.}
Classical provenance asks ``which input tuples are most responsible for
an output?''---a game where base tuples are players and the Shapley
value measures each tuple's marginal contribution to the
output~\cite{livshits2021shapley}.
Determination responsibility (Appendix~\ref{app:responsibility-details})
asks a different question: ``which commitments are most responsible for
this output holding under this resolution?''---a game where commitments
are players.
These two games are currently sequential: first attribute the
resolution (determination responsibility), then attribute the
derivation within that resolution (tuple-level responsibility).
Formalizing their composition into a single joint game---with both
commitments and tuples as players, and a unified Shapley value
measuring each player's contribution to the final output---would
unify the two dimensions of provenance into a single attribution.
The challenge is that the two player sets interact: a commitment can
make a tuple relevant (by including it in the resolved model), and a
tuple can make a commitment relevant (by participating in a conflict
that forces the commitment).

\paragraph{Program-bounded vs.\ input-recurrent depth.}
The $\Theta(n)$ transactional depth arises from recurring structure:
the same seal-plus-victim pair repeated as new conflicts arrive.
In $\mbox{Datalog}^\neg$, depth is bounded by the program's dependency
structure regardless of EDB size.
Characterizing when a specification's depth is bounded by its static
structure vs.\ when it grows with the input stream --- and whether
recurring depth admits uniform depth certification (a single check
covering all recurrences) --- is open.

\paragraph{Quantitative negation beyond zero-divisor-free semirings.}
Theorem~\ref{thm:monus-elimination} shows that sealing and monus agree
on supports under zero-divisor-free, $\omega$-continuous semirings.
In semirings that violate zero-divisor-freeness (e.g., bounded
lattice semirings $(\{0,\ldots,k\}, \max, \min, 0, k)$), monus can
assign intermediate values $1_K \mathbin{\dot{-}} v \notin \{0_K, 1_K\}$
to negated literals, encoding a quantitative residual---how much of a
derivation was ``overcome'' by the negation.
Sealing discards this information.
Whether the determination framework can be extended to capture such
quantitative negation---perhaps via graded commitments that record
residual strength rather than binary presence/absence---remains open.

\section*{Acknowledgments}
Generative AI tools (Claude and ChatGPT) were used as interactive
writing and reviewing assistants during the preparation of this work,
including drafting and revising text, checking mathematical arguments,
and suggesting edits.
All content was reviewed, validated, and approved by the authors.

\bibliographystyle{ACM-Reference-Format}
\bibliography{references}

\end{document}